\newcommand{\bbmat}{\begin{bmatrix} }
\newcommand{\ebmat}{\end{bmatrix} }
\newcommand{\Ber}{{\rm Ber}}
\newcommand{\eqdef}{\stackrel{\triangle}{=}}
\newcommand{\vari}[1]{{\mathbf{#1}}}
\newtheorem{definition}{Definition}
\newtheorem{theorem}{Theorem}
\newtheorem{lemma}{Lemma}
\newtheorem{corollary}{Corollary}
\newtheorem{remark}{Remark}
\begin{document}

\title{Wyner-Ziv Coding over Broadcast Channels:\\ Digital Schemes}

\author{
Jayanth Nayak, Ertem Tuncel, Deniz G\"und\"uz\thanks{J.~Nayak was with the University of California, Riverside, CA. He is now with Mayachitra, Inc., Santa Barbara, CA. E-mail: nayak@mayachitra.com.}
\thanks{E.~Tuncel is with the University of California, Riverside, CA. E-mail: ertem@ee.ucr.edu.}
\thanks{D.~G\"und\"uz was with Princeton University and the Stanford University. He is now with the Centre Tecnol\'{o}gic de  Telecomunicacions de Catalunya (CTTC), 08860, Castelldefels, Barcelona, Spain. E-mail: deniz.gunduz@cttc.es}
\thanks{This work was presented in part at the Information Theory Workshop (ITW) 2008, Porto, Portugal.}}

\maketitle

\begin{abstract}
This paper addresses lossy transmission of a common source over a broadcast channel when there is correlated side information at the receivers, with emphasis on the quadratic Gaussian and binary Hamming cases. 
A digital scheme that combines ideas from the lossless version of the problem, i.e., Slepian-Wolf coding over broadcast channels, and dirty paper coding, is presented and analyzed. 
This scheme uses layered coding where the common layer information is intended for both receivers and the refinement information is destined only for one receiver.
For the quadratic Gaussian case, a quantity characterizing the overall quality of each receiver is identified in terms of channel and side information parameters. 
It is shown that it is more advantageous to send the refinement information to the receiver with ``better'' overall quality.
In the case where all receivers have the same overall quality, the presented scheme becomes optimal.
Unlike its lossless counterpart, however, the problem eludes a complete characterization.

%
\end{abstract}

\baselineskip .75cm

\section{Introduction}
\label{sec:Introduction}

Consider a sensor network of $K+1$ nodes taking periodic measurements of a common phenomenon. We study the communication scenario in which one of the sensors is required to transmit its measurements to the other $K$ nodes over a broadcast channel.
The receiver nodes are themselves equipped with side information unavailable to the sender, e.g., measurements correlated with the sender's data. This scenario, which is depicted in Figure~\ref{figr:Main}, can be of interest either by itself or as part of a larger scheme where all nodes are required to broadcast their measurements to all the other nodes.
Finding the capacity of a broadcast channel is a longstanding open problem, and thus, limitations of using separate source and channel codes in this scenario may never be fully understood. In contrast, a very simple joint source-channel coding strategy is optimal for the special case of {\em lossless} coding~\cite{Ertem}.
More specifically, it was shown in~\cite{Ertem} that in Slepian-Wolf coding over broadcast channels (SWBC), as the lossless case was referred to, for a given source $X$, side information $Y_1,\ldots,Y_K$, and a broadcast channel $p_{V_1\ldots V_K|U}$, lossless transmission (in the Shannon sense) is possible with $\kappa$ channel uses per source symbol if and only if there exists a channel input distribution $U$ such that
\begin{equation}
\label{eqtn:SWBC}
H(X|Y_k) \leq \kappa I(U;V_k)
\end{equation}
for $k=1,\ldots,K$. In the optimal coding strategy, every typical source word $X^n(i)$ is randomly mapped to a channel codeword $U^m(i)$, where $n$ and $m$ are so that $\kappa=\frac{m}{n}$. If~\eqref{eqtn:SWBC} is satisfied, there exists a channel codebook such that with high probability, there is a unique index $i$ for which $X^n(i)$ is jointly typical with the side information $Y^n_k$ and $U^m(i)$ is jointly typical with the channel output $V^m_k$ simultaneously, at any receiver $k$. This result exhibits some striking features which are worth repeating here.
\begin{enumerate}
\item[(i)] The optimal coding scheme is not separable in the classical sense, but consists of {\em separate components} that perform source and channel coding in a broader sense. This results in the separation of source and channel variables as in (\ref{eqtn:SWBC}).
\item[(ii)] If the broadcast channel is such that the same input distribution achieves capacity for all individual channels, then (\ref{eqtn:SWBC}) implies that one can utilize all channels at {\em full capacity}. Binary symmetric channels and Gaussian channels are the widely known examples of this phenomenon.
\item[(iii)] The optimal coding scheme does not explicitly involve {\em binning}, which is commonly used in network information theory. Instead, with the simple coding strategy of~\cite{Ertem}, each channel can be thought of as performing its own binning.
More specifically, the channel output $V_k^m$ at each receiver can be viewed as corresponding to a {\em virtual} bin\footnote{The bins can also be viewed as exponentially sized lists and a similar strategy that interprets the decoding as the intersection of exponentially sized lists was derived independently in~\cite{Laneman} and~\cite{Ertem}. Another alternative binning-based coding scheme that achieves the same performance using block Markov encoding and backward decoding can be found in~\cite{DenizITW2007}.} containing all source words $X^n(i)$ that map to channel codewords $U^m(i)$ jointly typical with $V_k^m$.
In general, the virtual bins can overlap and correct decoding is guaranteed by the size of the bins, which is about $2^{n[I(X;Y_k)-\epsilon]}$.
\end{enumerate}

\begin{figure}
\centering
\includegraphics[width=6in]{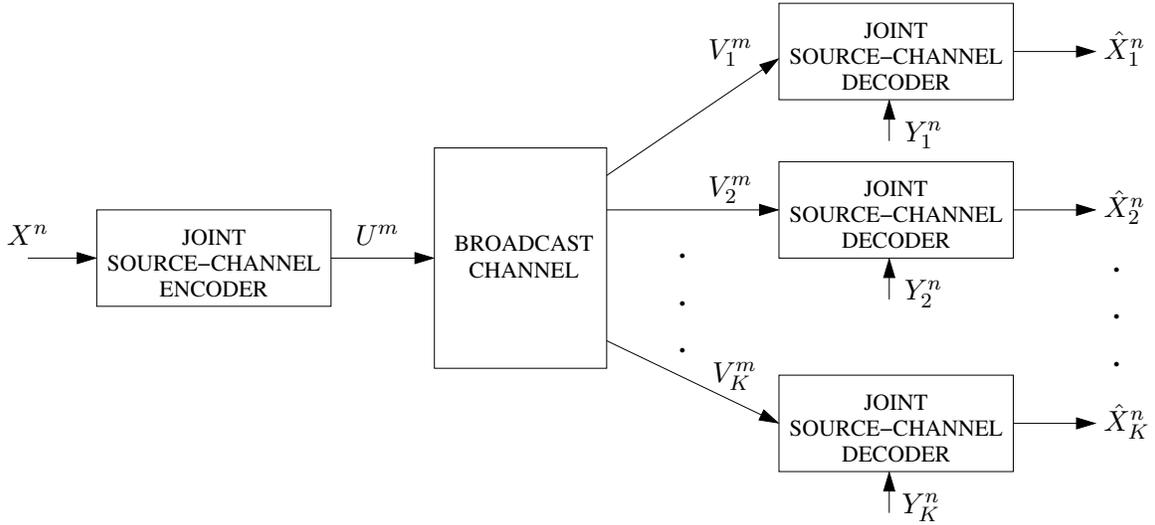}
\caption{Block diagram for Wyner-Ziv coding over broadcast channels.}\label{figr:Main}
\end{figure}

In this paper, we consider the general lossy coding problem in which the reconstruction of the source at the receivers need not be perfect. We shall refer to this problem setup as Wyner-Ziv coding over broadcast channels (WZBC). We present a coding scheme for this scenario and analyze its performance in the quadratic Gaussian and binary Hamming cases. This scheme uses ideas from SWBC~\cite{Ertem} and dirty paper coding (DPC)~\cite{Costa,Gelfand} as a starting point. The SWBC scheme is modified a) to allow quantization of the source, and b) to handle channel state information (CSI) at the encoder by using DPC. The modification with DPC is then employed in a layered transmission scheme with $K=2$ receivers, where there is common layer (CL) information destined for both receivers and refinement layer (RL) information meant for only one of the receivers. The channel codewords corresponding to the two layers are superposed and the resultant interference is mitigated using DPC. We shall briefly discuss other possible layered schemes obtained by varying the encoding and the decoding orders of the two layers and using successive coding or DPC to counteract the interference, although for the bandwidth matched Gaussian and binary Hamming cases, we observe that these variants perform worse.

DPC is used in this work in a manner quite different from the way it was used in~\cite{Caire}, which concentrated on sending private information to each receiver in a broadcast channel setting, where the information that forms the CSI and the information that is dirty paper coded are meant for different receivers. Therefore, although the DPC auxiliary codewords are decoded at one of the receivers, unlike in our scheme, this is of no use to that receiver. For our problem, this difference leads to an additional interplay in the choice of channel random variables. The DPC techniques in this work are most similar to those in~\cite{SutivongCover, WilsonSub}, where, as in our scheme, the CSI carries information about the source and hence decoding the DPC auxiliary codeword helps improve the performance. However, our results indicate a unique feature of DPC in the framework of WZBC. In particular, in our layered scheme, the optimal Costa parameter for the quadratic Gaussian problem turns out to be either 0 or 1. When it is 0, there is effectively no DPC, and when it is 1, the auxiliary codeword is identical to the channel input corrupted by the CSI. To the best of our knowledge, although the latter choice is optimal for binary symmetric channels, it has never been shown to be optimal for a Gaussian channel in a scenario considered before.

When an appropriately defined ``combined'' channel and side information quality is constant at each receiver, the new scheme is shown to be optimal in the quadratic Gaussian case. We also derive conditions for the same phenomenon to occur in the binary Hamming case, although the expressions are not as elegant as in the quadratic Gaussian problem. Unlike in~\cite{Ertem}, however, the scheme that we derive is not always optimal. A simple alternative approach is to separate the source and channel coding. Both Gaussian and binary symmetric broadcast channels are degraded. Hence their capacity regions are known~\cite{CoverThomas} and further, there is no loss of optimality in confining ourselves to two layer source coding schemes. The corresponding source and side information pairs are also degraded. Although a full characterization of the rate-distortion performance is available for the quadratic Gaussian case~\cite{TianDiggavi}, only a partial characterization is available for the binary Hamming problem~\cite{SteinbergMerhav, TianDiggavi}. In any case, we obtain an achievable distortion tradeoff of separate source and channel coding by combining the known rate-distortion results with the capacity results. For the quadratic Gaussian problem, we show that our scheme always performs at least as well as separate coding. The same phenomenon is numerically observed for the binary Hamming case.

For the two examples we consider, a second alternative is uncoded transmission if there is no bandwidth expansion or compression. This scheme is optimal in the absence of side information at the receivers in both the quadratic Gaussian and binary Hamming cases. However, in the presence of side information, the optimality may break down. We show that, depending on the quality of the side information, our scheme can indeed outperform uncoded transmission as well. In particular, if the combined quality criterion chooses the worse channel as the refinement receiver (because it has much better side information), then our layered scheme outperforms uncoded transmission for the quadratic Gaussian problem.

The paper is organized as follows. In Section~\ref{sect:Prelims}, we formally define the problem and present relevant past work. Our main results are presented in Section~\ref{sect:Basic} and Section~\ref{sect:Layered}, namely the extensions of the scheme in~\cite{Ertem} that we develop for the lossy scenario.
We then analyze a layered scheme in particular for the quadratic Gaussian and binary Hamming cases in Sections~\ref{sctn:QuadraticGaussian} and~\ref{sctn:BinaryHamming}, respectively. For these cases, we compare the derived schemes with separate source and channel coding, and with uncoded transmission. Section~\ref{sctn:Conclusions} concludes the paper by summarizing the results and pointing to future work.

\section{Background and Notation}
\label{sect:Prelims}

Let $(X, Y_1, \ldots, Y_K) \in \mathcal{X}\times\mathcal{Y}_1\times\cdots\times\mathcal{Y}_K$ be random variables denoting a source with independent and identically distributed (i.i.d.) realizations. Source $X$ is to be transmitted over a memoryless broadcast channel defined by $p_{V_1\cdots V_K|U}(v_1,\ldots,v_K|u),$\ $\; u\in\mathcal{U}, v_k\in\mathcal{V}_k, k=1,\ldots K$. Decoder $k$ has access to side information $Y_k$ in addition to the channel output $V_k$.
Let single-letter distortion measures $d_k:\mathcal{X}\times \hat{\mathcal{X}}_k\to [0, \infty)$ be defined at each receiver, i.e.,
\[
d_k(x^n,\hat{x}_k^n) = \frac{1}{n} \sum_{j=1}^n d_k(x_j, \hat{x}_{kj})
\]
for $k=1,\ldots,K$.

\begin{definition}
An $(m,n,\phi, \psi_1,\ldots, \psi_K)$ code consists of an encoder
\[
\phi:\mathcal{X}^n\to \mathcal{U}^m
\]
and decoders at each receiver
\[
\psi_k:\mathcal{V}_k^m \times \mathcal{Y}_k^n \to \hat{\mathcal{X}}_k^n \; .
\]
The rate of the code is $\kappa=\frac{m}{n}$ channel uses per source symbol.
\end{definition}

\begin{definition}
A distortion tuple $(D_1, \ldots,D_K)$ is said to be achievable at a rational rate $\kappa$  if for every $\epsilon>0$, there exists $n_0$ such that for all integers $m>0,n>n_0$ with $\frac{m}{n} = \kappa$, there exists an $(m, n, \phi, \psi_1, \ldots,\psi_K)$ code satisfying
\[
\frac{1}{n} \mbox{E}\left[d_k(X^n,\hat{X}_k^n)\right] \leq D_k + \epsilon
\]
where $\hat{X}_{k}^n = \psi_k( V_k^m, Y_k^n)$ and $V_k^m$ denotes the channel output corresponding to $\phi(X^n)$.
\end{definition}

In this paper, we present some general WZBC techniques and derive the corresponding achievable distortion regions. We study the performance of these techniques for the following cases.
\begin{itemize}
\item {\em Quadratic Gaussian}:
All source and channel variables are real-valued, and we use the notation $\vari{A}$ to denote the variance of any Gaussian random variable $A$.
The source and side information are jointly Gaussian and the channels are additive white Gaussian, i.e., $V_k = U + W_k$ where $W_k$ is Gaussian and $W_k$ is independent of $U$.
There is an input power constraint on the channel:
\[
\frac{1}{m} \sum_{j=1}^m \mbox{E}[ (U_j)^2 ] \leq P
\]
where $U^m = \phi(X^n)$.  Without loss of generality, we assume that $\vari{X}=\vari{Y}_1=\cdots=\vari{Y}_K=1$ and $Y_k=\rho_k X + N_k$ with $N_k \perp X$ and $\rho_k>0$.
Thus, $\vari{N}_k = 1-\rho_k^2$, denotes the mean squared-error in estimating $X$ from $Y_k$, or equivalently, $Y_k$ from $X$ since $\vari{X}=\vari{Y}_k$. Reconstruction quality is measured by squared-error distance: $d_k(x,\hat{x}_k)= (x-\hat{x}_k)^2$.
\item {\em Binary Hamming}: All source and channel alphabets are binary. The source is $\Ber(\tfrac{1}{2})$, where $\Ber(\epsilon)$ denotes the Bernoulli distribution with $P[1]=\epsilon$. The channels are binary symmetric with transition probabilities $p_k$, i.e., $V_k = U_k \oplus W_k$ where $W_k \sim \Ber(p_k)$ and $W_k$ and $U_k$ are independent with $\oplus$ denoting modulo 2 addition (or the XOR operation). The side information sequences at the receivers are also noisy versions of the source corrupted by passage through virtual binary symmetric channels; that is, $Y_k = X_k \oplus N_k$ with $N_k \sim \Ber(\beta_k)$ and $N_k$ and $X_k$ are independent. Reconstruction quality is measured by Hamming distance: $d_k(x,\hat{x}_k) = x\oplus\hat{x}_k$.

\end{itemize}
The problems considered in~\cite{ Kramer, Reznic, Ertem} can all be seen as special cases of the WZBC problem. However, the quadratic Gaussian and the binary Hamming cases with non-trivial side information have never, to our knowledge, been analyzed before. Nevertheless, separate source and channel coding and uncoded transmission are obvious strategies. We shall evaluate the performance of these alternative strategies and present numerical comparisons with our proposed scheme.

\subsection{Wyner-Ziv Coding over Point-to-Point Channels}
\label{subs:P2P}

Before analyzing the WZBC problem in depth, we shall briefly discuss known results for Wyner-Ziv coding over a point-to-point channel, i.e., the case $K=1$. Since $K=1$, we shall drop the subscripts that relate to the receiver.
The Wyner-Ziv rate-distortion performance is characterized in~\cite{WZiv} as
\begin{equation}
\label{eqtn:WZDistortionRate}
D^{WZ}(R) \eqdef \min_{ \begin{array}{c} Z,g: Y-X-Z \\ I(X;Z|Y)\leq R\end{array}} \mbox{E} \left[d(X,g(Z,Y))\right] \; .
\end{equation}
where $Z\in{\cal Z}$ is an auxiliary random variable, and the capacity of the channel $p_{V|U}$ is well-known (cf.~\cite{CoverThomas}) to be
\[
C = \max_U I(U;V) \; .
\]
It is then straightforward to conclude that combining separate source and channel codes yields the distortion
\begin{equation}
\label{eqtn:WZDistortion}
D = D^{WZ}(\kappa C).
\end{equation}
On the other hand, a converse result in~\cite{ShamaiVerdu} shows that even by using joint source-channel codes, one cannot improve the distortion performance further than (\ref{eqtn:WZDistortion}).

We are further interested in the evaluation of $D^{WZ}(R)$, as well as in the test channels achieving it, for the quadratic Gaussian and binary Hamming cases. We will use similar test channels in our WZBC schemes.

\subsubsection{Quadratic Gaussian}
It was shown in~\cite{Wyner} that the optimal backward test channel is given by
\[
X = Z + S
\]
where $Z$ and $S$ are independent Gaussians.
For the rate we have~\footnote{All logarithms are base 2.}
\begin{equation}
\label{eqtn:WZR_Gaussian}
R \geq I(X;Z|Y) = \frac{1}{2}\log \left( 1-\vari{N} + \frac{\vari{N}}{\vari{S}}\right) \; .
\end{equation}
The optimal reconstruction is a linear estimate $g(Z,Y) = \frac{\vari{Z}\vari{N}}{\vari{Z}(1-\rho^2\vari{Z})} Z + \frac{\rho\vari{Z}(1-\vari{Z})}{\vari{Z}(1-\rho^2\vari{Z})} Y$, which yields the distortion
\begin{equation}
\label{eqtn:WZD_Gaussian}
\mbox{E}[d(X,g(Z,Y))] = \frac{\vari{N}}{1-\vari{N} + \frac{\vari{N}}{\vari{S}}}
\end{equation}
and therefore,
\begin{equation}
\label{eqtn:WZDR_Gaussian}
D^{WZ}(R) = \vari{N} 2^{-2R} \; .
\end{equation}

\subsubsection{Binary Hamming}
It was implicitly shown in~\cite{WZiv} that the optimal auxiliary random variable $Z\in{\cal Z}=\{0,1,\lambda\}$ is given by
\[
Z = E \circ (X\oplus S)
\]
where $X, E, S$ are all independent, $E$ and $S$ are Ber($q$) and Ber($\alpha$) with $0\leq q\leq 1$ and $0\leq\alpha\leq \frac{1}{2}$, respectively, and $\circ$ is an erasure operator, i.e.,
\[
a \circ b = \begin{cases}
\lambda & a=0 \\
b & a=1
\end{cases} \; .
\]
This choice results in
\begin{equation}
\label{eqtn:WZR_Binary}
I(X;Z|Y) = q r(\alpha,\beta)
\end{equation}
where
\[
r(\alpha,\beta) = H_2(\alpha \star \beta) - H_2(\alpha)
\]
with $\star$ denoting the binary convolution, i.e.,  $a\star b = (1-a)b+a(1-b)$, and $H_2$ denoting the binary entropy function, i.e.,
\[
H_2(p) = -p\log p -(1-p)\log(1-p).
\]
It is easy to show that when $0\leq\alpha,\beta\leq \frac{1}{2}$, $r(\alpha,\beta)$ is increasing in $\beta$ and decreasing in $\alpha$.

Since $\mbox{E}[d(X,g(Z,Y))]=\Pr[X\neq g(Z,Y))]$ and $X\sim$Ber($\frac{1}{2}$), the corresponding optimal reconstruction function $g$ boils down to a maximum likelihood estimator given by
\begin{eqnarray*}
g(z,y) & = & \arg\max_x \, p_{Y Z|X}(y,z|x) \\
& = & \arg\max_x \, p_{Z|X}(z|x)p_{Y|X}(y|x) \\
& = & \begin{cases}
y & z = \lambda \mbox{ or } z = y\\
z & z \neq \lambda, z \neq y \mbox { and } \beta>\alpha \\
y & z \neq \lambda, z \neq y \mbox { and } \beta\leq\alpha
\end{cases} \; .
\end{eqnarray*}
The resultant distortion is given by
\begin{equation}
\label{eqtn:WZD_Binary}
\mbox{E}[d(X,g(Z,Y))] = q \min\{\alpha,\beta\} + (1-q)\beta
\end{equation}
implying together with (\ref{eqtn:WZR_Binary}) that
\begin{equation}
\label{eqtn:WZDR_Binary}
D^{WZ}(R) = \min_{ \begin{array}{c} 0\leq q \leq 1 ,0\leq \alpha\leq \beta:\\
q \, r(\alpha,\beta) \leq R \end{array}} \bigg[ q \alpha + (1-q)\beta \bigg]
\end{equation}
where the extra constraint $\alpha\leq \beta$ is imposed because $\alpha>\beta$ is a provably suboptimal choice.
It also follows from the discussion in~\cite{WZiv} that there exists a critical rate $R_0(\beta)$ above which the optimal test channel assumes $q=1$ and $0\leq\alpha\leq\alpha_0(\beta)\leq \beta$, and below which it assumes $\alpha=\alpha_0(\beta)$ and $0\leq q< 1$. The reason why we discussed other values of $(q,\alpha)$ above is because we will use the test channel in its most general form in all WZBC schemes.

\subsection{A Trivial Converse for the WZBC Problem}

At each terminal, no WZBC scheme can achieve a distortion less than the minimum distortion achievable by ignoring the other terminals.
Thus,
\begin{equation}
\label{eqtn:DWZ}
D_k \geq D_k^{WZ}(\kappa C_k)
\end{equation}
where $C_k$ is the capacity of channel $k$.
For the source-channel pairs we consider, (\ref{eqtn:DWZ}) can be further specialized.
For the quadratic Gaussian case, we obtain using (\ref{eqtn:WZDR_Gaussian}) and
\[
C_k = \frac{1}{2}\log \left(1+ \frac{P}{\vari{W}_k}\right)
\]
that
\begin{equation}
\label{eqtn:TrivialGaussianD}
D_k \geq \frac{\vari{N}_k}{(1+\frac{P}{\vari{W}_k})^{\kappa}} \;.
\end{equation}
For the binary Hamming case, using (\ref{eqtn:WZDR_Binary}) and $C_k=1-H_2(p_k)$, the converse becomes
\[
D_k \geq \min_{ \begin{array}{c} 0\leq q \leq 1 ,0\leq \alpha\leq \beta_k:\\
q\, r(\alpha,\beta) \leq \kappa[1 - H_2(p_k)] \end{array}} q \alpha + (1-q)\beta_k.
\]

\subsection{Separate Source and Channel Coding}

For a general source and channel pair, the source and channel coding problems are extremely challenging. The set of all achievable rate triples (common and two private rates) for general broadcast channels are not known. The corresponding source coding problem has not been explicitly considered in previous work either. But there is considerable simplification in the quadratic Gaussian and binary Hamming cases since the channel and the side information are degraded in both cases: we can assume that one of the two Markov chains, $U-V_1-V_2$ or $U-V_2-V_1$, holds (for arbitrary channel input $U$) for the channel, and similarly either $X-Y_1-Y_2$ or $X-Y_2-Y_1$ holds for the source. The capacity region for degraded broadcast channels is fully known. In fact, since any information sent to the weaker channel can be decoded by the stronger channel, we can assume that no private information is sent to the weaker channel. As a result, two layer source coding, which has been considered in~\cite{SteinbergMerhav, TianDiggavi, TianDiggavi2}, is sufficiently general.

To be able to analyze $U-V_1-V_2$ and $U-V_2-V_1$ simultaneously, we denote the random variables, rates, and distortion levels associated with the $g$ood channel by the subscript $g$ and those associated with the $b$ad one by $b$, i.e., the channel variables always satisfy $U-V_g-V_b$ where $g$ is either 1 or 2 and $b$ takes the other value. Let $\mathcal{C}(\kappa)$ denote the capacity region for $\kappa$ channel uses, i.e., the region of all pairs of total rates that can be simultaneously decoded by each receiver. As shown in~\cite{Bergmans, Gallager}, ${\cal C}(\kappa)$ is the convex closure of all $(R_b,R_g)$ such that there exist a channel input $U\in \mathcal{U}$ and an auxiliary random variable $U_b\in \mathcal{U}_b$ satisfying $U_b-U-V_g-V_b$, the power constraint (if any) $\mbox{E}[U^2]\leq P$, and
\begin{eqnarray}
\label{eqtn:SeparateCapacity1}
R_b & \leq & \kappa I(U_b;V_b) \\
\label{eqtn:SeparateCapacity2}
R_g & \leq & \kappa [I(U_b;V_b) + I(U;V_g|U_b)] \; .
\end{eqnarray}

Let ${\cal R}(D_b, D_g)$ be the set of total rates that must be sent to each source decoder to enable the receivers to reconstruct the source within the respective distortions $D_b$ and $D_g$. A distortion pair $(D_b, D_g)$ is achievable by separate source and channel coding with $\kappa$ channel uses per source symbol if and only if
\[
{\cal R}(D_b,D_g) \cap {\cal C}(\kappa) \neq \emptyset \; .
\]
Note that we use cumulative rates at the good receiver.

Despite the simplification brought by degraded side information, there is no known complete single-letter characterization of ${\cal R}(D_b,D_g)$ for all sources and distortion measures when $X-Y_b-Y_g$.
Let ${\cal R}^*(D_b,D_g)$ be defined as the convex closure of all $(R_b,R_g)$ such that there exist source auxiliary random variables $(Z_b, Z_g)\in \mathcal{Z}_b\times\mathcal{Z}_g$ with either $(Y_b,Y_g)-X-Z_b-Z_g$ or $(Y_b,Y_g)-X-Z_g-Z_b$, and reconstruction functions $g_k:\mathcal{Z}_k\times\mathcal{Y}_k\to\hat{\mathcal{X}}$ satisfying
\begin{equation}
\label{eqtn:SeparateDistortion}
\mbox{E}[d_k(X, g_k(Z_k,Y_k))] \leq D_k
\end{equation}
for $k=b,g$, and
\begin{eqnarray}
\label{eqtn:SeparateRate1}
R_b & \geq & I(X;Z_b|Y_b) \\
\label{eqtn:SeparateRate2}
R_g & \geq &  \begin{cases}
I(X;Z_b|Y_b) +[I(X;Z_g|Y_g) - I(X;Z_b|Y_g)]^+ & \mbox{ if } X-Y_g-Y_b \\
I(X;Z_g|Y_g) + [I(X;Z_b|Y_b) - I(X;Z_g|Y_b)]^+ & \mbox{ if } X-Y_b-Y_g
\end{cases} \; .
\end{eqnarray}
It was shown in~\cite{SteinbergMerhav} that ${\cal R}(D_b,D_g)={\cal R}^*(D_b,D_g)$ when $X-Y_g-Y_b$.
On the other hand, \cite{TianDiggavi} showed that even when $X-Y_b-Y_g$, ${\cal R}(D_b,D_g)={\cal R}^*(D_b,D_g)$ for the quadratic Gaussian problem.
For all other sources and distortion measures, we only know ${\cal R}(D_b,D_g) \supset {\cal R}^*(D_b,D_g)$ in general when $X-Y_b-Y_g$. We shall present explicit expressions for the complete tradeoff in the quadratic Gaussian case in Section~\ref{sctn:QuadraticGaussian} and an achievable tradeoff for the binary Hamming case in  Section~\ref{sctn:BinaryHamming}.

\subsection{Uncoded Transmission}

In the bandwidth-matched case, i.e., when $\kappa=1$, if the source and channel alphabets are compatible, uncoded transmission is a possible strategy. For the quadratic Gaussian case, the distortion achieved by uncoded transmission is given by
\begin{equation}
\label{eqtn:WZ_UncodedGaussian}
D_k = \frac{\vari{N}_k\vari{W}_k}{\vari{W}_k + \vari{N}_k P}
\end{equation}
for $k=1,2$. This, in turn, is also because the channel is the same as the test channel up to a scaling factor. More specifically, when $\sqrt{P}X$ is transmitted and corrupted by noise $W_k$, one can write $X=Z_k+S_k$ with $S_k\perp Z_k$, where $Z_k$ is an appropriately scaled version of the received signal $\sqrt{P}X+W_k$ and
\[
\vari{S}_k = \frac{\vari{W}_k}{\vari{W}_k+P} \; .
\]
Substituting this into (\ref{eqtn:WZD_Gaussian}) then yields (\ref{eqtn:WZ_UncodedGaussian}).
Comparing with \eqref{eqtn:TrivialGaussianD}, we note that \eqref{eqtn:WZ_UncodedGaussian} achieves $D^{WZ}_k(C_k)$ only when $\vari{N}_k=1$ or when $\vari{W}_k\to\infty$, which, in turn, translate to trivial $Y_k$ or zero $C_k$, respectively.

For the binary Hamming case, this strategy achieves the distortion pair
\begin{equation}
\label{eqtn:WZ_UncodedBinary}
D_k = \min\{ p_k, \beta_k\}
\end{equation}
for $k=1,2$.
That is because the channel is the same as the test channel that achieves $D^{WZ}(R)$ with $q=1$. The distortion expression in (\ref{eqtn:WZ_UncodedBinary}) then follows using (\ref{eqtn:WZD_Binary}).
One can also show that \eqref{eqtn:WZ_UncodedBinary} coincides with $D^{WZ}_k(C_k)$ only when $\beta_k=\frac{1}{2}$ or $p_k=\frac{1}{2}$.
Once again, these respectively correspond to trivial $Y_k$ and zero $C_k$.

\section{Basic WZBC Schemes}
\label{sect:Basic}

In this section, we present the basic coding schemes that we shall then develop into the schemes that form the main contributions of this paper.
In what follows, we only present code constructions for discrete sources and channels. The constructions can be extended to the continuous case in the usual manner. Our coding arguments rely heavily on the notion of typicality. Given a random variable $X\sim P_X(x),$ defined over a discrete alphabet $\mathcal{X}$ the typical set at block length $n$ is defined as~\cite{AlonRoche}
\[
{\cal T}_{\delta}^n(X) \triangleq \left\{ x^n\in\mathcal{X}^n: \left| \frac{N(a|x^n)}{n} - P_X(a)\right| \leq \delta P_X(a), \forall a\in\mathcal{X}\right\}
\]
where $N(a|x^n)$ denotes the number of times $a$ appears in $x^n$.

The first scheme, termed Common Description Scheme (CDS), is a basic extension of the scheme in~\cite{Ertem} where the source is first quantized before transmission over the channel. Even though our layered schemes are constructed for the case of $K=2$ receivers, CDS can be utilized for any $K\geq 2$. Unlike in~\cite{Ertem}, where typical source words are placed in one-to-one correspondence with a channel codebook, the source words are first mapped to quantized versions and it is these quantized versions that are mapped to the channel codebook. Like~\cite{Ertem}, there is no explicit binning, but the channel performs \emph{virtual} binning. Before discussing the performance of the CDS, we shall present an extension of the CDS for a more general coding problem.

Suppose that there is CSI available solely at the encoder, i.e., the broadcast channel is defined by the transition probability $p_{V_1V_2|US}(v_1,v_2|u,s)$ and the CSI $S^m\in {\cal T}_{\eta}^m(S)$ with some $\eta>0$, where $S$ is some fixed distribution defined on the CSI alphabet $\mathcal{S}$, is available non-causally at the encoder. Given a source and side information at the decoders $(X, Y_1, Y_2)$, codes $(m,n,\phi,\psi_1,\psi_2)$ and achievability of distortion pairs is defined as in the WZBC scenario except that the encoder now takes the form $\phi:\mathcal{X}^n\times\mathcal{S}^m\to\mathcal{U}^m$. The following theorem characterizes the performance of an extension of the CDS, which we term CDS with DPC.

\begin{theorem}\label{thrm:S0_DPC}
A distortion pair $(D_1,\ldots,D_K)$ is achievable at rate $\kappa$ if there exist random variables $Z\in\mathcal{Z}$, $T\in\mathcal{T}, U\in \mathcal{U}$ and functions $g_k:\mathcal{Z}\times\mathcal{Y}_k\to \hat{\mathcal{X}}$ with $(Y_1, \ldots, Y_K)-X-Z$ and $T-(U,S)-(V_1,\ldots,V_K)$ such that
\begin{align}
I(X;Z|Y_k) &< \kappa \big[I(T; V_k) - I(T;S)\big] \label{eqtn:Ineq_S0_DPC}\\
\mbox{E}[d_k(X,g_k(Z,Y_k))] &\leq D_k \label{eqtn:Ineq_S0_DPC_D}
\end{align}
for $k=1,\ldots,K$.
\end{theorem}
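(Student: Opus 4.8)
The plan is to prove Theorem~\ref{thrm:S0_DPC} by cascading the quantize-then-transmit idea behind the CDS with a Gelfand--Pinsker (dirty-paper) layer that absorbs the encoder's state $S^m$, and then, at each receiver, running a decoder that intersects the ``channel-compatible'' and ``side-information-compatible'' candidate descriptions. Concretely, fix a test channel $p(z|x)$ consistent with $(Y_1,\ldots,Y_K)-X-Z$, a pair $p(t|s)\,p(u|t,s)$ consistent with $T-(U,S)-(V_1,\ldots,V_K)$, the maps $g_k$, and a small $\epsilon>0$. I would first draw a covering codebook $\{Z^n(\ell)\}$, $\ell\in\{1,\ldots,2^{nR_Z}\}$, i.i.d.\ $\sim P_Z$ with $R_Z=I(X;Z)+\epsilon$, and then, for every $\ell$, a bin $\mathcal{B}(\ell)$ of $2^{m(I(T;S)+\epsilon)}$ sequences $T^m(\ell,j)$ drawn i.i.d.\ $\sim P_T$; the channel codebook is $\bigcup_\ell\mathcal{B}(\ell)$, of size $2^{n[R_Z+\kappa(I(T;S)+\epsilon)]}$. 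Given $X^n$ and $S^m\in\mathcal{T}_\eta^m(S)$, the encoder picks $\ell^*$ with $(X^n,Z^n(\ell^*))$ jointly typical (possible with high probability since $R_Z>I(X;Z)$, by the covering lemma), then $j^*$ with $(T^m(\ell^*,j^*),S^m)$ jointly typical (possible with high probability since each bin has $\approx 2^{mI(T;S)}$ entries), and transmits $U^m$ drawn symbol-by-symbol from $p(u|t,s)$ along $(T^m(\ell^*,j^*),S^m)$; by the conditional-typicality lemma and $T-(U,S)-V_k$, the tuple $(T^m(\ell^*,j^*),U^m,S^m,V_k^m)$ is jointly typical at every receiver $k$ with high probability. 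Receiver $k$ then outputs the unique $\hat\ell$ for which some $j$ makes $T^m(\hat\ell,j)$ jointly typical with $V_k^m$ \emph{and} $Z^n(\hat\ell)$ jointly typical with $Y_k^n$, and reconstructs $\hat X_{kt}=g_k(Z_t(\hat\ell),Y_{kt})$.

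For the analysis I would first check that $\ell^*$ itself qualifies with high probability: its channel test is the encoding guarantee above, and its side-information test follows from the Markov lemma applied to $Y_k-X-Z$ (from $(X^n,Z^n(\ell^*))$ and $(X^n,Y_k^n)$ jointly typical one gets $(X^n,Y_k^n,Z^n(\ell^*))$ jointly typical). For the spurious indices, for each fixed $\ell$ the sequence $Z^n(\ell)$ is independent of $Y_k^n$ (the codebook is drawn independently of the source), so $\Pr[Z^n(\ell)\text{ j.t. }Y_k^n]\approx 2^{-nI(Z;Y_k)}$ and the expected number of $\ell\neq\ell^*$ passing the side-information test is $\approx 2^{n[R_Z-I(Z;Y_k)]}=2^{n[I(X;Z|Y_k)+\epsilon]}$, using $I(X;Z)-I(Z;Y_k)=I(X;Z|Y_k)$ from $Y_k-X-Z$; conditioned on $\ell\neq\ell^*$ the whole bin $\mathcal{B}(\ell)$ is untransmitted, hence independent of $V_k^m$, so each of its $\approx 2^{mI(T;S)}$ members is jointly typical with $V_k^m$ with probability $\approx 2^{-mI(T;V_k)}$; since the source and channel codebooks are independent, the two tests are independent and
\[
\Pr[\text{some }\ell\neq\ell^*\text{ qualifies at receiver }k]\ \le\ 2^{n[I(X;Z|Y_k)-\kappa(I(T;V_k)-I(T;S))+(1+2\kappa)\epsilon]}+o(1),
\]
which vanishes for $\epsilon$ small precisely under~\eqref{eqtn:Ineq_S0_DPC}. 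A union bound over $k$ and the usual expurgation then yield a deterministic code for which, with high probability, every receiver recovers $\ell^*$; on that event $(X^n,Y_k^n,Z^n(\ell^*))\in\mathcal{T}_\delta^n(X,Y_k,Z)$, so the typical-average property of $d_k$ (after truncating $d_k$ in the unbounded/Gaussian case, as the text notes) gives $\tfrac1n\mbox{E}[d_k(X^n,\hat X_k^n)]\leq \mbox{E}[d_k(X,g_k(Z,Y_k))]+o(1)\leq D_k+o(1)$, and letting $\epsilon\to0$ gives~\eqref{eqtn:Ineq_S0_DPC_D}.

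The genuinely delicate step will be the joint error event at each receiver, which superposes a Wyner--Ziv list argument (the $\approx 2^{nI(X;Z|Y_k)}$ descriptions consistent with $Y_k$) on a Gelfand--Pinsker argument (the channel can disambiguate $\approx 2^{m(I(T;V_k)-I(T;S))}$ of them); care is needed because the encoder's choices of $\ell^*$ and $j^*$ depend on the whole codebook and on $S^m$, so the independence claims used above cannot be asserted naively. The clean bookkeeping is to condition on $(X^n,Y_k^n,S^m,\{Z^n(\ell')\}_{\ell'},\{T^m(\ell^*,j)\}_j,\text{channel noise})$, on which event the untransmitted sequences $Z^n(\ell)$ and $T^m(\ell,j)$ with $\ell\neq\ell^*$ really are independent of the pair $(Y_k^n,V_k^m)$, so the product bound above is legitimate. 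Everything else --- the covering and packing estimates, the conditional-typicality and Markov lemmas, and the passage from finite alphabets to the Gaussian model --- is routine.
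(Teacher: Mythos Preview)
Your proposal is correct and follows essentially the same route as the paper's proof: a size-$2^{n(I(X;Z)+\epsilon)}$ source codebook, with each source index carrying its own bin of $2^{m(I(T;S)+\epsilon)}$ auxiliary channel codewords drawn from $P_T$, encoding by the obvious two-step covering (first $Z$, then $T$ against $S$) followed by symbolwise generation of $U$ via $p_{U|TS}$, and at each receiver the joint test that $(Z^n(\ell),Y_k^n)$ and $(T^m(\ell,j),V_k^m)$ be simultaneously typical; the key error bound $M\cdot M'\cdot 2^{-n I(Y_k;Z)}2^{-m I(T;V_k)}$ is exactly the paper's $\mathcal{E}_4(k)$ analysis. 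The only cosmetic differences are that the paper draws codewords uniformly from typical sets rather than i.i.d.\ from the marginals, and that your write-up is more explicit about the Markov-lemma step for $(Y_k^n,Z^n(\ell^*))$ and about the conditioning that justifies independence of the untransmitted bins from $(Y_k^n,V_k^m)$, points the paper sweeps under ``standard typicality arguments.''
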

\begin{proof}

The code construction is as follows. For fixed $\delta,\delta',\delta''>0$, a source codebook $\mathcal{C}_Z\triangleq \{z^n(i), i=1,\dotsc,M\}$ is chosen from ${\cal T}_{\delta}^n(Z)$. A set of $M$ bins $\mathcal{C}_T(i)=\{ t^m(i,j), j=1,\dotsc,M'\}$, where each $t^m(i,j)$ is chosen randomly at uniform from ${\cal T}_{\delta}^m(T)$, is also constructed. Given a source word $X^n$ and CSI $S^m$, the encoder tries to find a pair $(i^*,j^*)$ such that $(X^n, z^n(i^*))\in {\cal T}_{\delta'}^n(X,Z)$ and $(S^m,t^m(i^*,j^*))\in {\cal T}_{\delta'}^m(S,T)$. If it is unsuccessful, it declares an error. If it is successful, the channel input is drawn from the distribution $\prod_{l=1}^m p_{U|TS}(u_l|t_l(i^*,j^*),S_l)$.
At terminal $k$, the decoder goes through all pairs $(i,j)\in\{1,\dotsc,M\}\times\{1,\dotsc,M'\}$ until it finds the first pair satisfying $(Y_k^n, z^n(i))\in {\cal T}_{\delta''}^n(Y_k,Z)$ and $(V_k^m,t^m(i,j))\in {\cal T}_{\delta''}^m(V_k,T)$ simultaneously. If there is no such pair, the decoder sets $i=1,j=1$. Once $(i,j)$ is decided, coordinate-wise reconstruction is performed using $g_k$ with $Y_k^n$ and $z^n(i)$.

We define the error events as
\begin{eqnarray*}
{\cal E}_1 & = & \bigg< \forall (i,j), \;\; \mbox{either } (X^n,z^n(i))\not\in {\cal T}_{\delta'}^n(X,Z) \; \mbox{ or } \; (S^m,t^m(i,j))\not\in {\cal T}_{\delta'}^m(S,T) \bigg> \\
{\cal E}_2(k) & = & \bigg< (Y_k^n,z^n(i^*))\not\in {\cal T}_{\delta''}^n(Y_k,Z) \bigg> \\
{\cal E}_3(k) & = & \bigg< (V^m_k,t^m(i^*,j^*))\not\in {\cal T}_{\delta''}^m(V_k,T) \bigg> \\
{\cal E}_4(k) & = & \bigg< \exists(i\neq i^*,j), \;\; (Y_k^n,z^n(i))\in {\cal T}_{\delta''}^n(Y_k,Z)\; \mbox{ and } \;(V^m_k,t^m(i,j))\in {\cal T}_{\delta''}^m(V_k,T) \bigg> \; .
\end{eqnarray*}
Using standard typicality arguments, it can be shown that for fixed $\delta,\delta',\delta''$, if
\[
M \geq 2^{n[I(X;Z)+\epsilon_1(\delta,\delta',\delta'')]}
\]
and
\[
M' \geq 2^{m[I(S;T)+\epsilon_1(\delta,\delta',\delta'')]}
\]
then $\Pr[{\cal E}_1]<\epsilon$, and that $\Pr[{\cal E}_2(k)]<\epsilon$ and $\Pr[{\cal E}_3(k)]<\epsilon$ for any $\epsilon>0$ and large enough $n$.
Similarly, it follows that if
\[
M \leq 2^{n[I(X;Z)+2\epsilon_1(\delta,\delta',\delta'')]}
\]
and
\[
M' \leq 2^{m[I(S;T)+2\epsilon_1(\delta,\delta',\delta'')]}
\]
then
\begin{eqnarray*}
\Pr[{\cal E}_4(k)] & \leq & M\cdot M'\cdot 2^{-n[I(Y_k;Z)-\epsilon_2(\delta,\delta',\delta'')]} 2^{-m[I(T;V_k)-\epsilon_2(\delta,\delta',\delta'')]} \\
& = & M\cdot M'\cdot 2^{-n[I(Y_k;Z)+\kappa I(T;V_k)-(\kappa+1)\epsilon_2(\delta,\delta',\delta'')]} \\
& \leq & 2^{n[I(X;Z|Y_k)-\kappa \{I(T;V_k)-I(S;T)\}+(\kappa+1)\epsilon_2(\delta,\delta',\delta'')+2(\kappa+1)\epsilon_1(\delta,\delta',\delta'')]} \; .
\end{eqnarray*}
This probability also vanishes if $\delta,\delta',\delta''\to\infty$ thanks to (\ref{eqtn:Ineq_S0_DPC}).
This completes the proof.
\end{proof}

Note that, if $S$ is a trivial random variable, independent of the channel, the scenario becomes the original WZBC setup and CDS with DPC becomes CDS. By equating $T$ and $U$, we obtain the following corollary that characterizes the performance of the CDS.

\begin{corollary}\label{corr:S0}
A distortion tuple $(D_1,\ldots, D_K)$ is achievable at rate $\kappa$ for the WZBC problem if there exist random variables $Z\in\mathcal{Z}$, $U\in \mathcal{U}$ and functions $g_k:\mathcal{Z}\times\mathcal{Y}_k\to \hat{\mathcal{X}_k}$ with $(Y_1, \ldots,Y_K)-X-Z$ such that
\begin{align}
I(X;Z|Y_k) &< \kappa I(U; V_k)  \label{eqtn:Ineq_S0}\\
\mbox{E}[d_k(X,g_k(Z,Y_k))] &\leq D_k \label{eqtn:Ineq_S0_D}
\end{align}
for $k=1,\ldots,K$.
\end{corollary}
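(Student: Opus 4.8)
\textbf{Proof proposal for Corollary~\ref{corr:S0}.}
The plan is to obtain the corollary as an immediate specialization of Theorem~\ref{thrm:S0_DPC}. First I would instantiate the CSI-aided scenario of the theorem with a \emph{trivial} (deterministic, hence constant) state variable $S$ that is independent of the channel, so that the channel $p_{V_1V_2|US}$ reduces to the ordinary broadcast channel $p_{V_1\cdots V_K|U}$ and the encoder $\phi:\mathcal{X}^n\times\mathcal{S}^m\to\mathcal{U}^m$ reduces to a map $\mathcal{X}^n\to\mathcal{U}^m$; this is exactly the WZBC setup of the paper. Under this choice $I(T;S)=0$ for any auxiliary $T$.

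Next I would choose $T=U$ in the theorem. The required Markov condition $T-(U,S)-(V_1,\ldots,V_K)$ holds trivially once $S$ is constant and $T=U$, since then the condition is just $U-U-(V_1,\ldots,V_K)$. With this identification $I(T;V_k)=I(U;V_k)$, so the rate condition \eqref{eqtn:Ineq_S0_DPC} of the theorem,
\[
I(X;Z|Y_k) < \kappa\big[I(T;V_k)-I(T;S)\big],
\]
becomes precisely $I(X;Z|Y_k) < \kappa I(U;V_k)$, which is \eqref{eqtn:Ineq_S0}; and the distortion condition \eqref{eqtn:Ineq_S0_DPC_D} becomes \eqref{eqtn:Ineq_S0_D} verbatim. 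The source-side Markov condition $(Y_1,\ldots,Y_K)-X-Z$ is unchanged. Hence every hypothesis of Corollary~\ref{corr:S0} supplies, via these substitutions, a valid instance of Theorem~\ref{thrm:S0_DPC}, and achievability of $(D_1,\ldots,D_K)$ at rate $\kappa$ follows.

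The only point that needs a remark rather than a calculation is that the bin structure $\mathcal{C}_T(i)=\{t^m(i,j)\}$ in the theorem's construction degenerates gracefully: with $S$ constant one has $M'\geq 2^{m[I(S;T)+\epsilon_1]}=2^{m\epsilon_1}$, so a single codeword per bin (or a vanishing exponent of them) suffices, the encoder's search for $j^*$ is vacuous, and the DPC layer collapses to transmitting $U^m(i^*)$ directly — i.e., one recovers the plain CDS. I do not anticipate a genuine obstacle here; the work is entirely in verifying that the Markov and independence conditions are consistent with $T=U$ and trivial $S$, which they are.
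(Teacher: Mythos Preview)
Your proposal is correct and follows exactly the paper's own derivation: the paper obtains the corollary by taking $S$ to be a trivial random variable independent of the channel and setting $T=U$ in Theorem~\ref{thrm:S0_DPC}. Your additional remark about the degeneration of the bin structure is a nice sanity check, but it is not needed for the argument.
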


\begin{corollary}
\label{corr:Scheme0_DPC}
The coding scheme in the proof of Theorem~\ref{thrm:S0_DPC} can also decode $t^m(i^*,j^*)$ successfully.
\end{corollary}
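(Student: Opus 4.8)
The plan is to build directly on the analysis already done for Theorem~\ref{thrm:S0_DPC}. There, the decoder at terminal $k$ searches for a pair $(i,j)$ for which $z^n(i)$ is jointly typical with $Y_k^n$ and $t^m(i,j)$ is jointly typical with $V_k^m$, and the error events ${\cal E}_1,{\cal E}_2(k),{\cal E}_3(k),{\cal E}_4(k)$ already guarantee, with high probability, that the transmitted pair $(i^*,j^*)$ passes both tests and that no pair with $i\neq i^*$ does. The only gap between that statement and decoding $t^m(i^*,j^*)$ itself is the possibility of a \emph{spurious refinement index}: some $j\neq j^*$ for which $t^m(i^*,j)$ happens to be jointly typical with $V_k^m$, since ${\cal E}_4(k)$ only rules out indices with $i\neq i^*$.

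First I would add the error event
\[
{\cal E}_5(k) = \bigg< \exists\, j\neq j^*, \;\; (V_k^m, t^m(i^*,j))\in{\cal T}_{\delta''}^m(V_k,T) \bigg> \; ,
\]
and observe that on the complement of ${\cal E}_1\cup{\cal E}_2(k)\cup{\cal E}_3(k)\cup{\cal E}_4(k)\cup{\cal E}_5(k)$ the pair $(i^*,j^*)$ is the unique pair passing both typicality tests, so the decoder's search returns exactly $(i^*,j^*)$, hence recovers $t^m(i^*,j^*)$. It then remains to bound $\Pr[{\cal E}_5(k)]$. For each $j\neq j^*$ the codeword $t^m(i^*,j)$ is drawn uniformly from ${\cal T}_{\delta}^m(T)$ independently of $S^m$, $X^n$, the transmitted codeword $t^m(i^*,j^*)$, and the channel noise, hence independently of $V_k^m$; so it is jointly typical with $V_k^m$ with probability at most $2^{-m[I(T;V_k)-\epsilon_2(\delta,\delta',\delta'')]}$. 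A union bound over the at most $M'\leq 2^{m[I(S;T)+2\epsilon_1(\delta,\delta',\delta'')]}$ choices of $j$ gives
\[
\Pr[{\cal E}_5(k)] \;\leq\; M'\cdot 2^{-m[I(T;V_k)-\epsilon_2]} \;\leq\; 2^{-m[I(T;V_k)-I(S;T)-\epsilon_2-2\epsilon_1]} \; ,
\]
which vanishes as $m\to\infty$ and $\delta,\delta',\delta''\to 0$, because \eqref{eqtn:Ineq_S0_DPC} together with $I(X;Z|Y_k)\geq 0$ forces the gap $I(T;V_k)-I(S;T)$ to be strictly positive. Combined with $\Pr[{\cal E}_1],\Pr[{\cal E}_2(k)],\Pr[{\cal E}_3(k)],\Pr[{\cal E}_4(k)]<\epsilon$ from the proof of Theorem~\ref{thrm:S0_DPC}, this yields the claim.

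I do not expect a genuine obstacle here; the corollary is essentially a bookkeeping addendum. The one point requiring the same care as in the original argument is that conditioning on the encoder having succeeded — equivalently, on the realized value of $j^*$ — should not be allowed to destroy the independence of the remaining bin codewords $\{t^m(i^*,j)\}_{j\neq j^*}$ from $V_k^m$; this is handled exactly as for ${\cal E}_4(k)$ by averaging over the random codebook, and introduces nothing new. The substantive content is simply the observation that the slack $I(T;V_k)-I(S;T)$, which the DPC rate condition already makes positive, is precisely what is needed to resolve the refinement index on top of the common one.
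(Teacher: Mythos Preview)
Your proposal is correct and is essentially identical to the paper's own proof: both introduce the same error event ${\cal E}_5(k)$, apply the same union bound $\Pr[{\cal E}_5(k)]\leq M'\cdot 2^{-m[I(T;V_k)-\epsilon_2]}$, and conclude using the positivity of $I(T;V_k)-I(S;T)$. The only minor difference is that you derive this positivity directly from \eqref{eqtn:Ineq_S0_DPC} and $I(X;Z|Y_k)\geq 0$, whereas the paper simply remarks that the assumption is not restrictive since otherwise no information can be delivered to terminal $k$.
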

\begin{proof}
Define
\[
{\cal E}_5(k) = \bigg< \exists j\neq j^*, \;\; (V^m_k,t^m(i^*,j))\in {\cal T}_{\delta''}^m(V_k,T) \bigg> \; .
\]
It then suffices to show that $\Pr[{\cal E}_5(k)]<\epsilon$ for large enough $n$. Indeed, since $I(T;V_k)-I(S;T)>0$,
\begin{eqnarray*}
\Pr[{\cal E}_5(k)] & \leq &  M' 2^{-m[I(T;V_k)-\epsilon_2(\delta,\delta',\delta'')]}  \\
& \leq &  2^{-m[I(T;V_k)-I(S;T)-\epsilon_2(\delta,\delta',\delta'')-2\epsilon_1(\delta,\delta',\delta'')]}  \\
& \leq & \epsilon \; .
\end{eqnarray*}
The assumption $I(T;V_k)-I(S;T)>0$ is not restrictive at all, because otherwise no information can be delivered to terminal $k$ to begin with.
\end{proof}

The significance of Corollary~\ref{corr:Scheme0_DPC} is that decoding $t^m(i^*,j^*)$ provides information about the CSI $S^m$. This information, in turn, will be very useful in our layered WZBC schemes where the CSI is self-imposed and related to the source $X^n$ itself.

Examining the proof of Theorem \ref{thrm:S0_DPC}, we notice an apparent separation between source and channel coding in that the source and channel codebooks are independently chosen. Furthermore, successful transmission is possible as long as the source coding rate for each terminal is less than the corresponding channel coding rate for a common channel input.
However, the decoding must be jointly performed and neither scheme can be split into separate stand-alone source and channel codes. Nevertheless, due to the quasi-independence of the source and channel codebooks we shall refer to source codes and channel codes separately when we discuss layered WZBC schemes.
This quasi-separation was shown to be optimal for the SWBC problem and was termed {\em operational separation} in~\cite{Ertem}.

\section{A Layered WZBC Scheme}
\label{sect:Layered}

In this section, we focus on the case of $K=2$ receivers. In CDS, the same information is conveyed to both receivers. However, since the side information and channel characteristics at the two receiving terminals can be very different, we might be able to improve the performance by layered coding, i.e., by not only transmitting a common layer (CL) to both receivers but also additionally transmitting a refinement layer (RL) to one of the two receivers.
The resultant interference between the CL and RL can then be mitigated by successive decoding or by dirty paper encoding.
Since there are two receivers, we are focusing on coding with only two layers because intuitively, more layers targeted for the same receiver can only degrade the performance.

Unless the better channel also has access to better side information, it is not straightforward to decide which receiver should receive only the CL and which should additionally receive the RL. We shall therefore refer to the decoders as the CL decoder and the RL decoder (which necessarily also decodes the CL) instead of using the subscripts $1$ and $2$. For the quadratic Gaussian problem, we will later develop an analytical decision tool. For all other sources and channels, one can combine the distortion regions resulting from the two choices, namely, CL decoder $=1$ and RL decoder $=2$ and vice versa. For ease of exposition, for a given choice of CL and RL decoders, we also rename the source and channel random variables by replacing the subscripts 1 and 2 by $c$ (for random variables corresponding to the CL information or to the CL decoder) and $r$ (for random variables corresponding to the RL information or to the receiver that decodes \emph{both} CL and RL).

As mentioned earlier, the inclusion of an RL codeword changes the effective channel observed while decoding the CL. It is on this modified channel that we send the CL using CDS or CDS with DPC, and the respective channel rate expressions in \eqref{eqtn:Ineq_S0} and \eqref{eqtn:Ineq_S0_DPC} must be modified in a manner that we describe in the following subsections where we also present the capacity of the effective channel for transmitting the RL. Each possible order of channel encoding and decoding (at the RL decoder) leads to a different scheme. We shall concentrate on the scheme that has the best performance among the four in the Gaussian and binary Hamming cases, deferring a discussion of the other three to Appendix~\ref{subs:App:OtherSchemes}. In this scheme, illustrated in Figure~\ref{figr:RCCR}, the CL is coded using CDS with DPC with the RL codeword acting as CSI. We shall refer to this scheme as the Layered Description Scheme (LDS). We characterize the source and channel coding rates for LDS in the following. We will only sketch the proofs of the theorems, as they rely only on CDS with DPC, and other standard tools.

\begin{figure}
\centering
\includegraphics[width=15cm]{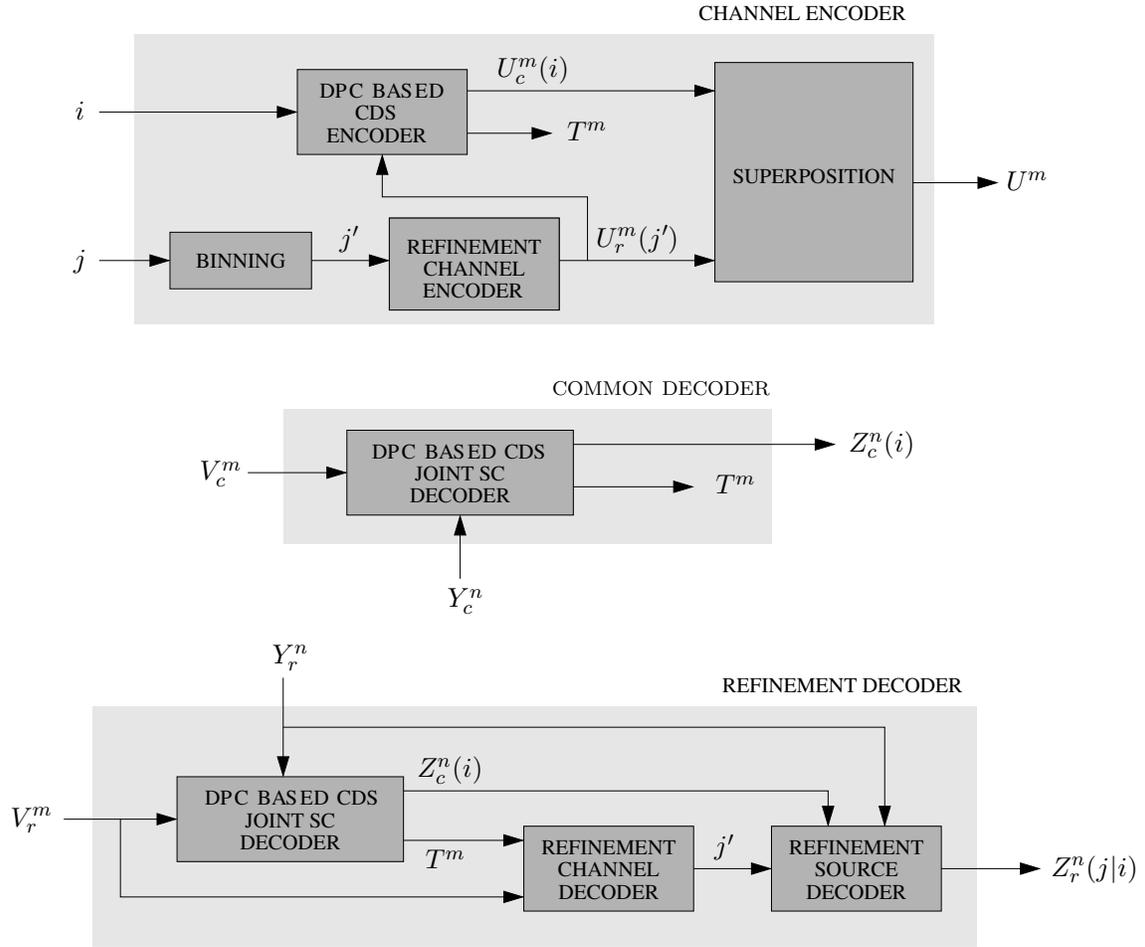}
\caption{Components of LDS: $Z_c^n(i)$ and $Z_r^n(j|i)$ are the first and second stage quantized source words. $Z_r^n(j|i)$ is binned and the bin index $j'$ is channel coded to $U_r^m(j')$ in the usual sense.
$Z_c^n(i)$, on the other hand, is mapped to $U_c^m(i)$ using CDS with DPC, where $U_r^m(j')$ serves as the CSI. The two channel codewords are then superposed, resulting in $U^m$.
Decoding of $Z_c^n(i)$ is exactly as in CDS with DPC at both receivers.
In decoding of $Z_r^n(j|i)$, the refinement channel decoder makes use of both the channel output $V_r^m$ and the auxiliary code word $T^m$ to decode the bin index $j'$.}\label{figr:RCCR}
\end{figure}

\subsection{Source Coding Rates for LDS}
\label{subs:SourceCodingAll}

The RL is transmitted by separate source and channel coding. In coding the source, we restrict our attention to systems where the communicated information satisfies $(Y_c,Y_r)-X-Z_r-Z_c$ where $Z_c$ corresponds to the CL and $Z_r$ is the RL. The source coding rate for the RL is therefore $I(X;Z_r|Z_c,Y_r)$ (cf.~\cite{TianDiggavi}). This has to be less than the RL capacity. Due to the separability of the source and channel variables in the required inequalities we can say that a distortion pair $(D_c, D_r)$ is achievable if
\[
{\cal R}^s_{\rm LDS}(D_c,D_r) \cap {\cal C}_{\rm WZBC}(\kappa) \neq \emptyset \; .
\]
Here, ${\cal C}_{\rm WZBC}(\kappa)$ is the ``capacity'' region achieved by either LDS or any of its variations discussed in Appendix~\ref{subs:App:OtherSchemes}, and ${\cal R}^s_{\rm LDS}(D_c,D_r)$ is the set of all triplets $(R^s_{cc},R^s_{cr},R^s_{rr})$ so that there exist $(Z_c, Z_r)$ and reconstruction functions $g_c:\mathcal{Z}_c\times \mathcal{Y}_c \to \hat{\mathcal{X}}_c$ and $g_r:\mathcal{Z}_r\times \mathcal{Y}_r \to \hat{\mathcal{X}}_r$ satisfying $(Y_c,Y_r)-X-Z_r-Z_c$ and
\begin{align}
\label{eqtn:RccCcc}
I(X;Z_c|Y_c) & \leq R^s_{cc}\\
\label{eqtn:RcrCcr}
I(X;Z_c|Y_r) &  \leq R^s_{cr} \\
\label{eqtn:RrrCrr}
I(X;Z_r|Z_c,Y_r) & \leq R^s_{rr}\\
\mbox{E}[d_c(X,g_c(Z_c, Y_c))] &\leq D_c \\
\mbox{E}[d_r(X,g_r(Z_r, Y_r))] &\leq D_r \; .
\end{align}
The subscripts $cc$ and $cr$ are used to emphasize transmission of the CL to receivers $c$ and $r$, respectively.
Similarly, the subscript $rr$ refers to transmission of RL to receiver $r$.

\subsection{Channel Coding Rates for LDS}
\label{subs:SchemeRCCR}
The next theorem provides the effective channel rate region for LDS.
\begin{theorem}
\label{thm:LDS}
Let ${\cal R}^c_{\rm LDS}(\kappa)$ be the union of all $(R^c_{cc},R^c_{cr},R^c_{rr})$ for which there exist $U_c\in{\cal U}_c$, $U_r\in{\cal U}_r$, and $T\in{\cal T}$ with $T-(U_r,U_c)-(V_r, V_c)$ and $(U_r,U_c)-U-(V_r, V_c)$ such that
\begin{align}
\label{eqtn:RCCR1}
R^c_{cc} &\leq \kappa[I(T;V_c) - I(T;U_r)] \\
\label{eqtn:RCCR2}
R^c_{cr} &\leq \kappa[I(T;V_r) - I(T;U_r)] \\
\label{eqtn:RCCR3}
R^c_{rr} &\leq \kappa[I(U_r;T,V_r)] \;.
\end{align}
Then ${\cal R}^c_{\rm LDS}(\kappa)\subseteq{\cal C}_{\rm WZBC}(\kappa)$.
\end{theorem}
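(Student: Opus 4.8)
The plan is to recognize LDS as a single instance of CDS with DPC (Theorem~\ref{thrm:S0_DPC}) in which the refinement-layer channel codeword plays the role of the encoder-only state $S$, followed by one extra decoding step at the RL decoder for the refinement bin index. The substitution is $S\leftarrow U_r$, with the Costa auxiliary of Theorem~\ref{thrm:S0_DPC} being the variable $T$, which is why the penalties $I(T;U_r)$ in \eqref{eqtn:RCCR1}--\eqref{eqtn:RCCR2} take the place of the $I(T;S)$ in \eqref{eqtn:Ineq_S0_DPC}. Concretely I would draw three codebooks: an RL channel codebook $\{U_r^m(j'):j'=1,\dots,\lceil 2^{nR^c_{rr}}\rceil\}$ with codewords i.i.d.\ from ${\cal T}_{\delta}^m(U_r)$ (so each is a legitimate realization of the typical CSI sequence required by Theorem~\ref{thrm:S0_DPC}); the CL ``source'' codebook $\{z_c^n(i)\}\subseteq{\cal T}_{\delta}^n(Z_c)$ exactly as in CDS; and, for each $i$, a bin ${\cal C}_T(i)=\{t^m(i,l)\}$ of DPC auxiliary words i.i.d.\ from ${\cal T}_{\delta}^m(T)$. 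After the source stage has picked $z_c^n(i)$ and, inside bin $j'$, the refinement word, the encoder transmits $U_r^m=U_r^m(j')$, runs the CDS-with-DPC encoder with state $U_r^m(j')$ (covering step: find $l^*$ with $(U_r^m(j'),t^m(i,l^*))\in{\cal T}_{\delta'}^m(U_r,T)$), and draws the channel input coordinatewise from $p_{U|T,U_r}$; the Markov hypotheses $T-(U_r,U_c)-(V_r,V_c)$ and $(U_r,U_c)-U-(V_r,V_c)$ are precisely what make this induce the intended joint law, with $U_c$ entering only through the choice of test distribution and never operationally.

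For the CL decoder $c$ I would apply the achievability part of Theorem~\ref{thrm:S0_DPC} verbatim with $(S,V_k)\leftarrow(U_r,V_c)$: inequality \eqref{eqtn:Ineq_S0_DPC} becomes $I(X;Z_c|Y_c)<\kappa[I(T;V_c)-I(T;U_r)]$, and since the LDS source rates obey $R^s_{cc}=I(X;Z_c|Y_c)\le R^c_{cc}$, it suffices to impose $R^c_{cc}\le\kappa[I(T;V_c)-I(T;U_r)]$, which is \eqref{eqtn:RCCR1}. Note that decoder $c$ never needs $U_r^m(j')$ --- just as in Theorem~\ref{thrm:S0_DPC} the state is unknown at the decoder --- so the superposed RL codeword is automatically absorbed into the DPC structure.

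For the RL decoder $r$ I would decode in two stages. First, the same CDS-with-DPC decoder with $(S,V_k)\leftarrow(U_r,V_r)$ recovers $i$ provided $R^c_{cr}\le\kappa[I(T;V_r)-I(T;U_r)]$, i.e.\ \eqref{eqtn:RCCR2}; and since this makes $I(T;V_r)>I(T;U_r)$, Corollary~\ref{corr:Scheme0_DPC} additionally yields $l^*$, hence the auxiliary word $T^m=t^m(i,l^*)$. Second, using $T^m$ together with $V_r^m$, decoder $r$ seeks the unique $\hat{j}'$ with $(U_r^m(\hat{j}'),T^m,V_r^m)\in{\cal T}_{\delta''}^m(U_r,T,V_r)$. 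The true $j'$ works w.h.p.: the covering step gives $(U_r^m(j'),T^m)\in{\cal T}^m(U_r,T)$, and then $U^m\sim p_{U|T,U_r}$ and $V_r^m\sim p_{V_r|U}$ extend typicality along $U_r,T\to U\to V_r$, the relation $V_r-U-(U_r,T)$ being a consequence of the stated Markov chains. For any $\hat{j}'\ne j'$ the codeword $U_r^m(\hat{j}')$ is independent of $(T^m,V_r^m)$ --- this is where independence of the RL channel codebook from the DPC codebook is used --- so it falls in the conditionally typical set with probability $\doteq 2^{-mI(U_r;T,V_r)}$, and a union bound over the $\le 2^{nR^c_{rr}}$ wrong indices vanishes exactly when $R^c_{rr}\le\kappa I(U_r;T,V_r)$, i.e.\ \eqref{eqtn:RCCR3}. (Recovering the refinement source word inside bin $j'$ from $Y_r^n$ and $z_c^n(i)$ belongs to the source analysis of Section~\ref{subs:SourceCodingAll}.) Letting $\delta,\delta',\delta''\to0$ and then $n\to\infty$ makes all error events simultaneously vanish, which gives ${\cal R}^c_{\rm LDS}(\kappa)\subseteq{\cal C}_{\rm WZBC}(\kappa)$.

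I expect the only real difficulty to be bookkeeping rather than a new idea: one must check that the encoding failure, the CL-decoding failure (already handled by Theorem~\ref{thrm:S0_DPC}), and the RL-bin-decoding failure are jointly small, and --- the one genuinely delicate point --- that the RL codeword $U_r^m(j')$, although it is \emph{selected by the source encoder} (it is the bin index of the refinement word, hence not uniform over its codebook), still behaves like a typicality-generic codeword when paired with the independently drawn $t^m(i,l)$'s and with $V_r^m$. This is exactly the quasi-independence / operational separation of the source and channel codebooks noted after Corollary~\ref{corr:Scheme0_DPC}, and it is what lets the standard joint-typicality and packing estimates be used without modification.
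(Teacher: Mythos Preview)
Your proposal is correct and follows essentially the same route as the paper: construct the RL codebook over ${\cal T}_\delta^m(U_r)$, invoke Theorem~\ref{thrm:S0_DPC} with $S\leftarrow U_r$ so that \eqref{eqtn:RCCR1}--\eqref{eqtn:RCCR2} guarantee CL decoding at both receivers, use Corollary~\ref{corr:Scheme0_DPC} to recover $T^m$ at receiver $r$, and then treat $(T^m,V_r^m)$ as the effective output of a point-to-point channel with input $U_r^m$, for which \eqref{eqtn:RCCR3} is the standard reliability condition. You supply considerably more detail than the paper's sketch---in particular the explicit packing bound for the RL bin index and the caveat about $U_r^m(j')$ being determined by the source encoder yet still behaving as a generic typical sequence (the operational-separation point)---but no new idea is needed beyond what the paper already indicates.
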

\begin{remark}
The various random variables that appear in Theorem \ref{thm:LDS} have the following interpretation: $V_c$ and $V_r$ are the channel outputs when the input is $U$. $U_c$ and $U_r$ correspond to the partial channel codewords that are superposed to form the channel input. Finally $T$ is the auxiliary random variable used in DPC with $U_r$ forming the CSI.
\end{remark}
\begin{remark}
In LDS, a trivial $U_r$ together with $T=U$ reduces to CDS.
\end{remark}
\begin{proof}
We construct an RL codebook with elements from ${\cal T}_{\delta}^m(U_r)$. We then use the CDS with DPC construction with the chosen RL codeword acting as CSI. It follows from Theorem~\ref{thrm:S0_DPC} that the CL information can be successfully decoded (together with the auxiliary codeword $T^m$) at both receivers if \eqref{eqtn:RCCR1} and \eqref{eqtn:RCCR2} are satisfied.
This way, the effective communication system for transmission of RL becomes a channel with $U_r^m$ as input and the pair $T^m$ and $V_r^m$ as output. For reliable transmission, \eqref{eqtn:RCCR3} is then sufficient.
\end{proof}

\section{Performance Analysis for the Quadratic Gaussian Problem}
\label{sctn:QuadraticGaussian}

In this section, we analyze the distortion tradeoff of the LDS for the quadratic Gaussian case.
While CDS with DPC is developed only as a tool to be used in layered WZBC codes, CDS itself is a legitimate WZBC strategy.
We thus analyze its performance in some detail first before proceeding with LDS.
It turns out, somewhat surprisingly, that CDS may in fact be the {\em optimal} strategy for an infinite family of source and channel parameters.
Understanding the performance of CDS also gives insight into which receiver should be chosen as receiver $c$, and which one as receiver $r$. We remind the reader that the variance of a Gaussian random variable $A$ will be denoted by $\mathbf{A}$.

\subsection{CDS for the Quadratic Gaussian Problem}
\label{subs:GaussianScheme0}

Using the test channel $X=Z+S$ with Gaussian $S$ and $Z$ where $S\perp Z$, and a Gaussian channel input $U$, \eqref{eqtn:Ineq_S0} becomes (cf.~\eqref{eqtn:WZR_Gaussian})
\[
\frac{1}{2}\log \left(1-\vari{N}_k+\frac{\vari{N}_k}{\vari{S}}\right) \leq
\frac{\kappa}{2}\log \left(1+\frac{P}{\vari{W}_k}\right)
\]
for $k=1,\ldots,K$.
In other words,
\[
\frac{1}{\vari{S}} \leq 1 + \min_k \frac{\left(1+\frac{P}{\vari{W}_k}\right)^{\kappa}-1}{\vari{N}_k} \; .
\]
By analyzing \eqref{eqtn:WZD_Gaussian}, it is clear that $\vari{S}$ should be chosen so as to achieve the above inequality with equality.
Substituting that choice in \eqref{eqtn:WZD_Gaussian} yields
\begin{equation}
\label{eqtn:Dk_Scheme0}
\frac{1}{D_k} = \frac{1}{\vari{N}_k}+\min_{k'} \frac{\left(1+\frac{P}{\vari{W}_{k'}}\right)^{\kappa}-1}{\vari{N}_{k'}} \; .
\end{equation}
For all $k^*$ that achieve the minimum in \eqref{eqtn:Dk_Scheme0}, we have
\[
\frac{1}{D_{k^*}} = \frac{\left(1+\frac{P}{\vari{W}_{k^*}}\right)^{\kappa}}{\vari{N}_{k^*}} \; .
\]
Thus, as seen from \eqref{eqtn:TrivialGaussianD}, $D_{k^*}=D_{k^*}^{WZ}(\kappa C_{k^*})$.
This, in particular, means that if
\[
\frac{\left(1+\frac{P}{\vari{W}_k}\right)^{\kappa}-1}{\vari{N}_k}
\]
is a constant, CDS achieves the trivial converse and there is no need for a layered WZBC scheme.
Specialization of \eqref{eqtn:Dk_Scheme0} to the case $\kappa=1$ is also of interest:
\begin{equation}
\label{eqtn:Dk_Scheme0_kappa1}
\frac{1}{D_k} = \frac{1}{\vari{N}_k}+\frac{P}{\max_{k'} \big\{\vari{W}_{k'}\vari{N}_{k'}\big\}} \; .
\end{equation}
In particular, all $k^*$ maximizing $\vari{W}_{k^*}\vari{N}_{k^*}$ achieve $D_{k^*}=D_{k^*}^{WZ}(C_{k^*})$.
Thus, the trivial converse is achieved if $\vari{W}_k\vari{N}_k$ is a constant.

\subsection{LDS for the Quadratic Gaussian Problem}
\label{subs:GaussianLayeredWZBC}

For LDS, we begin by analyzing the channel coding performance and then the source coding performance in terms of achievable channel rates.
Then closely examining the channel rate regions, we determine whether $c=1,r=2$, or $c=2,r=1$ is more advantageous given $\kappa$, $P$, $\vari{N}_1$, $\vari{N}_2$, $\vari{W}_1$, and $\vari{W}_2$. The resultant expression when $\kappa=1$ exhibits an interesting phenomenon which we will make use of in deriving closed form expressions for the $(D_c,D_r)$ tradeoff in LDS.

\subsubsection{Channel Coding Performance}
\label{subs:EffectiveCapacityRegions}

For LDS, we choose channel variables $U_c$ and $U_r$ as independent zero-mean Gaussians with variances $\nu P$ and $\bar{\nu} P$, respectively, with $0\leq \nu\leq 1$, and use the superposition rule $U = U_c + U_r$. Motivated by Costa's construction for the auxiliary random variable $T$, we set $T = \gamma U_r + U_c$.
Using \eqref{eqtn:RCCR1}-\eqref{eqtn:RCCR3}, we obtain achievable $(R^c_{cc},R^c_{cr},R^c_{rr})$ as
\begin{eqnarray}
\label{eqtn:GaussianRCCR1}
R^c_{cc} & = & I(\gamma U_r + U_c;U_c+U_r+W_c) - I(U_r;\gamma U_r + U_c) \nonumber \\
& = & h(U_c+U_r+W_c) + h(U_c) - h(\gamma U_r + U_c,U_c+U_r+W_c)  \nonumber \\
& = & \frac{1}{2}\log\frac{[P+\vari{W}_c] \nu P}{\det\left[\begin{array}{cc}
\gamma^2 \bar{\nu} P +\nu P &\gamma \bar{\nu} P +\nu P \\
\gamma \bar{\nu} P + \nu P & P+\vari{W}_c
\end{array}\right]} \nonumber \\
& = & \frac{1}{2}\log \frac{1+\frac{P}{\vari{W}_c}}{1 + \bar{\nu} P \left(\frac{\gamma^2}{\nu P} + \frac{(1-\gamma)^2}{\vari{W}_c}\right)} \\
\label{eqtn:GaussianRCCR2}
R^c_{cr} & = & I(\gamma U_r + U_c;U_c+U_r+W_r) - I(U_r;\gamma U_r + U_c) \nonumber \\
& = & \frac{1}{2}\log \frac{1+\frac{P}{\vari{W}_r}}{1 + \bar{\nu} P \left(\frac{\gamma^2}{\nu P} + \frac{(1-\gamma)^2}{\vari{W}_r}\right)} \\
\label{eqtn:GaussianRCCR3}
R^c_{rr} & = & I(U_r;\gamma U_r + U_c,U_c+U_r+W_r) \nonumber \\
& = & h(\gamma U_r + U_c,U_c+U_r+W_r) - h(U_c,U_c+W_r) \nonumber \\
& = & h(\gamma U_r + U_c,U_c+U_r+W_r) - h(U_c) - h(W_r) \nonumber \\
& = & \frac{1}{2}\log \frac{\det\left[\begin{array}{cc}
\gamma^2 \bar{\nu} P +\nu P &\gamma \bar{\nu} P +\nu P \\
\gamma \bar{\nu} P + \nu P & P+\vari{W}_r
\end{array}\right]}{\nu P \vari{W}_r}  \nonumber \\
& = & \frac{1}{2}\log \left(1 + \bar{\nu} P \left(\frac{\gamma^2}{\nu P} + \frac{(1-\gamma)^2}{\vari{W}_r}\right) \right) \; .
\end{eqnarray}
Here, \eqref{eqtn:GaussianRCCR2} follows by replacing $W_c$ with $W_r$ in \eqref{eqtn:GaussianRCCR1}.

\subsubsection{Source Coding Performance}

We choose the auxiliary random variables so that $X= Z_r + S_r$ and $Z_r = Z_c + S_c'$ where $S_r$ and $S_c'$ are Gaussian random variables satisfying $S_r\perp Z_r$ and $S'_c\perp Z_c$.
This choice imposes the Markov chain $X-Z_r-Z_c$, and implies $X = Z_c + S_c$ with $S_c \perp Z_c$ and $1 \geq \vari{S}_c \geq \vari{S}_r$.
Using \eqref{eqtn:WZR_Gaussian}, one can then conclude
\begin{eqnarray}
\label{eqtn:GaussianRcc}
R^s_{cc} & = & \frac{1}{2}\log \left( 1-\vari{N}_c + \frac{\vari{N}_c}{\vari{S}_c}\right) \\
\label{eqtn:GaussianRcr}
R^s_{cr} & = & \frac{1}{2}\log \left( 1-\vari{N}_r + \frac{\vari{N}_r}{\vari{S}_c} \right) \\
\label{eqtn:GaussianRrr}
R^s_{rr} & = & \frac{1}{2}\log \left( \frac{1-\vari{N}_r + \frac{\vari{N}_r}{\vari{S}_r}}{1-\vari{N}_r + \frac{\vari{N}_r}{\vari{S}_c}} \right) \; .
\end{eqnarray}
For any achievable triplet $(R^c_{cc},R^c_{cr},R^c_{rr})$, \eqref{eqtn:GaussianRcc}-\eqref{eqtn:GaussianRrr} can be used to find the corresponding best $(D_c,D_r)$.
More specifically, \eqref{eqtn:GaussianRcc}-\eqref{eqtn:GaussianRrr} and \eqref{eqtn:RccCcc}-\eqref{eqtn:RrrCrr} together imply
\begin{eqnarray}
\label{eqtn:varSc}
\frac{1}{\vari{S}_c} & \leq & \min \left\{\frac{2^{2\kappa R^c_{cc}} - 1}{\vari{N}_c}, \frac{2^{2\kappa R^c_{cr}} - 1}{\vari{N}_r}\right\} + 1 \\
\label{eqtn:varSr}
\frac{1}{\vari{S}_r} & \leq & \frac{2^{2\kappa R^c_{rr}}\left(1-\vari{N}_r + \frac{\vari{N}_r}{\vari{S}_c}\right) -1}{\vari{N}_r} + 1 \; .
\end{eqnarray}
Since we have from \eqref{eqtn:WZD_Gaussian} that
\begin{equation}
\label{eqtn:GaussianDk}
D_k = \frac{\vari{N}_k}{1-\vari{N}_k+\frac{\vari{N}_k}{\vari{S}_k}}
\end{equation}
it is easy to conclude that both \eqref{eqtn:varSc} and \eqref{eqtn:varSr} should be satisfied with equality to obtain the best $(D_c,D_r)$, which becomes
\begin{eqnarray}
\label{eqtn:GaussianDc}
D_c & = & \frac{\vari{N}_c}{1 + \vari{N}_c\phi}\\
\label{eqtn:GaussianDr}
D_r & = & \frac{\vari{N}_r}{1 + \vari{N}_r\phi}2^{-2\kappa R^c_{rr}}
\end{eqnarray}
where
\begin{equation}
\label{eqtn:GaussianPhi}
\phi = \min\left\{\frac{2^{2\kappa R^c_{cc}} - 1}{\vari{N}_c}, \frac{2^{2\kappa R^c_{cr}} - 1}{\vari{N}_r}\right\} \; .
\end{equation}

Now, if
\begin{equation}
\label{eqtn:GaussianPhiCondition}
\frac{2^{2\kappa R^c_{cc}} - 1}{\vari{N}_c} \geq \frac{2^{2\kappa R^c_{cr}} - 1}{\vari{N}_r}
\end{equation}
then $D_r = \vari{N}_r 2^{-2 \kappa(R^c_{cr} + R^c_{rr})}$.
But in the LDS, we have $R^c_{cr}+R^c_{rr} = C_r = \frac{1}{2}\log \left(1+\frac{P}{W_r}\right)$, implying $D_r=D_r^{WZ}(\kappa C_r)$, regardless of the chosen parameters. Moreover, $D_c$ will be minimized when \eqref{eqtn:GaussianPhiCondition} is satisfied with equality.
Thus, it suffices to consider only
\begin{equation}
\label{eqtn:GaussianPhiCondition2}
\frac{2^{2\kappa R^c_{cc}} - 1}{\vari{N}_c} \leq \frac{2^{2\kappa R^c_{cr}} - 1}{\vari{N}_r}
\end{equation}
because equality in \eqref{eqtn:GaussianPhiCondition2} already gives $D_r=D_r^{WZ}(\kappa C_r)$.
We thus have
\begin{eqnarray}
\label{eqtn:GaussianDc2}
D_c & = & \vari{N}_c2^{-2\kappa R^c_{cc}}\\
D_r & = & \frac{\vari{N}_r}{1+\frac{\vari{N}_r}{\vari{N}_c}[2^{2 \kappa R^c_{cc}} - 1]}2^{-2 \kappa R^c_{rr}} \nonumber \\
\label{eqtn:GaussianDr2}
& = & \frac{\vari{N}_r}{1+\vari{N}_r\left[\frac{1}{D_c}-\frac{1}{\vari{N}_c}\right]}2^{-2 \kappa R^c_{rr}}\; .
\end{eqnarray}

\subsubsection{Choosing the Refinement Receiver}

Note that setting $\nu =1$ reduces LDS to CDS.
This is regardless of which receiver is designated as $c$ or $r$.
This simple observation, along with the discussion in Section~\ref{subs:GaussianScheme0}, leads to the following lemma.
\begin{lemma}
In order to maximize the performance of LDS, one must set $c$ and $r$ so that
\begin{equation}
\label{eqtn:LemmaWN}
\frac{\left(1+\frac{P}{\vari{W}_c}\right)^{\kappa}-1}{\vari{N}_c} \leq \frac{\left(1+\frac{P}{\vari{W}_r}\right)^{\kappa}-1}{\vari{N}_r} \; .
\end{equation}
\end{lemma}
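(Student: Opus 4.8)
The plan is to establish the lemma in the equivalent ``without loss of generality'' form: if the roles $c,r$ are assigned so that \eqref{eqtn:LemmaWN} is \emph{violated}, then the whole $(D_1,D_2)$ distortion region achieved by LDS under this assignment is dominated by the region obtained by swapping $c$ and $r$. Fix the two physical receivers and, relabeling if necessary, assume receiver $1$ has the smaller combined quantity, so that \eqref{eqtn:LemmaWN} holds for the assignment $c=1,r=2$ (call it Assignment~A) and fails for $c=2,r=1$ (Assignment~B). It then suffices to show that every pair $(D_1,D_2)$ achievable by LDS under Assignment~B is dominated componentwise by some pair achievable by LDS under Assignment~A.

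Write $D_k^{\rm CDS}$ for the CDS distortion at receiver $k$, i.e. the value in \eqref{eqtn:Dk_Scheme0}, and recall from the paragraph after \eqref{eqtn:Dk_Scheme0} that the receiver attaining the minimum there---here receiver $1$---satisfies $D_1^{\rm CDS}=D_1^{WZ}(\kappa C_1)$. First I would bound the Assignment~B region from below. Take any $(D_1,D_2)$ achievable by LDS under Assignment~B, where receiver $1$ plays the role of $r$ and receiver $2$ that of $c$. The trivial converse \eqref{eqtn:DWZ} gives $D_1\ge D_1^{WZ}(\kappa C_1)=D_1^{\rm CDS}$. For the second coordinate I would use \eqref{eqtn:GaussianRCCR1}--\eqref{eqtn:GaussianRCCR2}: their denominators are always at least $1$, hence $R^c_{cc}\le C_c$ and $R^c_{cr}\le C_r$ for every admissible $\nu,\gamma$, so $\phi$ in \eqref{eqtn:GaussianPhi} never exceeds $\phi^{\ast}\eqdef\min\{[(1+P/\vari{W}_2)^\kappa-1]/\vari{N}_2,\ [(1+P/\vari{W}_1)^\kappa-1]/\vari{N}_1\}$, which is exactly the minimum appearing in \eqref{eqtn:Dk_Scheme0} for receiver $2$. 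Since $D_c=\vari{N}_c/(1+\vari{N}_c\phi)$ is decreasing in $\phi$, this yields $D_2=D_c\ge\vari{N}_2/(1+\vari{N}_2\phi^{\ast})=D_2^{\rm CDS}$.

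Next I would exhibit the dominating pair under Assignment~A. Since setting $\nu=1$ reduces LDS to CDS regardless of the labeling, LDS under Assignment~A achieves the pair $(D_1^{\rm CDS},D_2^{\rm CDS})$; and by the previous step $D_1^{\rm CDS}\le D_1$ and $D_2^{\rm CDS}\le D_2$. Thus the LDS region under Assignment~A contains a point componentwise no worse than any point of the LDS region under Assignment~B, which proves that one never loses by choosing the labeling satisfying \eqref{eqtn:LemmaWN}.

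The main obstacle is the second-coordinate bound $D_2\ge D_2^{\rm CDS}$ under the ``wrong'' labeling: one must rule out that diverting power to the refinement layer could \emph{improve} the common-layer distortion $D_c$. This is settled by the monotonicity observation that the denominators $1+\bar{\nu}P(\gamma^2/(\nu P)+(1-\gamma)^2/\vari{W}_\bullet)$ in \eqref{eqtn:GaussianRCCR1}--\eqref{eqtn:GaussianRCCR2} are minimized (equal to $1$) precisely at $\bar{\nu}=0$, so both common-layer channel rates, and hence $\phi$ and the resulting $D_c$, are optimized exactly when LDS degenerates to CDS---any other parameter choice only helps the refinement receiver at the common receiver's expense. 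The remaining steps are bookkeeping with \eqref{eqtn:Dk_Scheme0}, \eqref{eqtn:DWZ}, and the reduction $\nu=1\Rightarrow{}$CDS already noted in the text.
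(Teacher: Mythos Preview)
Your proposal is correct and follows essentially the same route as the paper: both arguments hinge on (i) the reduction $\nu=1\Rightarrow$ CDS regardless of labeling, (ii) the trivial converse \eqref{eqtn:DWZ} to bound the refinement receiver's distortion, and (iii) the observation that $R^c_{cc}\le C_c$ and $R^c_{cr}\le C_r$ for all $(\nu,\gamma)$, so $\phi$ in \eqref{eqtn:GaussianPhi} and hence $D_c$ cannot improve beyond their CDS values. The only cosmetic difference is that you obtain (iii) directly from the denominators in \eqref{eqtn:GaussianRCCR1}--\eqref{eqtn:GaussianRCCR2} being $\ge 1$, whereas the paper maximizes over $\gamma$ via Costa's parameter to reach the same bound; your framing as componentwise domination of the Assignment~B region by the CDS point (which Assignment~A attains) is equivalent to the paper's statement that under the wrong labeling LDS cannot beat CDS.
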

\begin{remark}
When $\kappa=1$, \eqref{eqtn:LemmaWN} translates to
\begin{equation}
\label{eqtn:GoldenRule}
\vari{W}_c\vari{N}_c \geq \vari{W}_r\vari{N}_r \; .
\end{equation}
Therefore, the product $\vari{W}_k\vari{N}_k$ determines the {\em combined} channel and side information quality, so that the ``better'' receiver is chosen to receive the RL information. Recall from the discussion in Section~\ref{subs:GaussianScheme0} that if $\vari{W}_k\vari{N}_k$ is constant, then in fact there is no need for refinement, as CDS already achieves the optimal performance.
\end{remark}
\begin{proof}
When $\nu=1$, i.e., when all the power is allocated to the CL, LDS achieves the same performance as CDS. In particular, it achieves the channel rate point
\begin{eqnarray*}
R^c_{cc} & = & C_c \;\; = \;\frac{1}{2}\log \left(1 + \frac{P}{\vari{W}_c}\right) \\
R^c_{cr} & = & C_r \;\; = \;\frac{1}{2}\log \left(1 + \frac{P}{\vari{W}_r}\right) \\
R^c_{rr} & = & 0 \; .
\end{eqnarray*}
If \eqref{eqtn:LemmaWN} does not hold, then from \eqref{eqtn:Dk_Scheme0}, it follows that LDS also achieves $D_r = D^{WZ}_r(\kappa C_r)$ and some $D_c > D^{WZ}_c(\kappa C_c)$.
Now, if we set $\nu<1$, it is obvious that $D_r$ cannot be lowered any further.
We claim that $D_c$ cannot be lowered either.
Therefore, LDS would not be able to achieve a better $(D_c,D_r)$ than what CDS achieves.
On the other hand, sending the refinement to receiver $c$ could potentially result in a better performance.

Towards proving the above claim, observe from \eqref{eqtn:GaussianPhi} that it suffices to show that neither $R^c_{cc}$ nor $R^c_{cr}$ can increase when $\nu<1$ compared to the case $\nu = 1$.
That, in turn, follows by closely examining the expressions for $R^c_{cc}$ and $R^c_{cr}$ in Section~\ref{subs:EffectiveCapacityRegions}.
In particular, for LDS, both \eqref{eqtn:GaussianRCCR1} and \eqref{eqtn:GaussianRCCR2} will be maximized by their corresponding optimal Costa parameters, i.e., by $\gamma = \frac{\nu P}{\nu P+\vari{W}_c}$ and by $\gamma = \frac{\nu P}{\nu P+\vari{W}_r}$, respectively.
This results in $R^c_{cc} = \frac{1}{2}\log\left(1 + \frac{\nu P}{\vari{W}_c}\right)$ and $R^c_{cr} = \frac{1}{2}\log\left(1 + \frac{\nu P}{\vari{W}_r}\right)$ as the maximum possible values, which are strictly smaller than $C_c$ and $C_r$, respectively.
Therefore, the proof is complete.
\end{proof}

\subsection{Performance Comparisons for the Bandwidth Matched Case: $\kappa=1$}

We first derive the closed-form $(D_c,D_r)$ tradeoff for LDS.

\begin{lemma}
\label{lmma:ClosedFormRCCR}
A distortion pair $(D_c,D_r)$ is achievable using LDS if and only if $D_r\geq D_{\rm LDS}(D_c)$, where $D_{\rm LDS}(D_c)$ is the convex hull of
\begin{equation}
\label{eqtn:GaussianRCCR_Dr}
D_{\rm LDS}^*(D_c) = \frac{\vari{N}_r\vari{N}_c^2}{D_c\vari{N}_c
+\vari{N}_r (\vari{N}_c-D_c)}
\cdot \left\{\begin{array}{ll}
\frac{\vari{W}_r D_c}{(\vari{W}_r-\vari{W}_c)\vari{N}_c+(P + \vari{W}_c)D_c} & \vari{W}_c>\vari{W}_r \\
\frac{\vari{W}_c}{P+\vari{W}_c} & \vari{W}_c\leq\vari{W}_r
\end{array}\right.
\end{equation}
for
\[
\frac{\vari{N}_c\vari{W}_c}{P+\vari{W}_c} \leq D_c \leq D_c^{\max}
\]
with
\begin{equation}
\label{eqtn:GaussianRCCR_Dcmax}
D_c^{\max} = \vari{N}_c \cdot\left\{\begin{array}{ll}
\min\left\{1,\frac{\vari{N}_r(\vari{W}_c-\vari{W}_r)}{(P+\vari{W}_c)(\vari{N}_r-\vari{N}_c)}\right\} &
\vari{N}_c<\vari{N}_r , \vari{W}_c>\vari{W}_r \\
1 & \vari{N}_c \geq \vari{N}_r , \vari{W}_c\geq\vari{W}_r \\
\frac{\vari{W}_c}{P+\vari{W}_c} + \frac{P(\vari{W}_c\vari{N}_c-\vari{W}_r\vari{N}_r)}{(P+\vari{W}_c)(\vari{N}_c-\vari{N}_r)\vari{W}_r}  & \vari{N}_c > \vari{N}_r, \vari{W}_c<\vari{W}_r
\end{array}\right. \; .
\end{equation}
\end{lemma}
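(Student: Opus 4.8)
The plan is to reduce the claimed $(D_c,D_r)$ tradeoff to a one-parameter optimization over the power-split parameter $\nu$ (equivalently, over $R^c_{cc}$), starting from the expressions already derived in Section~\ref{subs:GaussianLayeredWZBC}. Recall that for $\kappa=1$, once the Costa parameters $\gamma$ are chosen optimally in \eqref{eqtn:GaussianRCCR1}--\eqref{eqtn:GaussianRCCR3}, we have $R^c_{cc}=\frac12\log(1+\tfrac{\nu P}{\vari{W}_c})$, $R^c_{cr}=\frac12\log(1+\tfrac{\nu P}{\vari{W}_r})$, and $R^c_{rr}=\frac12\log\big(1+\tfrac{\bar\nu P(P+\vari{W}_r)}{\nu P+\vari{W}_r}\cdot\tfrac{1}{P+\vari{W}_r}\big)$-type expression, so that $R^c_{cr}+R^c_{rr}=C_r$. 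First I would verify that among all feasible $(\gamma,\nu)$ the optimal $\gamma$'s are indeed the point-to-point Costa values (this is the computation sketched in the proof of the preceding lemma), so that the channel rate triple becomes a function of $\nu$ alone. Then I would substitute these into \eqref{eqtn:GaussianDc2}--\eqref{eqtn:GaussianDr2}: the regime \eqref{eqtn:GaussianPhiCondition2} gives $D_c=\vari{N}_c 2^{-2R^c_{cc}}=\tfrac{\vari{N}_c\vari{W}_c}{\nu P+\vari{W}_c}$, which is a strictly decreasing bijection from $\nu\in(0,1]$ onto an interval with left endpoint $\tfrac{\vari{N}_c\vari{W}_c}{P+\vari{W}_c}$; this pins down the lower limit on $D_c$ in the statement. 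Inverting to express $\nu$ (hence $\bar\nu$) in terms of $D_c$ and plugging into the formula for $D_r$ in \eqref{eqtn:GaussianDr2} should, after simplification, yield exactly $D_{\rm LDS}^*(D_c)$ in \eqref{eqtn:GaussianRCCR_Dr}, with the two cases $\vari{W}_c>\vari{W}_r$ and $\vari{W}_c\le\vari{W}_r$ emerging from the sign of $\vari{W}_c-\vari{W}_r$ in the expression for $R^c_{rr}$ (when $\vari{W}_c\le\vari{W}_r$ the refinement channel is not helped by the CSI in the way that improves things, and $R^c_{rr}$ collapses to the $\nu$-independent-looking form $\frac12\log\tfrac{P+\vari{W}_c}{\vari{W}_c}$ evaluated appropriately).

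Next I would determine $D_c^{\max}$, the largest $D_c$ worth considering. There are two competing constraints. One is simply $\vari{S}_c\le 1$, i.e.\ $D_c\le\vari{N}_c$, coming from the source side (the CL quantizer cannot be coarser than "no information"). The other is the boundary of the validity regime \eqref{eqtn:GaussianPhiCondition2}: beyond the crossover point where $\tfrac{2^{2R^c_{cc}}-1}{\vari{N}_c}=\tfrac{2^{2R^c_{cr}}-1}{\vari{N}_r}$, we are in regime \eqref{eqtn:GaussianPhiCondition} and there $D_r$ is already at its floor $D_r^{WZ}(C_r)$ while $D_c$ only gets worse, so there is no point increasing $D_c$ past that crossover. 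Computing that crossover value of $\nu$ — solving $\tfrac{\nu P}{\vari{W}_c}/\vari{N}_c=\tfrac{\nu P}{\vari{W}_r}/\vari{N}_r$, which only has a solution in $(0,1]$ under suitable sign conditions on $\vari{W}_c-\vari{W}_r$ and $\vari{N}_c-\vari{N}_r$ — and translating back through $D_c=\tfrac{\vari{N}_c\vari{W}_c}{\nu P+\vari{W}_c}$ produces the three-way case split in \eqref{eqtn:GaussianRCCR_Dcmax}. I expect the case $\vari{N}_c\ge\vari{N}_r,\vari{W}_c\ge\vari{W}_r$ to give $D_c^{\max}=\vari{N}_c$ (the source constraint binds first), the case $\vari{N}_c<\vari{N}_r,\vari{W}_c>\vari{W}_r$ to give the $\min$ of the two, and the last case to give the explicit crossover expression.

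Finally, the convexification: the raw tradeoff $D_{\rm LDS}^*(D_c)$ need not be convex, and since time-sharing between two LDS operating points (with a fixed CL/RL assignment) is available, the achievable boundary is its lower convex envelope $D_{\rm LDS}(D_c)$; this is exactly what the statement asserts, so here I just invoke the standard time-sharing argument and define $D_{\rm LDS}$ as the convex hull. The "only if" direction — that no point strictly below $D_{\rm LDS}(D_c)$ is achievable by LDS — follows because every LDS operating point, for every choice of $(\nu,\gamma)$ and CL/RL labelling, has already been accounted for: the optimality of the Costa parameters shows no other $\gamma$ helps, equality in \eqref{eqtn:varSc}--\eqref{eqtn:varSr} is forced by monotonicity of \eqref{eqtn:GaussianDk}, and the preceding Choosing-the-Refinement-Receiver lemma fixes which labelling to use; what remains is a pure parametric curve whose convex hull is $D_{\rm LDS}$. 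The main obstacle will be the algebra of step two and three: carrying the substitution $\nu\mapsto D_c$ through the $R^c_{rr}$ expression and matching the result term-by-term with \eqref{eqtn:GaussianRCCR_Dr} and \eqref{eqtn:GaussianRCCR_Dcmax}, especially keeping the case distinctions consistent, is delicate but mechanical; I would organize it by first treating $\vari{W}_c\le\vari{W}_r$ (where $R^c_{rr}$ simplifies) and then $\vari{W}_c>\vari{W}_r$ separately.
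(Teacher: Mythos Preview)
Your plan has a genuine gap at the very first step: you assume that the optimal Costa parameter $\gamma$ is the point-to-point value $\frac{\nu P}{\nu P+\vari{W}_c}$ (or $\frac{\nu P}{\nu P+\vari{W}_r}$), citing the computation in the preceding lemma. That computation only shows those choices \emph{maximize $R^c_{cc}$ and $R^c_{cr}$ individually}; it does not show they are optimal for the $(D_c,D_r)$ tradeoff. In fact they are not. The paper keeps \emph{both} $\nu$ and $\gamma$ free and writes $D_c$ and $D_r$ via the auxiliary quantities
\[
a(\nu,\gamma)=\bar\nu\Big(\tfrac{\vari{W}_c}{\nu P}\gamma^2+(1-\gamma)^2\Big),\qquad
b(\nu,\gamma)=\bar\nu\Big(\tfrac{\vari{W}_r}{\nu P}\gamma^2+(1-\gamma)^2\Big),
\]
then fixes $D_c$ (equivalently fixes $a$) and maximizes $b$ subject to the constraint \eqref{eqtn:GaussianPhiCondition2}. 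Because $b=a-\tfrac{\bar\nu\gamma^2}{\nu P}(\vari{W}_c-\vari{W}_r)$ and also $b=\tfrac{\vari{W}_r}{\vari{W}_c}a-\bar\nu(1-\gamma)^2\big(\tfrac{\vari{W}_r}{\vari{W}_c}-1\big)$, the maximizer is $\gamma=0$ when $\vari{W}_c>\vari{W}_r$ and $\gamma=1$ when $\vari{W}_c\le\vari{W}_r$. This $\gamma\in\{0,1\}$ outcome is precisely the surprising feature the paper highlights right after the lemma; it is \emph{not} the Costa value, and your verification step would fail.

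This error cascades. With your assumed $\gamma$, the crossover equation you write for $D_c^{\max}$, namely $\tfrac{\nu P/\vari{W}_c}{\vari{N}_c}=\tfrac{\nu P/\vari{W}_r}{\vari{N}_r}$, is independent of $\nu$ and therefore cannot produce any crossover value; the three-way split in \eqref{eqtn:GaussianRCCR_Dcmax} does not emerge from it. In the paper, $D_c^{\max}$ comes from checking when the constraint \eqref{eqtn:GaussianPhiCondition2} (rewritten as an upper bound on $b$ in terms of $a$) is active at the optimal $\gamma\in\{0,1\}$; this is where the three cases appear. To repair your argument, drop the Costa-value ansatz, carry $\gamma$ as a free parameter alongside $\nu$, recast the optimization as ``fix $a$, maximize $b$,'' and use the two identities above to pin $\gamma$ to $0$ or $1$ before inverting to $D_c$.
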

\begin{remark}
The cases $\vari{N}_c \leq \vari{N}_r , \vari{W}_c<\vari{W}_r $ and $\vari{N}_c < \vari{N}_r , \vari{W}_c=\vari{W}_r$ are not considered in \eqref{eqtn:GaussianRCCR_Dcmax} because they are prohibited by the rule \eqref{eqtn:GoldenRule}.
The same rule also guarantees $\frac{\vari{N}_c\vari{W}_c}{P+\vari{W}_c}\leq D_c^{\max}\leq\vari{N}_c$.
\end{remark}

As a byproduct of the proof, which is deferred to Appendix~\ref{subs:App_lmma:ClosedFormRCCR}, we observe that the Costa parameter $\gamma$ is either $0$ or $1$, depending on whether $\vari{W}_c>\vari{W}_r$ or $\vari{W}_c\leq\vari{W}_r$, respectively. When it is $0$, we have $T=U_c$. On the other hand, when $\gamma=1$, we have $T=U=U_c+U_r$.
Thus, setting the auxiliary codeword $T^m$ to be the same as the channel input $U^m$ constitutes the optimal choice.
To the best of our knowledge, this choice, which is typically encountered in DPC for binary symmetric channels, has never been obtained as the optimal choice involving Gaussian channels.

\begin{figure}
\centering
\subfigure[]{
\includegraphics[width=0.46\columnwidth]{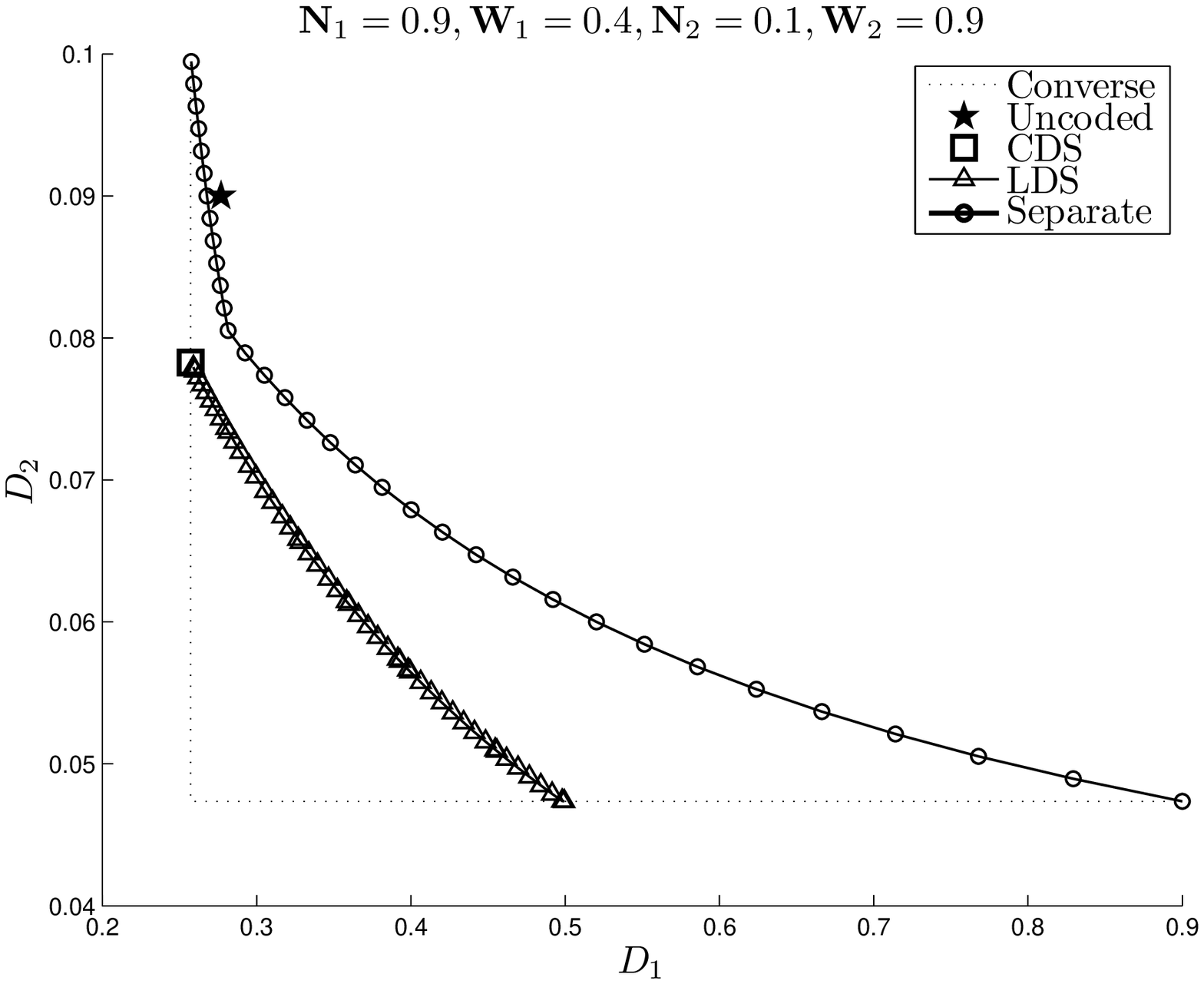}
}
\subfigure[]{
\includegraphics[width=0.46\columnwidth]{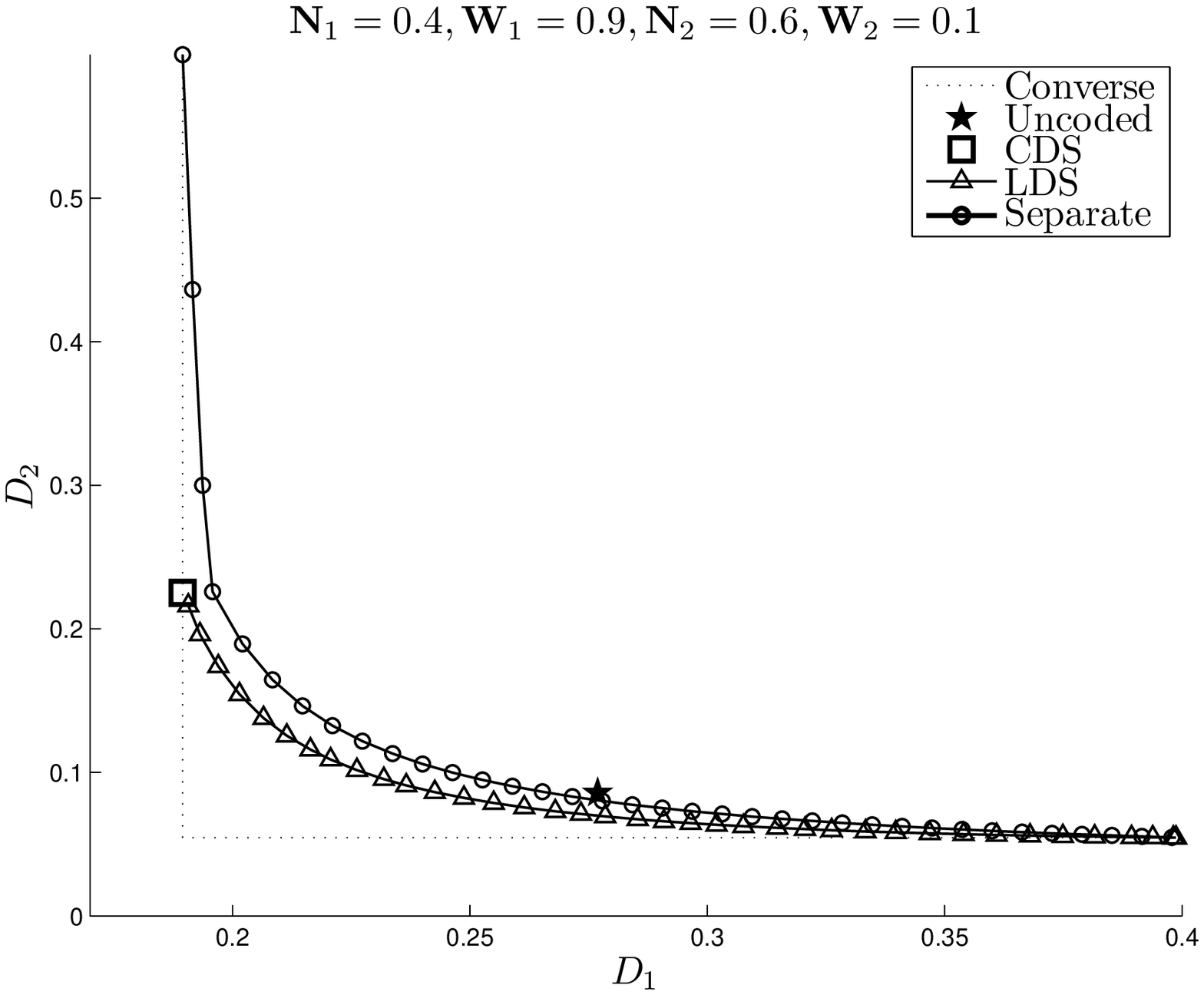}
}
\subfigure[]{
\includegraphics[width=0.46\columnwidth]{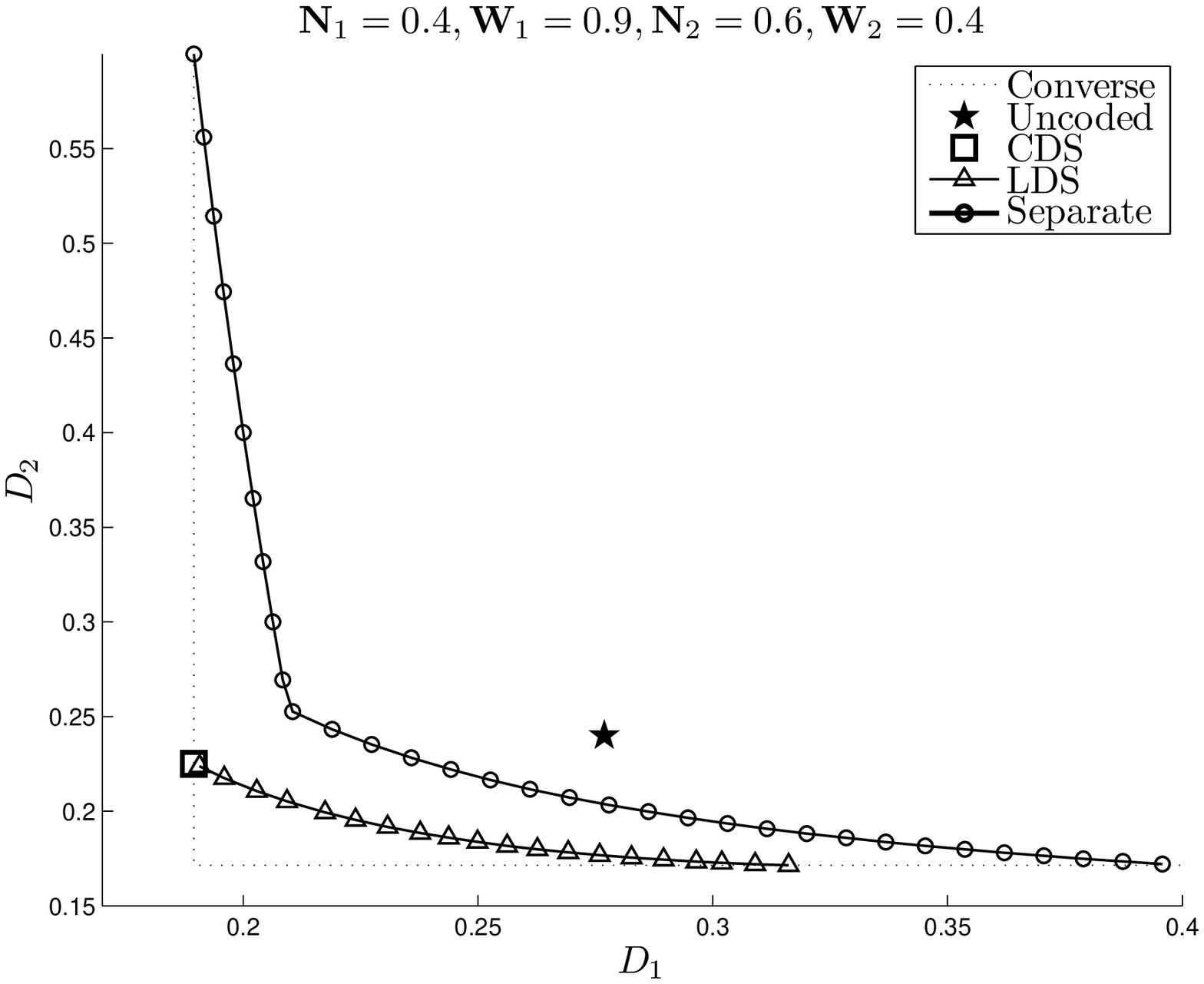}
}
\subfigure[]{
\includegraphics[width=0.46\columnwidth]{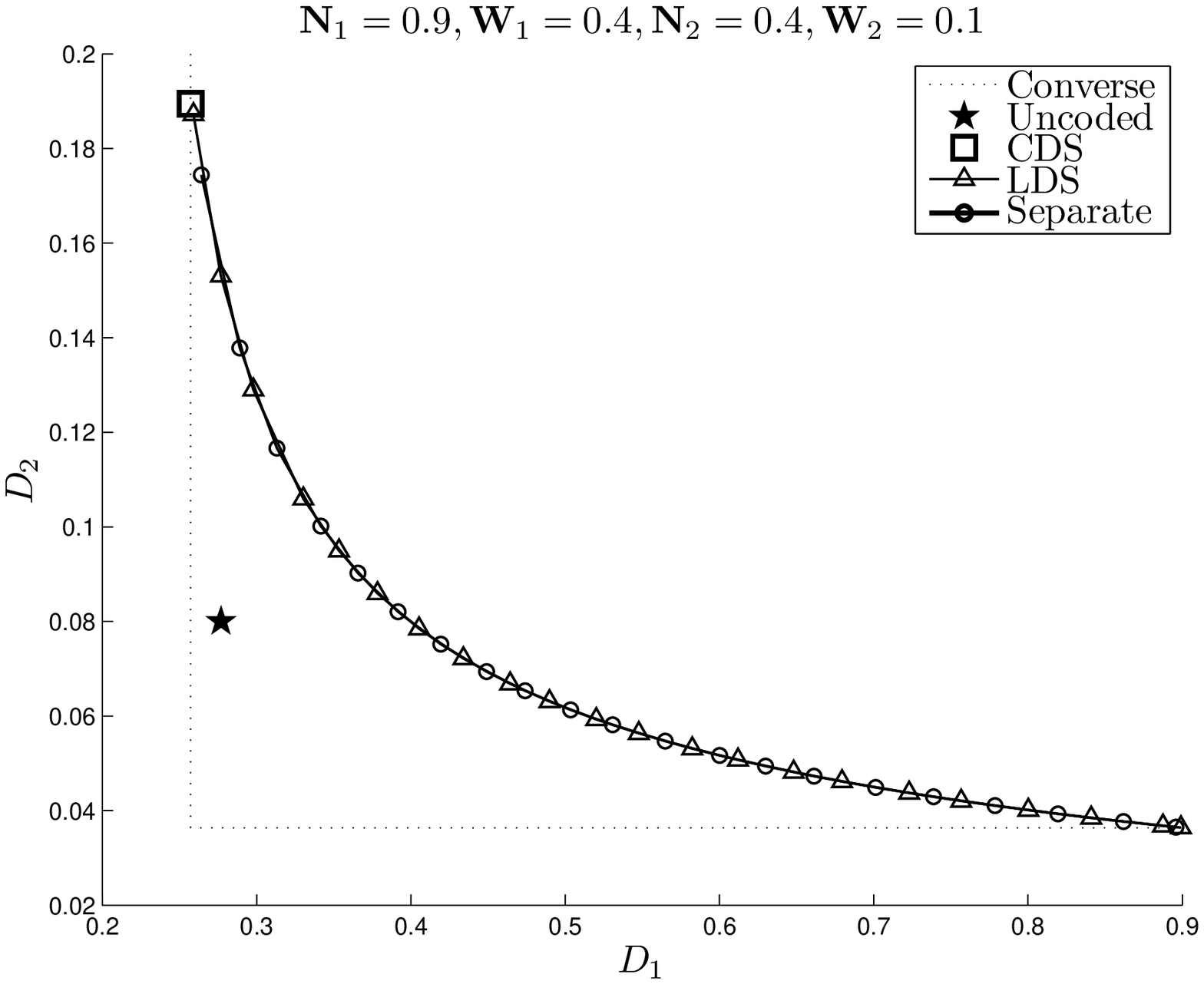}
}
\subfigure[]{
\includegraphics[width=0.46\columnwidth]{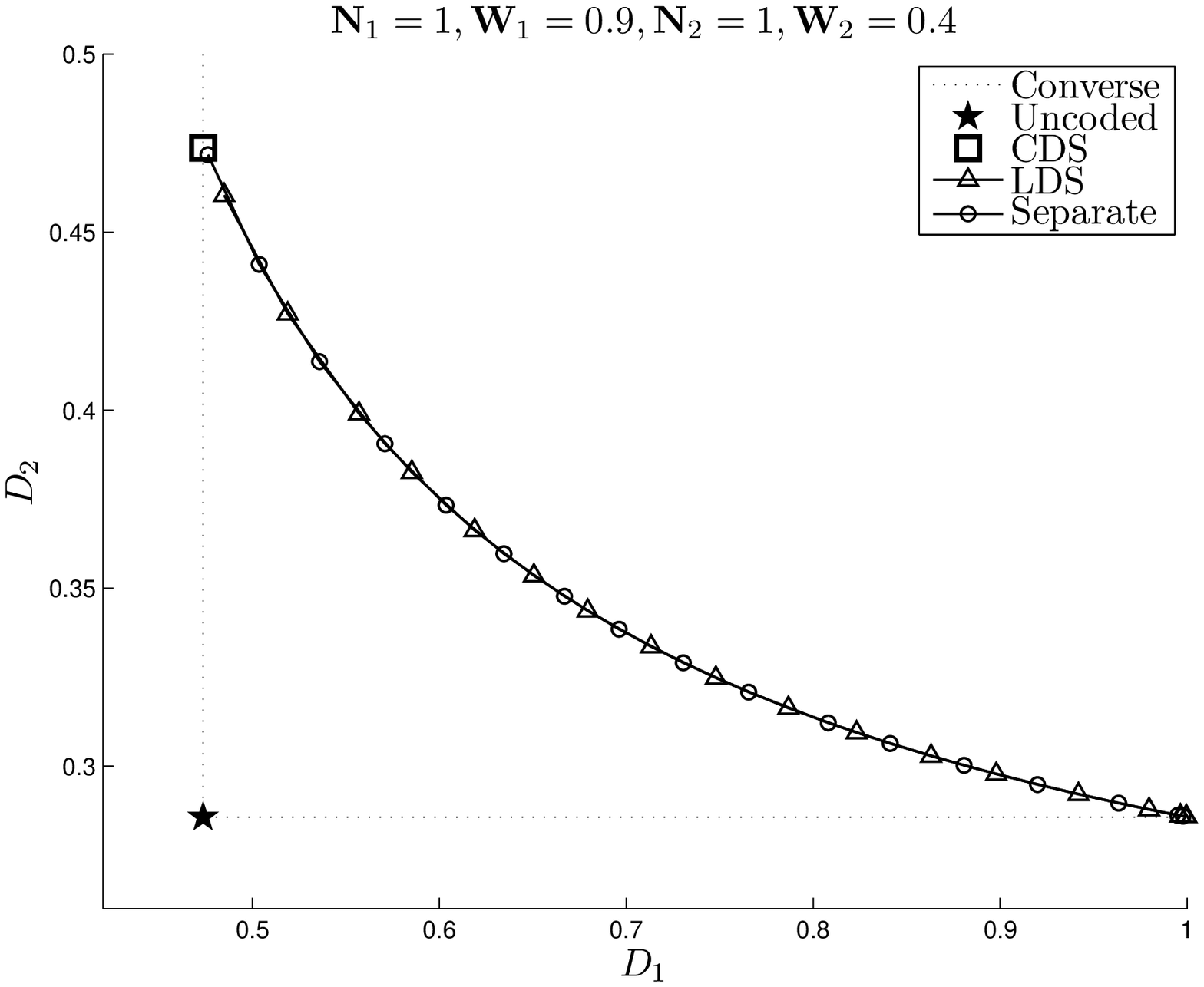}
}
\subfigure[]{
\includegraphics[width=0.46\columnwidth]{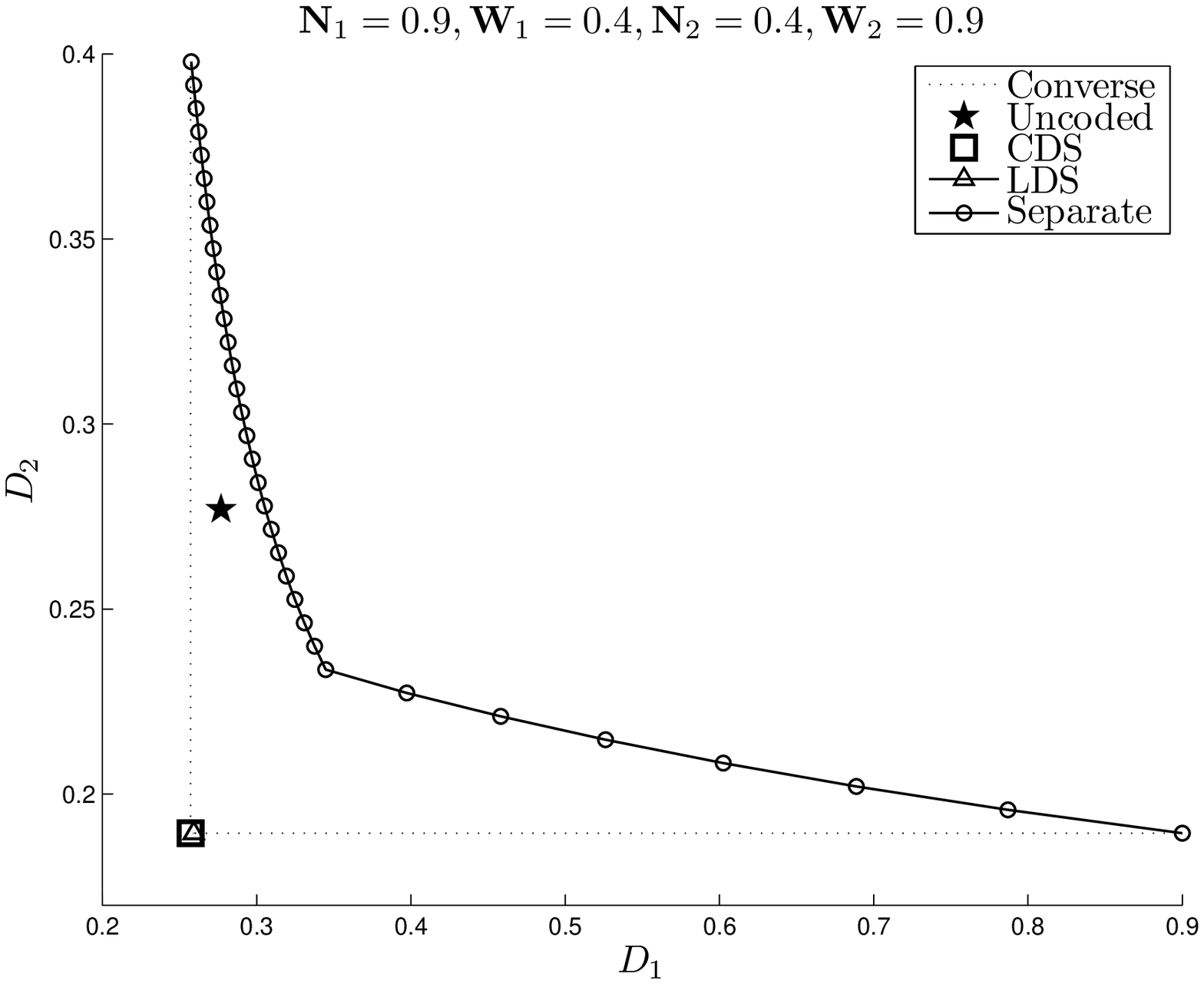}
}
\caption{Performance comparison for Gaussian sources and channels. In (a)-(e), $\vari{N}_1\vari{W}_1 > \vari{N}_2\vari{W}_2$, and therefore the choice $c=1$, $r=2$ is made.
In addition, in (e), $\vari{N}_1=\vari{N}_2=1$, implying that there is no side information at either receiver and hence uncoded transmission is optimal.
In (f), $\vari{N}_1\vari{W}_1 = \vari{N}_2\vari{W}_2$ making CDS optimal.}\label{figr:Gaus}
\end{figure}

We now compare LDS with other schemes for the WZBC problem. The performance of uncoded transmission is governed by \eqref{eqtn:WZ_UncodedGaussian}. The distortion trade-off of separate coding is given by the following lemma, which is proved in Appendix~\ref{subs:App_lmma:GaussianSeparate}. Recall that the subscripts $b$ and $g$ refer to good and bad channels, i.e., the Markov chain $U-V_g-V_b$ holds for all channel inputs $U$.

\begin{lemma}
\label{lmma:GaussianSeparate}
For the quadratic Gaussian case with $\kappa=1$, the distortion pair $(D_b,D_g)$ with $D_b^{WZ}(C_b)\leq D_b\leq\vari{N}_b$ is achievable using separate coding if and only if $D_g\geq D_{\rm SEP}(D_b)$ where $D_{\rm SEP}(D_b)$ is the convex hull of
\begin{equation}
\label{eqtn:GaussianSeparateBetterSideBetterChannel2}
D_{\rm SEP}^*(D_b)  = \frac{\vari{N}_g\vari{N}_b^2 \vari{W}_g D_b}{\Big(D_b\vari{N}_b
+\vari{N}_g (\vari{N}_b-D_b)\Big)\Big((\vari{W}_g-\vari{W}_b)\vari{N}_b+(P + \vari{W}_b)D_b \Big) }
\end{equation}
when $X-Y_g-Y_b$, and
\begin{equation}
\label{eqtn:GaussianSeparateBetterSideWorseChannel2}
D_{\rm SEP}^*(D_b) = \frac{\vari{N}_g}{\Big((\vari{W}_g-\vari{W}_b)\vari{N}_b+(P + \vari{W}_b)D_b \Big)}
\max\Bigg\{\vari{W}_gD_b, \frac{\vari{N}_b\Big(\vari{N}_g\vari{W}_g-(P+\vari{W}_b)D_b-\vari{N}_b(\vari{W}_g-\vari{W}_b)\Big)}{\vari{N}_g-\vari{N}_b}\Bigg\}
\end{equation}
when $X-Y_b-Y_g$.
\end{lemma}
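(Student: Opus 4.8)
The plan is to characterize the achievable $(D_b,D_g)$ region by intersecting the separate source coding region ${\cal R}^*(D_b,D_g)$ with the capacity region ${\cal C}(1)$ and then optimizing. Both regions are described by the union over auxiliary random variables subject to linear inequalities, so the whole problem reduces to a (somewhat involved) Gaussian optimization. I would split into the two degradation orders, $X-Y_g-Y_b$ and $X-Y_b-Y_g$, exactly as in \eqref{eqtn:SeparateRate1}--\eqref{eqtn:SeparateRate2}, and treat them separately since the rate constraints differ.

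First I would parametrize the source side. Using the Gaussian Wyner-Ziv test channels from Section~\ref{subs:P2P}, write $X = Z_b + S_b$ and (when $X-Y_b-Y_g$) $Z_b = Z_g + S'$, or the reversed chain when $X-Y_g-Y_b$, with all the $S$'s independent Gaussians. By \eqref{eqtn:WZDR_Gaussian} and \eqref{eqtn:GaussianDk}, the distortions $D_b,D_g$ are monotone functions of the quantization variances, so the rate constraints \eqref{eqtn:SeparateRate1}--\eqref{eqtn:SeparateRate2} translate directly into lower bounds on $R_b$ and $R_g$ expressed through $D_b$ and $D_g$; in particular $R_b \geq \frac12\log\frac{\vari{N}_b}{D_b}$ and a similar but order-dependent expression for the cumulative rate $R_g$. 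Next I would parametrize the channel side: for the degraded Gaussian broadcast channel with $U-V_g-V_b$, the capacity region ${\cal C}(1)$ from \eqref{eqtn:SeparateCapacity1}--\eqref{eqtn:SeparateCapacity2} is the standard superposition region, obtained by splitting power $P = P_b + P_g$ between the two layers, giving $R_b \leq \frac12\log\bigl(1 + \frac{P_b}{P_g+\vari{W}_b}\bigr)$ and $R_g \leq R_b + \frac12\log\bigl(1+\frac{P_g}{\vari{W}_g}\bigr)$.

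The core step is then to combine these: a pair $(D_b,D_g)$ is achievable iff there is a power split for which the channel's $(R_b,R_g)$ region meets the source's required $(R_b,R_g)$ region. Since larger $R_b$ helps the bad receiver and a larger cumulative $R_g$ helps the good one, and increasing $P_b$ trades one for the other monotonically, the optimal operating point will have both source rate constraints tight and the power split chosen to balance them. I would solve for $P_b$ (equivalently $P_g$) from the tight constraint $R_b = \frac12\log\frac{\vari{N}_b}{D_b}$, substitute into the expression for the residual rate available to the refinement layer, and read off the minimal $D_g$. Carrying this through for the chain $X-Y_g-Y_b$ should yield \eqref{eqtn:GaussianSeparateBetterSideBetterChannel2} after algebraic simplification; for $X-Y_b-Y_g$, the $[\cdot]^+$ in \eqref{eqtn:SeparateRate2} produces two regimes depending on whether the refinement rate constraint is binding, which is precisely the source of the $\max\{\cdot,\cdot\}$ in \eqref{eqtn:GaussianSeparateBetterSideWorseChannel2}. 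Finally, since the achievable region is the convex closure over all auxiliary variables, and the raw expression $D_{\rm SEP}^*$ need not already be convex in $D_b$, the stated region is its convex hull; I would note that time-sharing between operating points realizes exactly this hull, and that the endpoint $D_b = \vari{N}_b$ corresponds to allocating zero rate to the bad layer while $D_b = D_b^{WZ}(C_b)$ corresponds to the full-power single-layer point.

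The main obstacle I anticipate is purely computational bookkeeping: correctly handling the cumulative-versus-incremental rate convention at the good receiver, keeping the power-split variable and the quantization-variance variables from getting tangled, and verifying that the optimum indeed has all inequalities tight (so that no genuine two-dimensional optimization survives). The $[\cdot]^+$ case split in the $X-Y_b-Y_g$ regime is the delicate part — one must check exactly when the good receiver's side information already subsumes the bad layer's description, making the incremental rate zero, and confirm that this boundary is what \eqref{eqtn:GaussianSeparateBetterSideWorseChannel2} encodes. Establishing that the achievable distortions are monotone in the design parameters, which justifies driving every constraint to equality, is the one structural fact I would prove carefully rather than assert; everything else is the algebra of Gaussian mutual informations already set up in Sections~\ref{subs:P2P} and the separate-coding subsection.
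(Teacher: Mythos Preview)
Your proposal is correct and follows essentially the same route as the paper's own proof: parametrize the channel by a power split (the paper uses $\nu P$ and $\bar\nu P$ in place of your $P_b,P_g$), parametrize the source by Gaussian Wyner--Ziv test channels, argue by monotonicity that the bad-receiver rate constraint \eqref{eqtn:GaussianSeparateWorseChannel} must bind, solve for the power split from that tight constraint, and substitute into the good-receiver constraint to read off $D_g$; the $[\cdot]^+$ in \eqref{eqtn:SeparateRate2} indeed produces the $\max$ in \eqref{eqtn:GaussianSeparateBetterSideWorseChannel2}. The only point the paper makes slightly more explicit than your sketch is the monotonicity justification: it observes that the right-hand side of the good-receiver constraint is \emph{decreasing} in $\nu$, so if \eqref{eqtn:GaussianSeparateWorseChannel} is slack one can reduce $\nu$ without violating the other constraint --- this is exactly the ``structural fact'' you flagged as the one thing to prove carefully.
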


The relative performance of the various schemes will be discussed case by case.
\begin{enumerate}
\item
It is obvious by comparing \eqref{eqtn:GaussianSeparateBetterSideBetterChannel2} and \eqref{eqtn:GaussianRCCR_Dr} that
when $\vari{W}_c\geq\vari{W}_r$ and $\vari{N}_c\geq\vari{N}_r$, LDS obtains the exact same performance as in separate source and channel coding (Note that $r=g,c=b$ in this case). The case where there is no side information, i.e., $\vari{N}_1=\vari{N}_2=1$, falls under this category since the refinement information must go the receiver with the better channel. Therefore we see that the purely digital LDS is worse than the schemes analyzed in~\cite{Reznic} in the absence of side information. Preliminary results from combining LDS with hybrid analog/digital schemes as in~\cite{Reznic} were presented in~\cite{Deniz}.
This behavior is displayed in Figures~\ref{figr:Gaus}(d) and (e).

As for uncoded transmission, it can be better than the digital schemes.
For example, consider the case $\vari{N}_c=\vari{N}_r=1$ depicted in Figure~\ref{figr:Gaus}(e), which corresponds to no side information at the receivers. In this case, uncoded transmission actually achieves the trivial converse, and therefore, is the optimal strategy.

\item
When $\vari{W}_c>\vari{W}_r$ and $\vari{N}_c<\vari{N}_r$, it follows from  \eqref{eqtn:GaussianSeparateBetterSideWorseChannel2} and \eqref{eqtn:GaussianRCCR_Dr} that a sufficient condition for superiority of LDS over separate coding is given by
\[
\frac{\vari{N}_r\vari{W}_rD_c}{(\vari{W}_r-\vari{W}_c)\vari{N}_c+(P + \vari{W}_c)D_c} \geq \frac{\vari{N}_r\vari{N}_c^2\vari{W}_r D_c}{\Big(D_c\vari{N}_c +\vari{N}_r (\vari{N}_c-D_c)\Big)\Big((\vari{W}_r-\vari{W}_c)\vari{N}_c+(P + \vari{W}_c)D_c\Big)}
\]
which simplifies to
\[
1 \geq \frac{\vari{N}_c^2}{D_c\vari{N}_c +\vari{N}_r (\vari{N}_c-D_c)}
\]
and is therefore granted since $\vari{N}_c<\vari{N}_r$. Moreover, equality is satisfied, i.e., the two schemes have equal performance, only when $D_c=D_c^{\max}=\vari{N}_c$.
This behavior is exemplified in Figures~\ref{figr:Gaus}(b) and (c).
The difference between the two examples is that $D_c^{\max}=\vari{N}_c$ in (b), whereas $D_c^{\max}<\vari{N}_c$ in (c).

Even though $\vari{N}_c=\vari{N}_r=1$ is prohibited in this case, one can consider $\vari{N}_c=1-\epsilon$ and $\vari{N}_r=1$ with arbitrarily small $\epsilon>0$. Uncoded transmission is also superior to all the digital schemes in this limiting case.

\item
Finally, when $\vari{W}_c<\vari{W}_r$ and $\vari{N}_c>\vari{N}_r$, since $r=b,c=g$ in this case, we need to explicitly write the best $D_c$ for a given $D_r$ for LDS.
From \eqref{eqtn:GaussianRCCR_Dr}, it follows that LDS can achieve
\begin{equation}
\label{eqtn:ComparisonRCCRSeparate1}
D_c = \frac{\vari{N}_c\vari{N}_r}{\vari{N}_c-\vari{N}_r
}\left[\frac{\vari{N}_c\vari{W}_c}{(P+\vari{W}_c)D_r}-1\right]
\end{equation}
for $D_r^{WZ}(C_r)\leq D_r\leq \frac{\vari{N}_c\vari{N}_r\vari{W}_c}{\vari{N}_c\vari{W}_c + P\vari{N}_r}$.
On the other hand, \eqref{eqtn:GaussianSeparateBetterSideWorseChannel2} implies that the minimum $D_c$ that can be achieved by separate coding must necessarily satisfy
\begin{eqnarray}
D_c & \geq & \frac{\vari{N}_c\vari{N}_r\Big(\vari{N}_c\vari{W}_c-(P+\vari{W}_r)D_r-\vari{N}_r(\vari{W}_c-\vari{W}_r)\Big)}{\Big((\vari{W}_c-\vari{W}_r)\vari{N}_r+(P + \vari{W}_r)D_r \Big)(\vari{N}_c-\vari{N}_r)}\nonumber \\
\label{eqtn:ComparisonRCCRSeparate2}
& = & \frac{\vari{N}_c\vari{N}_r}{\vari{N}_c-\vari{N}_r
}\left[\frac{\vari{N}_c\vari{W}_c}{(\vari{W}_c-\vari{W}_r)\vari{N}_r+(P+\vari{W}_c)D_r}-1\right] \; .
\end{eqnarray}
Superiority of LDS over separate coding then easily follows from \eqref{eqtn:ComparisonRCCRSeparate1} and \eqref{eqtn:ComparisonRCCRSeparate2}.
An example of this case is shown in Figure~\ref{figr:Gaus}(a).

We next show that LDS always outperforms uncoded transmission in this case.
In fact, uncoded transmission is even worse than CDS.
Since CDS achieves $D_r=D_r^{WZ}(C_r)$, it suffices to compare the $D_r$ values.
Comparing \eqref{eqtn:WZ_UncodedGaussian} and \eqref{eqtn:Dk_Scheme0_kappa1}, this reduces to showing
\[
\frac{\vari{N}_r\vari{W}_r}{\vari{W}_r + \vari{N}_r P} \geq \frac{\vari{N}_r\vari{N}_c\vari{W}_c}{\vari{N}_c\vari{W}_c + \vari{N}_r P}
\]
or equivalently
\[
\vari{W}_r \geq \vari{N}_c\vari{W}_c \; .
\]
But since $\vari{W}_r >\vari{W}_c$, this is trivially true.
\end{enumerate}

In Figure~\ref{figr:Gaus}(f), we also include an example where $\vari{N}_c\vari{W}_c = \vari{N}_r\vari{W}_r$, i.e., where the combined channel and side information qualities are the same.
CDS achieves the trivial converse as discussed in Section~\ref{subs:GaussianScheme0}.
We also observed that uncoded transmission may achieve a distortion pair below the best known digital tradeoff, as shown in Figures~\ref{figr:Gaus}(d) and (e).
This was expected because it is well-known that the optimal scheme is uncoded transmission when there is no side information at either receiver, as is the case in Figure~\ref{figr:Gaus}(e).
For cases other than $\vari{N}_c\vari{W}_c = \vari{N}_r\vari{W}_r$, one could roughly say that LDS is better than uncoded transmission when the quality of the side information is sufficiently high, although we do not currently have the analytical means for comparison.


\section{Performance Analysis for the Binary Hamming Problem}
\label{sctn:BinaryHamming}

In this section, we first analyze the CDS for the binary Hamming problem and show that it can be optimal in this case as well.
We then analyze the LDS and present numerical comparisons of the LDS with separate coding and uncoded transmission.

\subsection{CDS for the Binary Hamming Problem}

It follows from Corollary~\ref{corr:S0} and Equations \eqref{eqtn:WZR_Binary} and \eqref{eqtn:WZD_Binary} that in the binary Hamming case, if there exists $0\leq q\leq 1$ and $0\leq\alpha\leq\frac{1}{2}$ such that
\begin{equation}
\label{eqtn:BinaryScheme0Rate}
q r(\alpha,\beta_k) \leq \kappa[1-H_2(p_k)]
\end{equation}
for all $k$, then
\begin{equation}
\label{eqtn:BinaryScheme0Distortion}
D_k = (1-q)\beta_k + q\min\{\alpha,\beta_k\}
\end{equation}
can be achieved by the CDS.
Unlike in the quadratic Gaussian case, the constraint \eqref{eqtn:BinaryScheme0Rate} does not result in a single best value for $q$ and $\alpha$.
Therefore, CDS produces a tradeoff of $D_k$'s rather than one best point.

As discussed at the end of Section~\ref{subs:P2P}, the distortion-rate function $D_k^{WZ}(R)$ is achieved either by $q=1$ and $\alpha\leq\alpha_0(\beta_k)$, or by $0\leq q<1$ and $\alpha=\alpha_0(\beta_k)$.
The implication of this fact to the CDS is the following:
\begin{enumerate}
\item If $\beta_k$ are not identical, neither are $\alpha_0(\beta_k)$, and thus we need $q=1$ and some $\alpha\leq\min_k \alpha_0(\beta_k)$ to attain all $D^{WZ}_k(\kappa C_k)$ simultaneously, i.e.,
\begin{equation}
\label{eqtn:BinaryScheme0optimal}
r(\alpha,\beta_k)  = \kappa[1-H_2(p_k)]
\end{equation}
for all $k$.
When this happens, we must necessarily have $D_k=\alpha$ i.e., $D_k$ does not depend on $k$.
\item If $\beta_k=\beta$ for $k=1,\ldots,K$, and thus $D_k^{WZ}(R)$ does not depend on $k$, we need $C_k=C$ (and hence $p_k=p$) so that the same test channel $(q,\alpha)$ achieves $D_k^{WZ}(C_k)$ simultaneously. But, this makes the problem trivial.
\end{enumerate}

\subsection{LDS for the Binary Hamming Problem}
\label{subs:BinaryRCCR}
\begin{figure}
\centering
\psfrag{X}{\large $X$}
\psfrag{Zc}{\large $Z_b$}
\psfrag{Zr}{\large $Z_g$}
\psfrag{0}{0}
\psfrag{1}{1}
\psfrag{lam}{$\lambda$}
\psfrag{q1}{$\bar{q}_g$}
\psfrag{p1}{$\bar{q}'_b$}
\psfrag{a1}{$q_g \bar{\alpha}_g$}
\psfrag{a2}{$q_g \alpha_g$}
\psfrag{b1}{$q'_b \bar{\alpha}'_b$}
\psfrag{b2}{$q'_b \alpha'_b$}
\includegraphics[width=9cm]{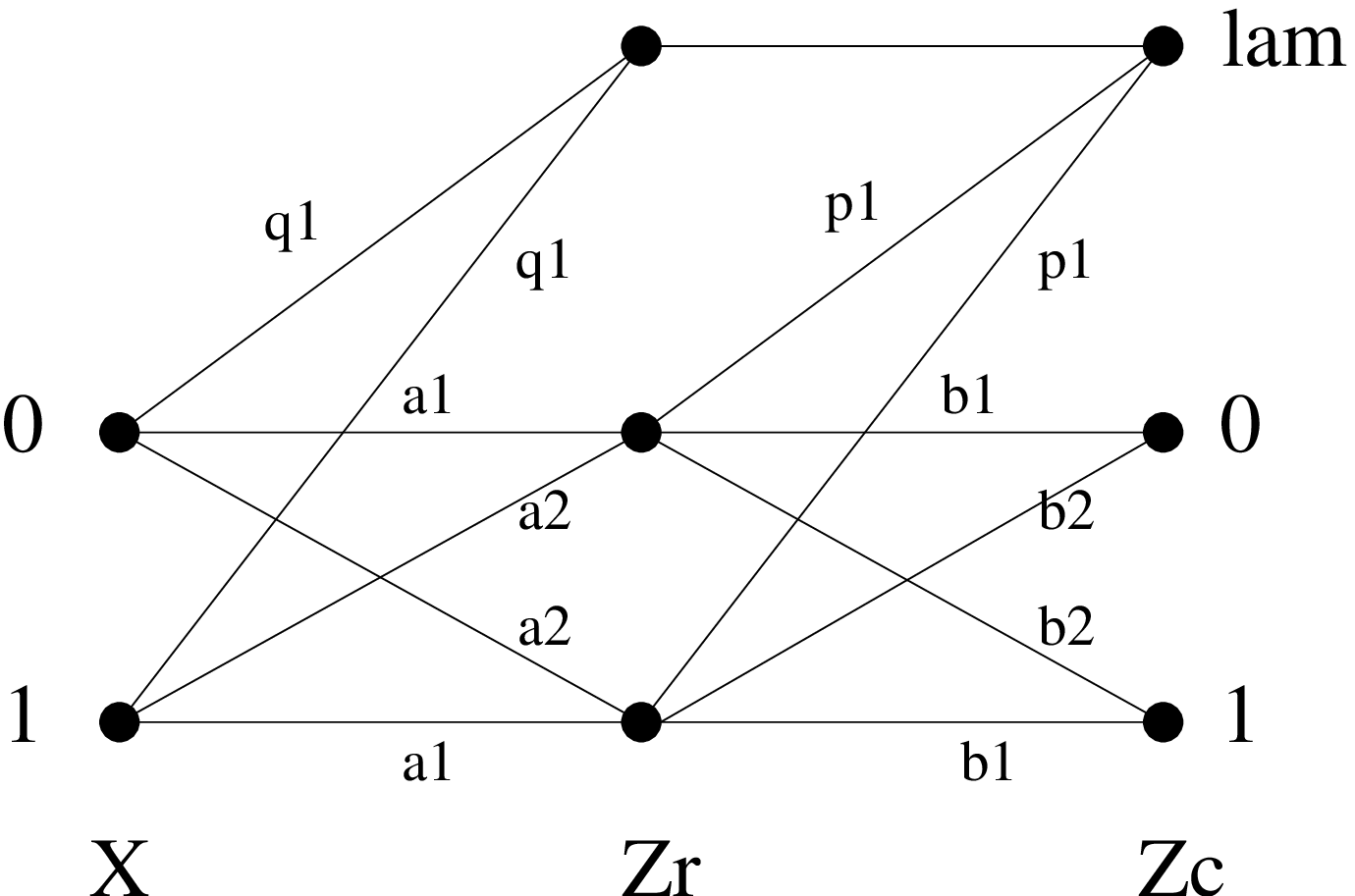}
\caption{Auxiliary random variables for binary source coding. The edge labels denotes transition probabilities. We also use the convention that $\bar{a}=1-a$.}\label{figr:BAux}
\end{figure}

\subsubsection{Source Coding Rates}
To evaluate $R^s_{cc}, R^s_{cr}$ and $R^s_{rr}$, we first fix $Z_c$ and $Z_r$ with ${\cal Z}_c={\cal Z}_r=\{0,1,\lambda\}$, where the test channels are also confined to degraded versions of those that achieve $D^{WZ}(R)$, as shown in Figure~\ref{figr:BAux} for the case $(Y_c,Y_r)-X-Z_r-Z_c$. More specifically,
\begin{eqnarray*}
Z_c & = & E_c\circ(X\oplus S_c) \\
Z_r & = & E_r\circ(X\oplus S_r)
\end{eqnarray*}
where $E_c,E_r,S_c$, and $S_r$ are all Bernoulli random variables with parameters $q_c,q_r,\alpha_c$, and $\alpha_r$, respectively.
To obtain a Markov relation $X-Z_r-Z_c$, it suffices to enforce $q_c\leq q_r$ and $\alpha_c\geq \alpha_r$. In that case, one can find $0\leq q'_c\leq 1$ and $0\leq\alpha'_c\leq\frac{1}{2}$ such that $q_c=q_r q'_c$ and $\alpha_c = \alpha_r\star\alpha'_c$, and $Z_c$ can alternatively be written as
\[
Z_c = \begin{cases} E'_c\circ(Z_r\oplus S'_c) & Z_r \neq \lambda\\
\lambda & Z_r = \lambda
\end{cases}
\]
where $E'_c$ and $S'_c$ are $\Ber(q'_c)$ and $\Ber(\alpha'_c)$, respectively.

This results in
\begin{eqnarray*}
R^s_{cc} & = & q_c r(\alpha_c, \beta_c) \\
R^s_{cr} & = & q_c r(\alpha_c, \beta_r) \\
R^s_{rr} & = & q_r r(\alpha_r, \beta_r) - q_c r(\alpha_c, \beta_r) \; .
\end{eqnarray*}

We next make channel variable choices and derive the resulting channel coding rates for CDS and LDS individually. Unlike in the quadratic Gaussian case, there is no power allocation parameter to vary. However, we have freedom in choosing the distributions of $U_c$ and $U_r$ as $\Ber(\gamma_c)$ and $\Ber(\gamma_r)$, respectively, as well as in choosing the auxiliary random variable as either $T=U_c$ or $T=U_c\oplus U_r$.

\subsubsection{Channel Coding Rates}

In this case, with $T=U_c$, \eqref{eqtn:CRRC1}-\eqref{eqtn:CRRC3} become
\begin{eqnarray}
R^c_{cc} & = & I(U_c; U_c\oplus U_r \oplus W_c) \nonumber \\
& = & r(\gamma_r \star p_c,\gamma_c) \nonumber \\
R^c_{cr} & = & I(U_c; U_c\oplus U_r \oplus W_r) \nonumber \\
& = & r(\gamma_r \star p_r,\gamma_c) \nonumber \\
R^c_{rr} & = & I(U_c\oplus U_r; U_c\oplus U_r \oplus W_r |U_c ) \nonumber \\
& = & I(U_r; U_r \oplus W_r ) \nonumber \\
\label{eqtn:BinaryCRCR3}
& = & r(p_r,\gamma_r) \; .
\end{eqnarray}
But since $r(\cdot,\cdot)$ is increasing in its second argument, we have $\gamma_c=\frac{1}{2}$ as the optimal value achieving
\begin{eqnarray}
\label{eqtn:BinaryCRCR1}
R^c_{cc} & = & 1-H_2(\gamma_r \star p_c) \\
\label{eqtn:BinaryCRCR2}
R^c_{cr} & = & 1-H_2(\gamma_r \star p_r) \; .
\end{eqnarray}

On the other hand, if $T=U_c\oplus U_r$, we obtain
\begin{eqnarray}
R^c_{cc} & = & I(U_c\oplus U_r; U_c\oplus U_r \oplus W_c) - I(U_r;U_c\oplus U_r)\nonumber \\
\label{eqtn:BinaryRCCR1}
& = & r(p_c,\gamma_c\star\gamma_r) - r(\gamma_c,\gamma_r)  \\
R^c_{cr} & = & I(U_c\oplus U_r; U_c\oplus U_r \oplus W_r) - I(U_r;U_c\oplus U_r)\nonumber \\
\label{eqtn:BinaryRCCR2}
& = & r(p_r,\gamma_c\star\gamma_r) - r(\gamma_c,\gamma_r) \\
R^c_{rr} & = & I(U_r; U_c\oplus U_r, U_c\oplus U_r \oplus W_r) \nonumber \\
& = & I(U_r; U_c\oplus U_r) \nonumber \\
\label{eqtn:BinaryRCCR3}
& = &  r(\gamma_c,\gamma_r)\; .
\end{eqnarray}

\begin{figure}
\centering
\subfigure[]{
\includegraphics[width=0.46\columnwidth]{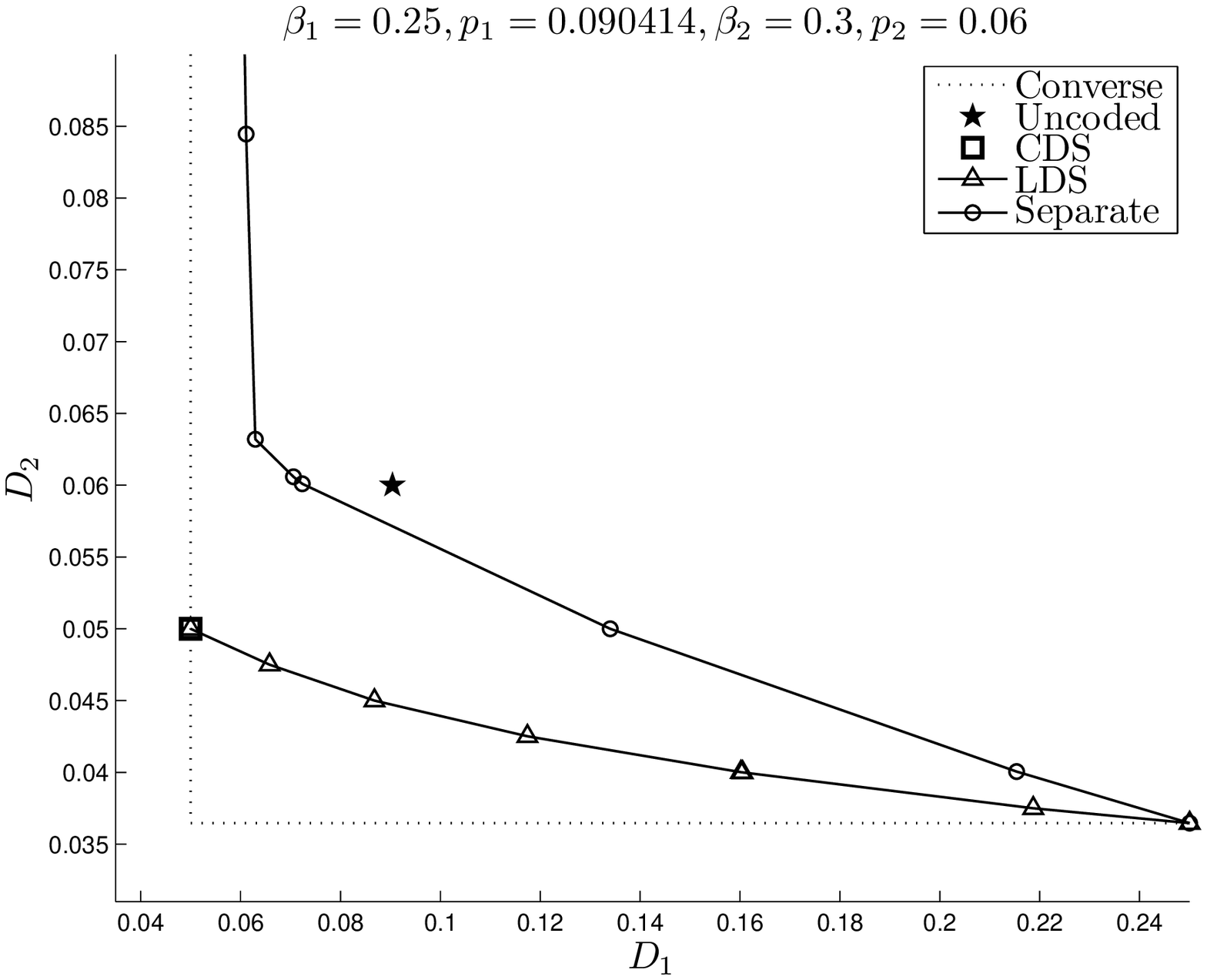}
}
\subfigure[]{
\includegraphics[width=0.46\columnwidth]{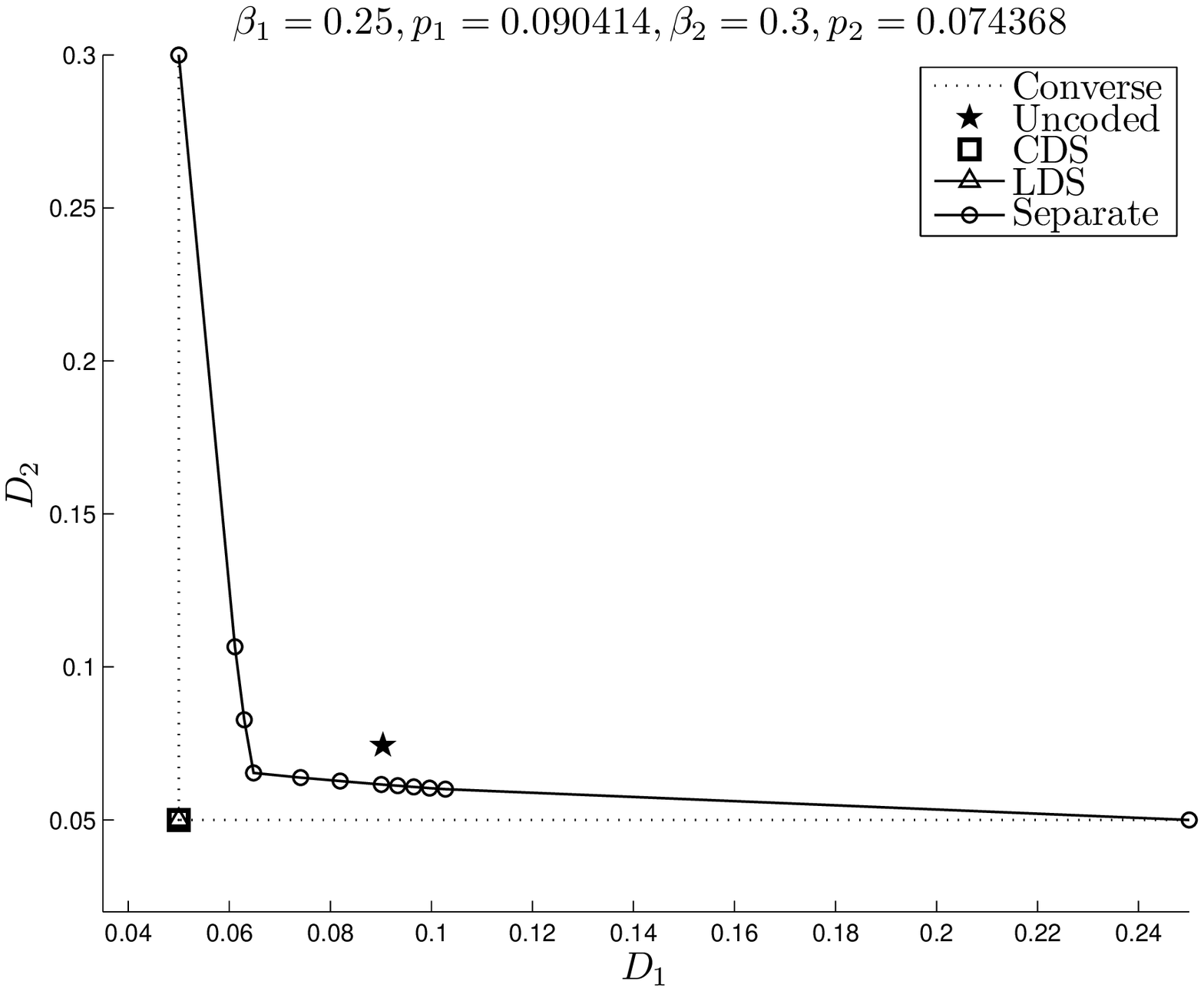}
}
\subfigure[]{
\includegraphics[width=0.46\columnwidth]{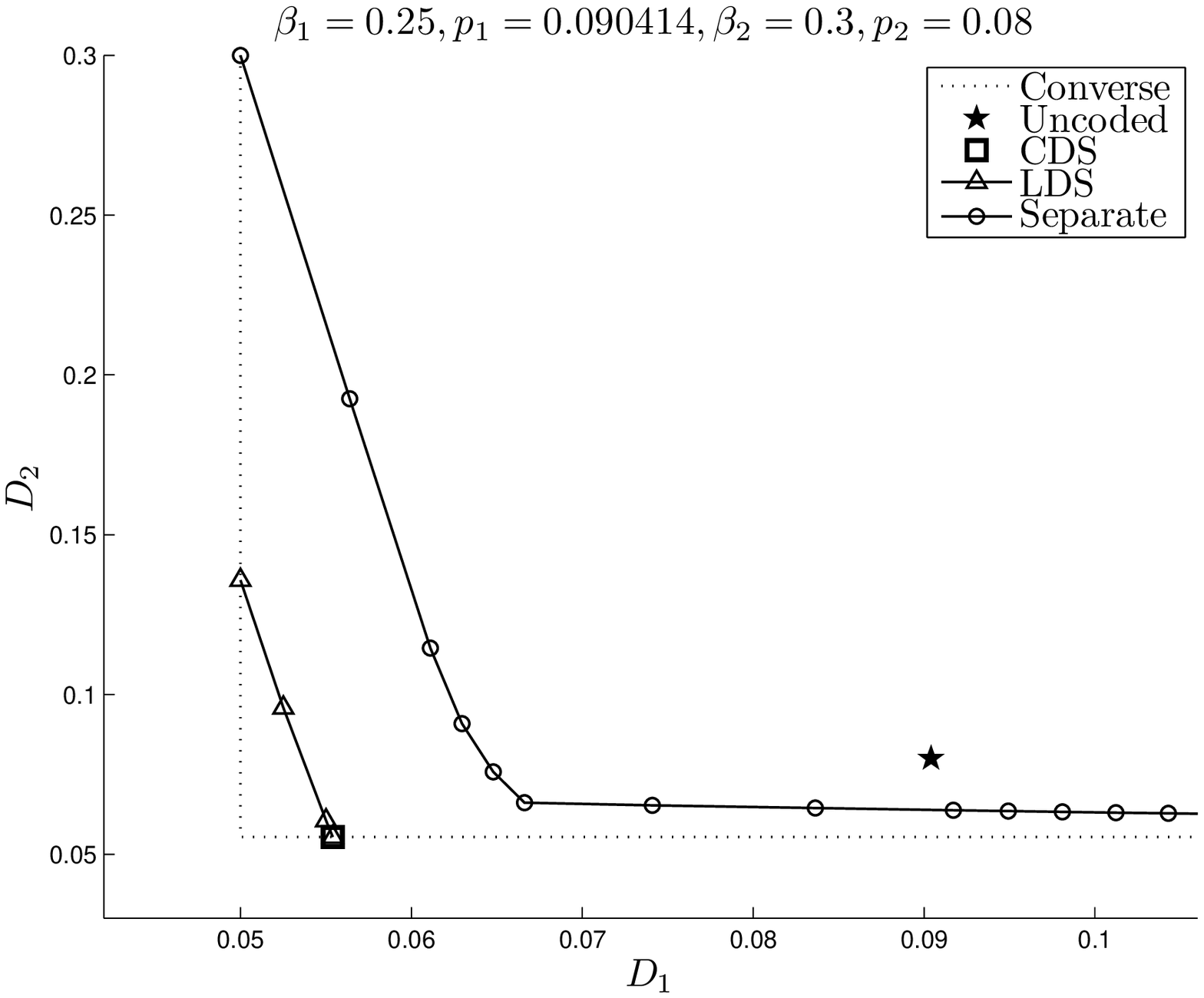}
}
\subfigure[]{
\includegraphics[width=0.46\columnwidth]{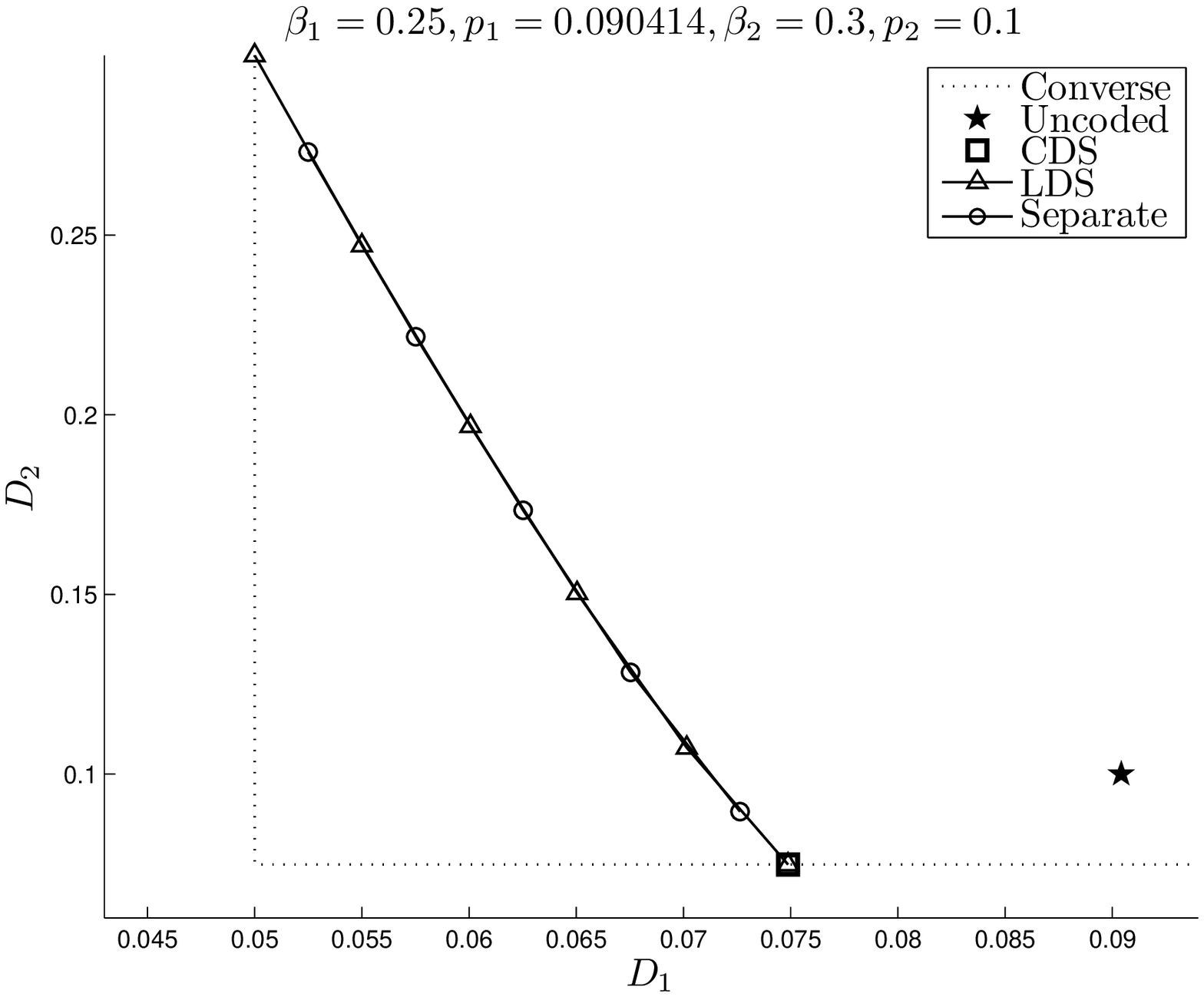}
}
\subfigure[]{
\includegraphics[width=0.46\columnwidth]{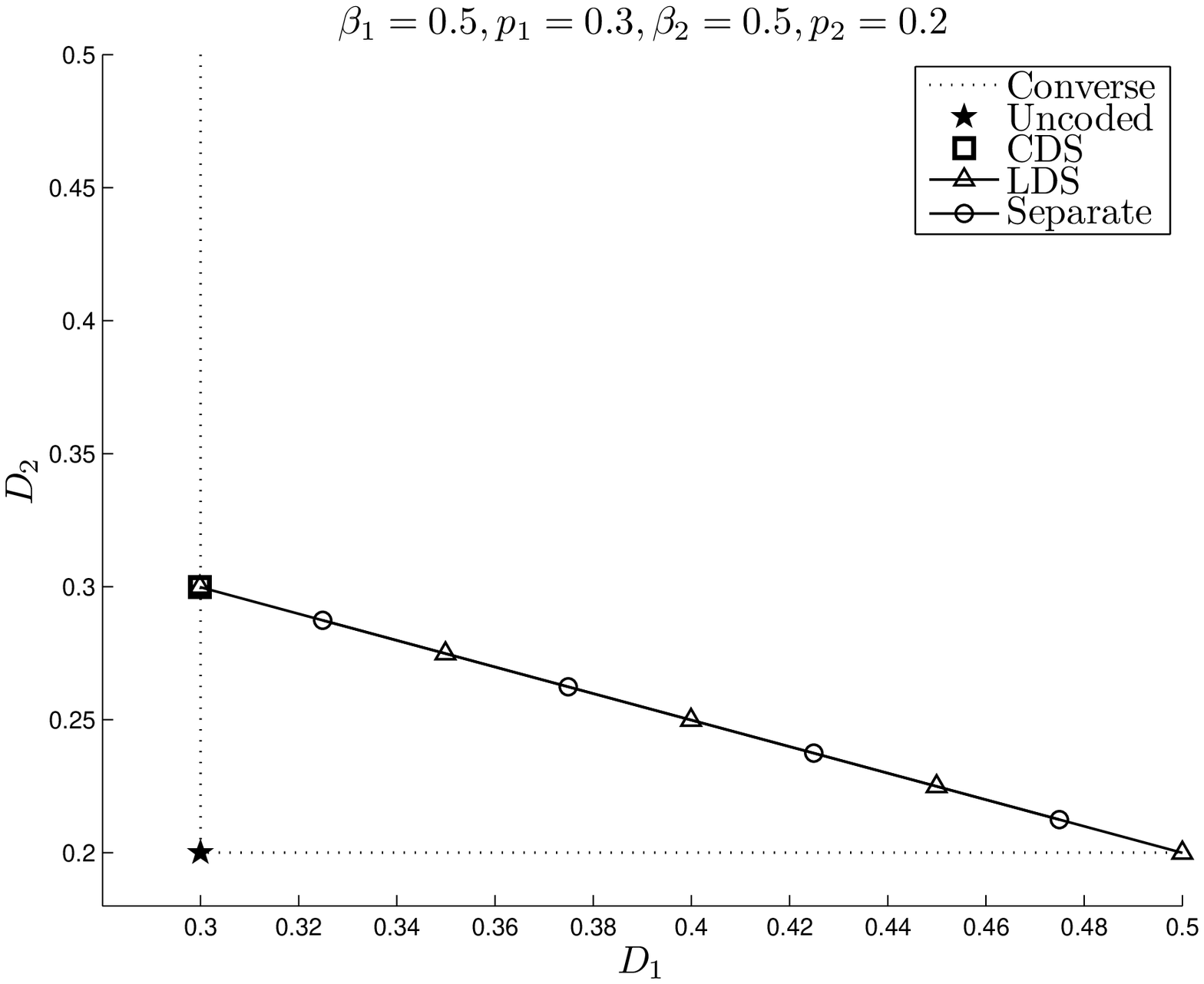}
}
\subfigure[]{
\includegraphics[width=0.46\columnwidth]{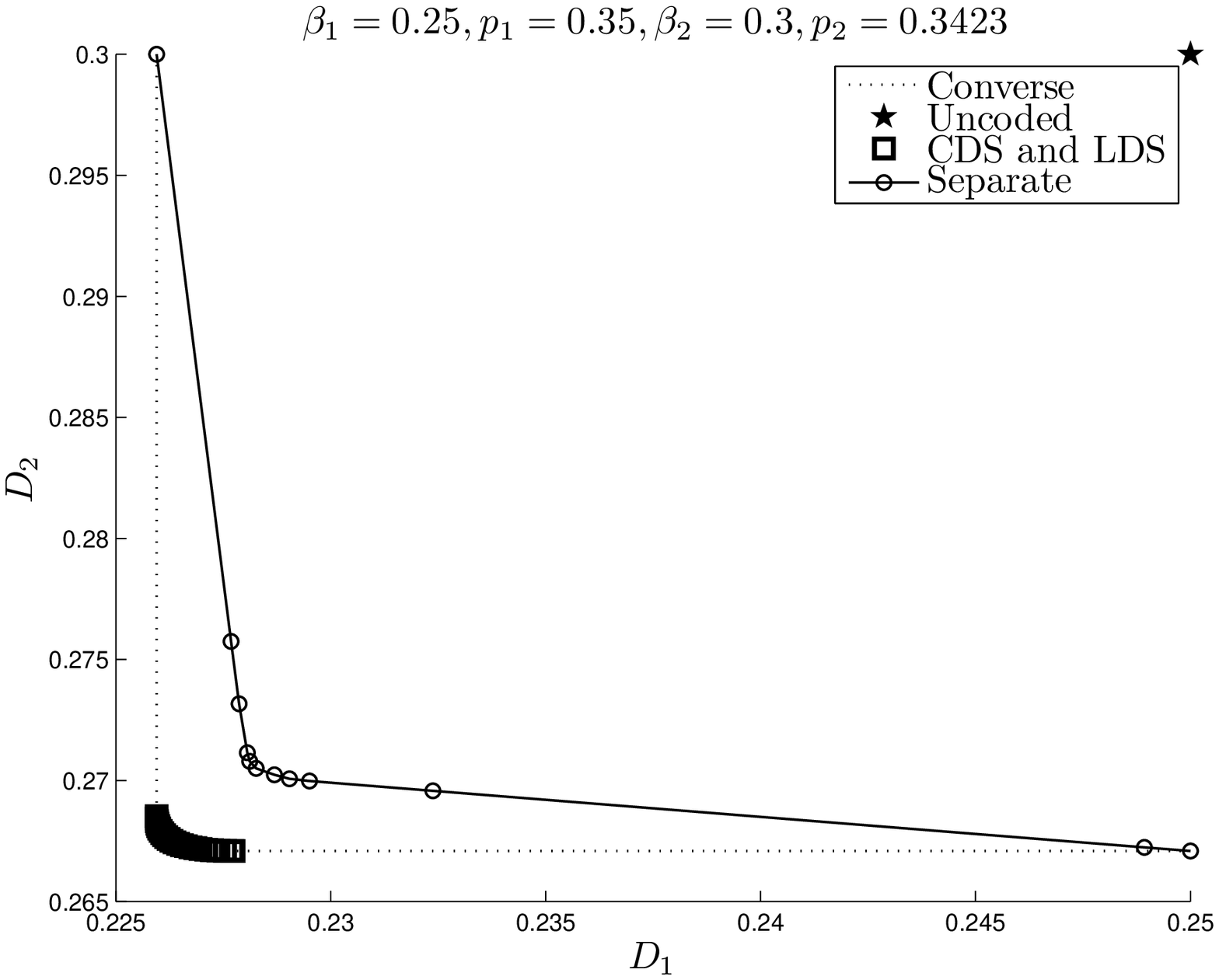}
}
\caption{Performance comparison for binary sources and channels. In (a)-(d), $\beta_1$, $\beta_2$, and $p_1$ are fixed, and as $p_2$ increases, how all the schemes compare changes. In (e), uncoded transmission is optimal. In (f), CDS and consequently LDS is the best. It is also noteworthy that it touches both trivial converse bounds simultaneously.}\label{figr:Bin}
\end{figure}

\subsection{Performance Comparisons for the Bandwidth Matched Case: $\kappa=1$.}

Analytical performance comparisons prove more difficult for the binary Hamming problem.
Even the question of which receiver should be designated as $c$ and which as $r$ is not straightforward to answer.
That is because (i) there is no power allocation parameter we can control, and (ii) even CDS can produce a curve which could achieve both $D_c=D_c^{WZ}(\kappa C_c)$ and $D_r=D_r^{WZ}(\kappa C_r)$, rather than a single best point.

It is also not clear that our choice of source random variables are the best.
As mentioned earlier, our main motivation in adopting the same test channel as in point-to-point coding for  LDS is its simplicity.
The alphabet size bounds in~\cite{SteinbergMerhav,TianDiggavi}, however, are much higher and therefore it might be possible to further improve the performance of LDS.

Using the same auxiliary random variables in separate coding gives us the following achievable result. We do not have a complete characterization of the distortion tradeoff.

\begin{lemma}
\label{lmma:BinarySeparate}
A distortion pair $(D_b, D_g)$ is achievable if there exist variables $0\leq q_b, q_g \leq 1$ and $0\leq \alpha_b,\alpha_g\leq \frac{1}{2}$ that satisfy
\begin{align}
q_b r(\alpha_b,\beta_b) &\leq \kappa [1-H_2(\theta\star p_b)] \;, \label{eqtn:BinarySeparateWorseChannel}\\
q_b r(\alpha_b,\beta_b) + [q_g r(\alpha_g,\beta_g) - q_b r(\alpha_b,\beta_g)]^+ &\leq  \kappa [H_2(\theta\star p_g) -H_2(p_g)] & {\rm if} X-Y_g-Y_b\;, \label{eqtn:BinarySeparateBetterSideBetterChannel}\\
q_g r(\alpha_g,\beta_g) + [ q_b r(\alpha_b,\beta_b) - q_g r(\alpha_g,\beta_b) ]^+ &\leq  \kappa [H_2(\theta\star p_g) -H_2(p_g)] & {\rm if} X-Y_b-Y_g\;. \label{eqtn:BinarySeparateBetterSideWorseChannel} \\
D_i &\leq q_i\min{\alpha_i,\beta_i} + (1-q_i)\beta_i, & i\in\{b,g\}
\end{align}
\end{lemma}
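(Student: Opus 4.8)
The plan is to reduce Lemma~\ref{lmma:BinarySeparate} to the general separate source and channel coding principle already established in the excerpt, namely that $(D_b,D_g)$ is achievable whenever $\mathcal{R}(D_b,D_g)\cap\mathcal{C}(\kappa)\neq\emptyset$, and then to exhibit explicit points in $\mathcal{R}^*(D_b,D_g)$ and $\mathcal{C}(\kappa)$ that meet. Since $\mathcal{R}^*(D_b,D_g)\subseteq\mathcal{R}(D_b,D_g)$ always, it suffices to work with $\mathcal{R}^*$. First I would instantiate the source auxiliary variables: take $Z_b = E_b\circ(X\oplus S_b)$ and $Z_g = E_g\circ(X\oplus S_g)$ as in Figure~\ref{figr:BAux}, with $E_i\sim\Ber(q_i)$ and $S_i\sim\Ber(\alpha_i)$, chosen so that the appropriate one of the two Markov chains $(Y_b,Y_g)-X-Z_b-Z_g$ or $(Y_b,Y_g)-X-Z_g-Z_b$ holds — this requires the usual degradedness bookkeeping on $(q_b,\alpha_b)$ versus $(q_g,\alpha_g)$, exactly as done for LDS in Section~\ref{subs:BinaryRCCR}. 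With the reconstruction functions $g_i$ being the maximum-likelihood estimators from Section~\ref{subs:P2P}, the conditional-rate formulas give $I(X;Z_i|Y_i) = q_i\, r(\alpha_i,\beta_i)$ and the distortion $\mathrm{E}[d_i(X,g_i(Z_i,Y_i))] = q_i\min\{\alpha_i,\beta_i\}+(1-q_i)\beta_i$, which is the claimed distortion bound. Plugging $I(X;Z_b|Y_b)$, $I(X;Z_g|Y_g)$, $I(X;Z_b|Y_g)$, $I(X;Z_g|Y_b)$ into \eqref{eqtn:SeparateRate1}--\eqref{eqtn:SeparateRate2} produces exactly the left-hand sides of \eqref{eqtn:BinarySeparateWorseChannel}--\eqref{eqtn:BinarySeparateBetterSideWorseChannel}, with the $[\cdot]^+$ and the case split on $X-Y_g-Y_b$ versus $X-Y_b-Y_g$ carried over verbatim.

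Next I would produce the matching channel point in $\mathcal{C}(\kappa)$. Using the characterization \eqref{eqtn:SeparateCapacity1}--\eqref{eqtn:SeparateCapacity2} for the degraded binary broadcast channel, I would choose the channel input $U\sim\Ber(\tfrac12)$ and the auxiliary $U_b\sim\Ber(\tfrac12)$ with $U = U_b\oplus U'$, where $U'\sim\Ber(\theta)$ is independent of $U_b$; here $\theta\in[0,\tfrac12]$ is precisely the free parameter appearing in the statement. A standard computation for the BSC then gives $I(U_b;V_b) = 1-H_2(\theta\star p_b)$ — since $V_b = U_b\oplus U'\oplus W_b$ and $U'\oplus W_b\sim\Ber(\theta\star p_b)$ — and $I(U_b;V_b)+I(U;V_g|U_b) = [1-H_2(\theta\star p_g)] - [1-H_2(p_g)]$ after a short cancellation, i.e.\ $H_2(\theta\star p_g)-H_2(p_g)$. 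Wait — care is needed here: the good-channel cumulative rate should come out to $1-H_2(p_g)$ only if no power is spent on the coarse layer, and to $H_2(\theta\star p_g)-H_2(p_g)$ when the coarse rate is $1-H_2(\theta\star p_g)$; I would verify the identity $[1-H_2(\theta\star p_g)] + [H_2(\theta\star p_g)-H_2(p_g)] = 1-H_2(p_g) = C_g$ to confirm the superposition budget is consistent and that the RHS expressions in the lemma are the achievable cumulative rates at the good receiver. Matching these against the source-rate inequalities \eqref{eqtn:BinarySeparateWorseChannel}--\eqref{eqtn:BinarySeparateBetterSideWorseChannel} shows that the stated conditions are exactly $\mathcal{R}^*(D_b,D_g)\cap\mathcal{C}(\kappa)\neq\emptyset$, hence achievability.

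I would then address the one subtlety that distinguishes this from a mechanical substitution: the correspondence between the coarse source layer and the coarse channel layer. In the convex-closure definition of $\mathcal{R}^*$, whichever of $Z_b,Z_g$ is the "coarse" description (the one closest to $X$ in the Markov chain) must be sent at the lower superposition rate and decodable by both receivers, while the "refinement" is sent at rate $I(X;\cdot|\cdot,Y_\cdot)$ to the stronger receiver only. When $X-Y_g-Y_b$, the bad receiver has the better side information, so $Z_b$ plays the role of the common description — this is why \eqref{eqtn:BinarySeparateBetterSideBetterChannel} has $q_b r(\alpha_b,\beta_b)$ as the first term and a $[\cdot]^+$ refinement; when $X-Y_b-Y_g$, the roles of $b$ and $g$ swap in the rate expression, giving \eqref{eqtn:BinarySeparateBetterSideWorseChannel}. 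I expect the main obstacle to be purely bookkeeping: keeping the $(b,g)$ labels consistent across the source Markov chain, the channel degradation order $U-V_g-V_b$, and the $[\cdot]^+$ terms, and confirming that one never needs a channel superposition order incompatible with $U-V_g-V_b$ (it is not needed, since by the remark in the separate-coding subsection no private information need be sent to the weaker channel). No genuinely hard estimate is involved — all the mutual informations are closed-form BSC/erasure-channel quantities already computed elsewhere in the paper — so the proof is a matter of assembling \eqref{eqtn:SeparateCapacity1}--\eqref{eqtn:SeparateCapacity2}, \eqref{eqtn:SeparateRate1}--\eqref{eqtn:SeparateRate2}, and the point-to-point binary Hamming formulas, and noting that the claim is stated as a sufficient ("if") condition so no converse is required.
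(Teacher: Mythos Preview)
Your proposal is correct and follows essentially the same approach as the paper: instantiate the channel auxiliary variables as $U_b\sim\Ber(\tfrac12)$, $U=U_b\oplus U'$ with $U'\sim\Ber(\theta)$ to evaluate \eqref{eqtn:SeparateCapacity1}--\eqref{eqtn:SeparateCapacity2}, instantiate the source auxiliary variables as the erasure-style $Z_k=E_k\circ(X\oplus S_k)$ from Section~\ref{subs:BinaryRCCR} to evaluate \eqref{eqtn:SeparateRate1}--\eqref{eqtn:SeparateRate2} via $I(X;Z_k|Y_{k'})=q_k\,r(\alpha_k,\beta_{k'})$, and then match the two. Your hesitation about whether the good-receiver right-hand side should be $H_2(\theta\star p_g)-H_2(p_g)$ or the full cumulative $[1-H_2(\theta\star p_b)]+[H_2(\theta\star p_g)-H_2(p_g)]$ is a fair bookkeeping concern, but the paper's own proof writes the same expression as the lemma statement, so your argument aligns with it; either way the substitution step is identical.
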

The proof is presented in Appendix~\ref{subs:App_lmma:BinarySeparate}.

The performance of the various schemes for certain source-channel pairs at rate $\kappa=1$ is presented in Figure~\ref{figr:Bin}. For LDS, the convex hull of two curves is shown, where in one $c=2, r=1$ and in the other $c=1, r=2$.
In Figures~\ref{figr:Bin}(a)-(d), the parameters $\beta_1$, $\beta_2$, and $p_1$ are fixed so that \eqref{eqtn:BinaryScheme0optimal} is satisfied for $k=1$, and $p_2$ is varying.
As $p_2$ increases, the collective behavior of the schemes dramatically changes. In Figure~\ref{figr:Bin}(a), $c=1,r=2$ is consistently the best choice among all schemes.
As the quality of the second channel decreases, and reaches the point where \eqref{eqtn:BinaryScheme0optimal} is also satisfied for $k=2$, CDS becomes optimal, as shown in Figure~\ref{figr:Bin}(b).
When $p_2$ is increased even further, as in Figure~\ref{figr:Bin}(c), $c=2,r=1$ becomes the better choice.
When $p_2$ reaches the point where the first receiver has access to both the better channel and the better side information, as in Figures~\ref{figr:Bin}(d) and (e), separate coding and LDS become identical as in the quadratic Gaussian case. However, uncoded transmission can still outperform the LDS as shown in Figure~\ref{figr:Bin}(e) for the case of trivial side information.
Finally, Figure~\ref{figr:Bin}(f) exemplifies the interesting phenomenon mentioned above, where CDS (and LDS) produces a curve, rather than a point, which happens to be the best.

\section{Conclusions and Future Work}
\label{sctn:Conclusions}

We proposed a layered coding scheme for the WZBC problem, and analyzed its distortion performance for the quadratic Gaussian and binary Hamming cases. Even though our scheme allows for arbitrary rate $\kappa$ channel uses per source symbol, the achievability regions are easiest to compute for $\kappa=1$. In fact, for the quadratic Gaussian case, we were able to derive closed form expressions for the entire distortion tradeoff and show that our layered scheme is always at least as good as (in fact, except for one certain case, always better than) separate coding. By numerical comparisons, we observed the same phenomenon for the binary Hamming case under the regime where all the test channels are constrained to be of the form which achieves the Wyner-Ziv rate-distortion function. On the other hand, our scheme may not always improve over the performance of uncoded transmission. This is not surprising, since when there is no (or trivial) side information, it is known that uncoded transmission is optimal. 

In an upcoming paper, we combine the digital scheme we proposed with uncoded transmission to extract the benefits of both methods.
In fact, as we show in a preliminary version~\cite{Deniz}, the hybrid scheme is more than the sum of its parts and distortions outside the convexification of the digital and analog regions are achievable.

\appendix

\subsection{Other Layered WZBC Schemes}
\label{subs:App:OtherSchemes}

The LDS that we focus on in this paper is only one of many possible layered coding schemes based on CDS and CDS with DPC. We shall briefly discuss these schemes. In all schemes, the source coding rates are the same as in LDS and only the channel coding rates differ.
\begin{itemize}
\item \emph{Scheme 1%
}: This scheme is the simplest extension of CDS. The CL is encoded as in CDS. The RL is encoded on top of the CL. At both decoders, the RL is a source of interference while decoding the CL. Once the CL is decoded at the refinement receiver, its effect can be cancelled while decoding the RL. The acheivable channel rates are given by the next theorem.%
\begin{theorem}
Let ${\cal R}^c_{\rm 1}(\kappa)$ be the union of all $(R^c_{cc},R^c_{cr},R^c_{rr})$ for which there exist $U_c$ in some auxiliary alphabet $\mathcal{U}_c$ and $U\in\mathcal{U}$ with $U_c-U-(V_c,V_r)$ such that
\begin{align}
\label{eqtn:CRCR1}
R^c_{cc} &\leq \kappa I(U_c;V_c) \\
\label{eqtn:CRCR2}
R^c_{cr} &\leq \kappa I(U_c;V_r) \\
\label{eqtn:CRCR3}
R^c_{rr} &\leq \kappa I(U;V_r|U_c)\;.
\end{align}
Then ${\cal R}^c_{\rm 1} \subseteq {\cal C}_{\rm WZBC}(\kappa)$.
\end{theorem}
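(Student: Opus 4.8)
The plan is to follow the same two-part recipe used for Theorem~\ref{thm:LDS}: the common layer is handled by a CDS-style argument in which the refinement codeword is merely additional channel noise, and the refinement layer is handled by successive cancellation on a superposition codebook. Because the source coding rates $(R^s_{cc},R^s_{cr},R^s_{rr})$ for Scheme~1 coincide with those of LDS (Section~\ref{subs:SourceCodingAll}), it suffices to exhibit, for every interior point of ${\cal R}^c_{1}(\kappa)$, a channel code that delivers the relevant cloud and bin indices reliably; the quasi-separation argument of Section~\ref{subs:SourceCodingAll} then upgrades this into the statement ${\cal R}^c_{1}(\kappa)\subseteq{\cal C}_{\rm WZBC}(\kappa)$.

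First I would fix $p_{U_c}$ and $p_{U|U_c}$ (so that the required chain $U_c-U-(V_c,V_r)$ holds) and build the codebooks. A CL channel codebook $\{u_c^m(i)\}_{i=1}^{M_c}$, with $M_c$ of order $2^{nI(X;Z_c)}$, is drawn i.i.d.\ from ${\cal T}_{\delta}^m(U_c)$ and placed in one-to-one correspondence with the CL quantization words $z_c^n(i)$, exactly as in CDS. For each cloud center $u_c^m(i)$, a family of satellite codewords $\{u^m(i,j')\}_{j'=1}^{M_r'}$, with $M_r'$ of order $2^{nR^s_{rr}}$, is drawn conditionally i.i.d.\ from $\prod_{l} p_{U|U_c}(\cdot\,|\,u_{c,l}(i))$; here $j'$ plays the role of a bin index for the conditional refinement quantizer of Section~\ref{subs:SourceCodingAll}. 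Given $X^n$, the encoder performs the two-stage quantization $X^n\mapsto(i^*,j^*)$ as in that section and transmits the satellite codeword $u^m(i^*,j')$ whose bin index $j'$ is that of $z_r^n(j^*|i^*)$.

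Next I would bound the decoding error events by standard typicality arguments. Since each satellite codeword was drawn conditionally on its cloud center, the transmitted pair $(u_c^m(i^*),V_k^m)$ is jointly typical with respect to $p_{U_c}\,p_{V_k|U_c}$, where $p_{V_k|U_c}(v_k|u_c)=\sum_{u}p_{U|U_c}(u|u_c)\,p_{V_k|U}(v_k|u)$; hence the CDS list-intersection bound applies to the CL with $U_c$ acting as the channel input, and $i^*$ is decoded correctly at receiver $c$ whenever $I(X;Z_c|Y_c)<\kappa I(U_c;V_c)$, i.e.\ \eqref{eqtn:CRCR1}, and at receiver $r$ whenever $I(X;Z_c|Y_r)<\kappa I(U_c;V_r)$, i.e.\ \eqref{eqtn:CRCR2}. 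Having recovered $i^*$, receiver $r$ then looks for the unique bin index $j'$ making $(u_c^m(i^*),u^m(i^*,j'),V_r^m)$ jointly typical; with about $2^{nR^s_{rr}}$ candidates, each typical with probability of order $2^{-mI(U;V_r|U_c)}$, a union bound yields reliable recovery provided $R^s_{rr}<\kappa I(U;V_r|U_c)$, i.e.\ \eqref{eqtn:CRCR3}. A concluding Wyner-Ziv decoding inside the recovered bin, against $(z_c^n(i^*),Y_r^n)$, returns $z_r^n(j^*|i^*)$ because each bin contains only about $2^{nI(Y_r;Z_r|Z_c)}$ refinement words; coordinatewise reconstruction with $g_c$ at receiver $c$ and $g_r$ at receiver $r$ completes the argument.

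I expect the only genuinely delicate point to be the justification that, as far as the CL decoders are concerned, the randomly generated satellite acts like memoryless noise with law $p_{V_k|U_c}$ above; this is exactly where the constraint $U_c-U-(V_c,V_r)$ is needed, both to certify joint typicality of the transmitted $(u_c^m(i^*),V_k^m)$ and to obtain the sharp exponent $2^{-mI(U_c;V_k)}$ for the event that a wrong cloud center is typical with $V_k^m$. Everything else --- the covering bounds guaranteeing that $i^*$ and $j^*$ exist, and the three error-event unions above --- is a routine repetition of the bookkeeping carried out in the proof of Theorem~\ref{thrm:S0_DPC}, and I would only sketch it.
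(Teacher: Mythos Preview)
Your proposal is correct and follows essentially the same approach as the paper: build a superposition codebook with cloud centers $U_c^m(i)$ drawn from ${\cal T}_\delta^m(U_c)$ and satellites $U^m(j'|i)$ drawn conditionally on each cloud center, invoke the CDS argument (Corollary~\ref{corr:S0}) so that \eqref{eqtn:CRCR1}--\eqref{eqtn:CRCR2} guarantee recovery of $i^*$ and $U_c^m(i^*)$ at both receivers, and then use standard superposition decoding so that \eqref{eqtn:CRCR3} guarantees recovery of $j'$ at receiver~$r$. The only cosmetic difference is that the paper draws satellites uniformly from ${\cal T}_{\delta'}^m(U|U_c)$ rather than conditionally i.i.d., and gives even less detail than you do; your identification of the Markov chain $U_c-U-(V_c,V_r)$ as the mechanism making the satellite look like memoryless noise to the CL decoder is exactly the right point.
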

\begin{proof}
Given random variables $U$ and $U_c$ such that $U_c-U-(V_c,V_r)$ and \eqref{eqtn:CRCR1}-\eqref{eqtn:CRCR3} are satisfied, each $U_c^m(i)$ in the CL channel codebook is chosen uniformly and independently from ${\cal T}_{\delta}^m(U_c)$.
Similarly, for each $i$, codewords $U^m(j'|i)$ to be transmitted over the channel are chosen uniformly and independently from ${\cal T}_{\delta'}^m(U|U_c)$.
It then follows from Corollary~\ref{corr:S0} that \eqref{eqtn:CRCR1} and \eqref{eqtn:CRCR2} are sufficient for successful decoding of both $Z_c^n(i)$ and $U_c^m(i)$ simultaneously at both decoders.
It also follows from standard arguments that \eqref{eqtn:CRCR3} is sufficient for reliable transmission of additional information with rate $R^c_{rr}$ to the refinement receiver.
\end{proof}

\item \emph{Scheme 2%
}: The CL is encoded as in Scheme 1. The RL, however, is sent using dirty paper coding with the CL codeword as encoder CSI, and is decoded first.
\begin{theorem}
Let ${\cal R}^c_{\rm 2}(\kappa)$ be the union of all $(R^c_{cc},R^c_{cr},R^c_{rr})$ for which there exist $U_c\in{\cal U}_c$, $U_r\in{\cal U}_r$, and $T\in{\cal T}$ with $T-(U_r,U_c)-(V_r, V_c)$ and $(U_r,U_c)-U-(V_r, V_c)$ such that
\begin{align}
\label{eqtn:CRRC1}
R^c_{cc} &\leq \kappa I(U_c;V_c) \\
\label{eqtn:CRRC2}
R^c_{cr} &\leq \kappa I(U_c;T,V_r) \\
\label{eqtn:CRRC3}
R^c_{rr} &\leq \kappa [I(T;V_r) - I(T;U_c)]\;.
\end{align}
Then ${\cal R}^c_{\rm 2}(\kappa)\subseteq{\cal C}_{\rm WZBC}(\kappa)$.
\end{theorem}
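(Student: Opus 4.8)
The plan is to reproduce the structure of the proof of Theorem~\ref{thm:LDS}, but with the roles of the two partial codewords inside the DPC swapped: here the common-layer codeword $U_c^m$ plays the part of the self-imposed channel state, and the refinement layer is the message that is dirty-paper coded against it. Concretely, I would generate the common layer exactly as in CDS (Corollary~\ref{corr:S0}): a source codebook $\{z_c^n(i)\}\subseteq{\cal T}_\delta^n(Z_c)$ together with a channel codebook $\{U_c^m(i)\}$, each $U_c^m(i)$ drawn i.i.d.\ uniformly from ${\cal T}_\delta^m(U_c)$. For the refinement layer I would use the source quantization-and-binning construction of Section~\ref{subs:SourceCodingAll} to produce a bin index $j'$, plus a DPC codebook of $T^m$ words from ${\cal T}_\delta^m(T)$ arranged into about $2^{m[I(T;V_r)-I(T;U_c)]}$ bins of size about $2^{mI(T;U_c)}$. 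The encoder, given $X^n$: picks the common index $i$ via CDS, the refinement quantization and its bin $j'$, then (the DPC covering step) finds within bin $j'$ a $t^m$ jointly typical with $U_c^m(i)$, and transmits $U^m$ drawn coordinate-wise from the conditional law $p_{U|TU_c}$ induced by the joint distribution of the theorem.

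I would then analyze the two decoders. At receiver $c$: by the standard soft-covering property, the pair $(U_c^m,T^m)$ produced by the encoder is approximately i.i.d.\ $\sim p_{U_cT}$, so the channel from $U_c^m$ to $V_c^m$ is (asymptotically) memoryless with law $p_{V_c|U_c}$; hence the common layer is decoded as in CDS, which by Corollary~\ref{corr:S0} succeeds whenever $I(X;Z_c|Y_c)<\kappa I(U_c;V_c)$, i.e.\ for source rate $R^s_{cc}\le R^c_{cc}$. At receiver $r$: first decode the refinement bin index $j'$ (and with it $T^m$) directly from $V_r^m$ — this is precisely Gelfand--Pinsker/Costa decoding against the state $U_c^m$, so it succeeds under $R^c_{rr}\le\kappa[I(T;V_r)-I(T;U_c)]$, and by the argument of Corollary~\ref{corr:Scheme0_DPC} the word $T^m$ is recovered; then, with $T^m$ in hand, decode the common index $i$ jointly-typically from $(T^m,V_r^m,Y_r^n)$, which succeeds under $I(X;Z_c|Y_r)<\kappa I(U_c;T,V_r)$, i.e.\ $R^s_{cr}\le R^c_{cr}$. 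Finally each receiver reconstructs via $g_c,g_r$ and the distortion bounds follow from the source-coding constraints. Collecting the error events (encoder failing to find $i$; DPC covering failure; the three decoding errors) and bounding each by the typicality estimates already used in the proofs of Theorem~\ref{thrm:S0_DPC} and Corollary~\ref{corr:Scheme0_DPC} shows that every triple in ${\cal R}^c_2(\kappa)$ is achievable, hence lies in ${\cal C}_{\rm WZBC}(\kappa)$.

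The step I expect to be the main obstacle is the second-phase decoding at receiver $r$. I must argue both that decoding $T^m$ first does not corrupt the subsequent decoding of $U_c^m(i)$ (no harmful error propagation, handled by conditioning on the high-probability event that $T^m$ is correct), and — more delicately — that for an incorrect common index the triple $(U_c^m(i),T^m,V_r^m)$ is jointly typical only with probability about $2^{-m I(U_c;T,V_r)}$. This last point needs care because $T^m$ and $V_r^m$ are strongly correlated through the \emph{true} codeword $U_c^m(i^*)$, so one has to invoke cleanly the independence of a wrong $U_c^m(i)$ from that correlated pair before applying the union bound over the $\approx 2^{mR^c_{cr}}$ competing indices. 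The complementary routine point — that the codebook-averaged channels $U_c\to V_c$ and $U_c\to(T,V_r)$ may be treated as memoryless with the stated induced laws — is standard, and I would only remark on it rather than carry it out in full.
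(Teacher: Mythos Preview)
Your proposal is correct and follows essentially the same approach as the paper's proof: construct the CL via CDS, the RL via standard Gelfand--Pinsker coding with $U_c^m$ as the encoder CSI, then at receiver $r$ decode the RL first (recovering $T^m$ as in Corollary~\ref{corr:Scheme0_DPC}) and subsequently decode the CL using the augmented output $(T^m,V_r^m)$. The paper's proof is a brief sketch invoking exactly these ingredients; the soft-covering step and the independence of a wrong $U_c^m(i)$ from the correlated pair $(T^m,V_r^m)$ that you flag are the points where care is needed, but they are standard and your plan handles them correctly.
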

\begin{proof}
Since RL is to be sent by separate source and channel codes, the channel coding part can proceed as in standard dirty-paper coding (cf.~\cite{Gelfand}), if \eqref{eqtn:CRRC3} is satisfied.
Note that as in Corollary~\ref{corr:Scheme0_DPC}, the auxiliary codeword $T^m$ can also be decoded in the process of decoding the RL.
With high probability, this codeword is typical with the CL codeword $U_c^m$ in addition to $V_r^m$. Subsequently, for decoding the CL, the channel output at the $r$ decoder can be taken to be a pair $(V_r^m, T^m)$. Therefore, as in Scheme 1,
$Z_c^n(i)$ can be successfully decoded given that \eqref{eqtn:CRRC1}  and \eqref{eqtn:CRRC2} hold.
\end{proof}

\item \emph{Scheme 3%
}: The encoding is performed as in LDS, but the decoding order is reversed. Since RL is decoded first at the $r$ receiver, the CL codeword purely acts as noise. But the $r$ decoder then has access to the RL codeword. So for that receiver, the CSI is also available at the decoder.
The following theorem makes use of these observations.

\begin{theorem}
Let ${\cal R}^c_{\rm 3}(\kappa)$ be the union of all $(R^c_{cc},R^c_{cr},R^c_{rr})$ for which there exist $U_c\in{\cal U}_c$, $U_r\in{\cal U}_r$, and $T\in{\cal T}$ with $T-(U_r,U_c)-(V_r, V_c)$ and $(U_r,U_c)-U-(V_r, V_c)$ such that
\begin{align}
\label{eqtn:RCRC1}
R^c_{cc} &\leq \kappa [I(T;V_c) - I(T;U_r)] \\
\label{eqtn:RCRC2}
R^c_{cr} &\leq \kappa I(T;V_r|U_r)\\
\label{eqtn:RCRC3}
R^c_{rr} &\leq \kappa I(U_r;V_r) \;.
\end{align}
Then ${\cal R}^c_{\rm 3}(\kappa)\subseteq{\cal C}_{\rm WZBC}(\kappa)$.
\end{theorem}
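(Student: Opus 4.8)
The plan is to follow the proof of Theorem~\ref{thm:LDS} almost verbatim, reusing CDS with DPC (Theorem~\ref{thrm:S0_DPC} together with Corollaries~\ref{corr:S0} and~\ref{corr:Scheme0_DPC}) as the workhorse, and changing only the order in which the refinement receiver decodes its two layers. Fix random variables $U_c,U_r,T,U$ obeying the stated Markov conditions and an arbitrary rate triple $(R^c_{cc},R^c_{cr},R^c_{rr})\in{\cal R}^c_{\rm 3}(\kappa)$, and build the same codebook as in LDS: an RL channel codebook $\{U_r^m(j')\}$ with entries drawn i.i.d.\ from ${\cal T}_{\delta}^m(U_r)$, the nested source codebooks for $Z_c^n$ and $Z_r^n$ of Section~\ref{subs:SourceCodingAll}, the DPC bins of $T^m$-codewords, and the superposed channel input $U^m$ produced coordinatewise from $p_{U|TU_r}$ with the selected $U_r^m(j')$ playing the role of the encoder CSI $S^m$ of Theorem~\ref{thrm:S0_DPC}. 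Because the RL is carried by separate source and channel coding, \eqref{eqtn:RCRC3} guarantees, by an ordinary point-to-point channel-coding argument in which the $U_c$/$T$ part of the input is absorbed into the noise, that the refinement receiver can \emph{first} recover $U_r^m(j')$ --- and hence the RL bin index $j'$ --- from $V_r^m$ alone with vanishing error probability.

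Next I would observe that, conditioned on having correctly decoded $U_r^m(j')$, receiver $r$ faces a channel whose output for the purpose of decoding the common layer is the pair $(V_r^m,U_r^m)$; in other words the DPC has ``collapsed'' to the case of state known at both ends. Re-running the typicality count in the proof of Theorem~\ref{thrm:S0_DPC} with $V_k$ replaced by $(V_r,U_r)$ and $S=U_r$, the decoding condition $I(X;Z_c|Y_r)<\kappa\big[I(T;V_r,U_r)-I(T;U_r)\big]$ becomes precisely $I(X;Z_c|Y_r)<\kappa I(T;V_r|U_r)$, which is granted by \eqref{eqtn:RCRC2} together with the source-coding inequality \eqref{eqtn:RcrCcr}. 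At receiver $c$, which decodes only the common layer and does not obtain $U_r^m$, nothing changes relative to CDS with DPC with $S=U_r$, so Theorem~\ref{thrm:S0_DPC} supplies the condition $I(X;Z_c|Y_c)<\kappa\big[I(T;V_c)-I(T;U_r)\big]$, granted by \eqref{eqtn:RCRC1} and \eqref{eqtn:RccCcc}. A union bound over the vanishing error probabilities of the RL-decoding step, the two common-layer decoding steps, and the source covering steps then shows that every triple in ${\cal R}^c_{\rm 3}(\kappa)$ is achievable, i.e., ${\cal R}^c_{\rm 3}(\kappa)\subseteq{\cal C}_{\rm WZBC}(\kappa)$; intersecting with ${\cal R}^s_{\rm LDS}(D_c,D_r)$ of Section~\ref{subs:SourceCodingAll} yields the corresponding distortion pairs.

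As with the companion theorems, I would present only this sketch, since no new machinery is required. The one place that genuinely differs from the LDS proof --- and the step I expect to need the most care --- is the claim that decoding $U_r^m$ before the common layer at receiver $r$ lets one replace, in the confusable-codeword count of Theorem~\ref{thrm:S0_DPC}, the channel exponent $I(T;V_r)$ by $I(T;V_r,U_r)$, so that the net condition improves from $I(X;Z_c|Y_r)<\kappa\big[I(T;V_r)-I(T;U_r)\big]$ (as in LDS) to $I(X;Z_c|Y_r)<\kappa I(T;V_r|U_r)$; one must also check that the residual ``same-$i$, wrong-$j$'' events --- the analogue of ${\cal E}_5$ in Corollary~\ref{corr:Scheme0_DPC} --- still have vanishing probability. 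This is exactly the standard informed-decoder simplification of Gelfand--Pinsker coding, the DPC rate loss $I(T;U_r)$ being reclaimed once the state is available at the decoder, so it goes through; but it is worth spelling out because it is where the bookkeeping departs from that of Theorem~\ref{thm:LDS}.
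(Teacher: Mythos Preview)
Your proposal is correct and follows essentially the same approach as the paper: decode $U_r^m$ first at receiver $r$ via the point-to-point condition \eqref{eqtn:RCRC3}, then treat the common-layer decoding at receiver $r$ as CDS with DPC where the state $U_r^m$ is now known at the decoder (so the effective rate becomes $I(T;V_r,U_r)-I(T;U_r)=I(T;V_r|U_r)$), while receiver $c$ decodes the common layer exactly as in Theorem~\ref{thrm:S0_DPC}. Your write-up is in fact more detailed than the paper's own sketch, which dispatches the argument in three sentences.
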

\begin{proof}
Since RL is both encoded and decoded first, \eqref{eqtn:RCRC3} is necessary and sufficient for successful decoding of $U_r^m$.
Once $U_r^m$ is decoded, the channel between CL and receiver $r$ reduces to one with input $U_c^m$, output $(V_r^m,U_r^m)$, and CSI $U_r^m$.
It then follows from Theorem~\ref{thrm:S0_DPC} that \eqref{eqtn:RCRC1} and \eqref{eqtn:RCRC2} suffices for reliable transmission of $Z_c^m$.
Note that the right-hand side of \eqref{eqtn:RCRC2} is equivalent to $I(T;U_r,V_r) - I(T;U_r)$.
\end{proof}
\end{itemize}

We now present partial analytical results comparing performances of all the layered schemes.
 
\begin{lemma}
It is always true that ${\cal R}^c_2(\kappa) \subseteq {\cal R}^c_1(\kappa)$. 
Thus Scheme 1 is superior to Scheme 2.
\end{lemma}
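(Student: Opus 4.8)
The plan is to show $\mathcal{R}^c_2(\kappa) \subseteq \mathcal{R}^c_1(\kappa)$ by taking an arbitrary point $(R^c_{cc}, R^c_{cr}, R^c_{rr}) \in \mathcal{R}^c_2(\kappa)$, together with the random variables $U_c, U_r, T, U$ that witness membership (satisfying $T-(U_r,U_c)-(V_r,V_c)$, $(U_r,U_c)-U-(V_r,V_c)$, and the three inequalities \eqref{eqtn:CRRC1}--\eqref{eqtn:CRRC3}), and exhibiting a choice of auxiliary variables $\tilde{U}_c$ and $\tilde{U}$ with $\tilde{U}_c - \tilde{U} - (V_c,V_r)$ that makes the same triple lie in $\mathcal{R}^c_1(\kappa)$ via \eqref{eqtn:CRCR1}--\eqref{eqtn:CRCR3}. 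The natural guess is to keep $\tilde{U}_c = U_c$ and $\tilde{U} = U$; then \eqref{eqtn:CRCR1} coincides verbatim with \eqref{eqtn:CRRC1}, so the $R^c_{cc}$ constraint transfers for free. It remains to check that the $\mathcal{R}^c_2$ bounds on $R^c_{cr}$ and $R^c_{rr}$ together imply the $\mathcal{R}^c_1$ bounds on the same quantities.

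The key computation is to combine \eqref{eqtn:CRRC2} and \eqref{eqtn:CRRC3}. We have
\begin{align}
R^c_{cr} + R^c_{rr} &\leq \kappa\big[I(U_c;T,V_r) + I(T;V_r) - I(T;U_c)\big] \nonumber \\
&= \kappa\big[I(U_c;V_r) + I(U_c;T|V_r) + I(T;V_r) - I(T;U_c)\big]. \nonumber
\end{align}
Using $I(U_c;T|V_r) - I(T;U_c) = I(T;V_r|U_c) - I(T;V_r)$ (both equal $I(U_c,T;V_r) - I(U_c;V_r) - I(T;V_r)$, i.e.\ the chain-rule regrouping of $I(U_c,T;V_r)$), the right-hand side collapses to $\kappa\big[I(U_c;V_r) + I(T;V_r|U_c)\big] = \kappa\, I(U_c,T;V_r)$. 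Now I would invoke the Markov chain $T-(U_r,U_c)-(V_r,V_c)$, which gives $T - (U_c, U) - V_r$ (since $(U_r,U_c)-U-(V_r,V_c)$ lets us pass through $U$), hence $I(U_c,T;V_r) \le I(U_c, U;V_r) = I(U;V_r)$ because $U_c$ is a function-free "subset" of the conditioning — more precisely $I(U_c,T;V_r) \le I(U_c,U,T;V_r) = I(U;V_r)$ using $T-(U_c,U)-V_r$ and $(U_c,U_r)-U-V_r$. So $R^c_{cr}+R^c_{rr} \le \kappa I(U;V_r)$. Since in $\mathcal{R}^c_1$ we may freely split this budget, and since separately $R^c_{cr} \le \kappa I(U_c;T,V_r)$ trivially dominates $\kappa I(U_c;V_r)$ while $R^c_{rr}\le\kappa I(U;V_r|U_c)$ is what \eqref{eqtn:CRCR3} asks, I would conclude by checking that any $(R^c_{cr},R^c_{rr})$ feasible for Scheme 2 is componentwise below a corner of the Scheme-1 region: indeed $R^c_{cr}\le \kappa I(U_c;T,V_r)$ need not immediately give $R^c_{cr}\le\kappa I(U_c;V_r)$, so the cleaner route is to observe the $\mathcal{R}^c_1$ region with these variables is the set $\{R^c_{cr}\le\kappa I(U_c;V_r),\ R^c_{cr}+R^c_{rr}\le\kappa I(U;V_r)\}$ only if $R^c_{cr}\le\kappa I(U_c;V_r)$ — which fails in general. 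The fix is to choose $\tilde U_c$ in Scheme 1 to be a \emph{degraded} version of $U_c$, or simply to note $\mathcal{R}^c_1$ as stated is a union over all valid $(U_c,U)$, and pick $\tilde U_c$ so that $I(\tilde U_c;V_c)=R^c_{cc}$ and $I(\tilde U_c;V_r)\ge R^c_{cr}$ is achievable by the same $U_c$ since $I(U_c;V_r)\ge I(U_c;T,V_r)-[\text{slack}]$...

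\emph{The main obstacle} is precisely this last point: showing $R^c_{cr}\le \kappa I(U_c;V_r)$ from $R^c_{cr}\le\kappa I(U_c;T,V_r)$, which is false as stated. So the real argument must not keep $\tilde U_c=U_c$ verbatim; instead one shows that the \emph{entire region} $\mathcal{R}^c_2$ with given variables is contained in $\mathcal{R}^c_1$ with a possibly different, judiciously chosen $U_c'$ (e.g.\ $U_c' = (U_c,T)$, treating the pair as the Scheme-1 common auxiliary, which makes \eqref{eqtn:CRCR2} read $R^c_{cr}\le\kappa I(U_c,T;V_r)$ matching the Scheme-2 bound, and makes \eqref{eqtn:CRCR1} read $R^c_{cc}\le\kappa I(U_c,T;V_c)\ge\kappa I(U_c;V_c)$, while \eqref{eqtn:CRCR3} becomes $R^c_{rr}\le\kappa I(U;V_r|U_c,T)$ which one must reconcile with \eqref{eqtn:CRRC3}). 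Getting the auxiliary-variable substitution right so that all three Scheme-1 inequalities are simultaneously implied — and verifying the required Markov chain $U_c' - U - (V_c,V_r)$ still holds — is where the care is needed; the mutual-information manipulations themselves are routine chain-rule identities once the substitution is fixed.
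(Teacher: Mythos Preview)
Your diagnosis is correct: the naive choice $\tilde U_c = U_c$ fails because the Scheme-2 bound $R^c_{cr}\le\kappa I(U_c;T,V_r)$ can strictly exceed $\kappa I(U_c;V_r)$. However, your proposed remedy $\tilde U_c=(U_c,T)$ does not work either. With that substitution, \eqref{eqtn:CRCR3} becomes $R^c_{rr}\le\kappa I(U;V_r\mid U_c,T)=\kappa\big[I(U;V_r)-I(U_c,T;V_r)\big]$, and you would need $I(T;V_r)-I(T;U_c)\le I(U;V_r)-I(U_c,T;V_r)$. This already fails in the degenerate instance where $U_c$ is a constant and $T=U$ (with $U$ a deterministic function of $(U_r,U_c)$, so both Markov constraints hold): Scheme~2 then contains the point $(0,0,\kappa I(U;V_r))$, whereas $I(U;V_r\mid U_c,T)=I(U;V_r\mid U)=0$. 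So no single enlarged auxiliary of the form you suggest covers all corners of the Scheme-2 region; the ``judicious choice'' you allude to cannot be made once and for all.

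The paper's argument is structurally different. It fixes two extreme Scheme-1 auxiliaries, $U_c^{(1)}=U_c$ and $U_c^{(2)}=U$ (the latter giving $R^{(2)}_{rr}=0$ but $R^{(2)}_{cr}=I(U;V_r)$), and shows by an explicit time-sharing construction (introducing $Q\sim\Ber(\lambda)$ and setting the new auxiliary to $(U_c^{(Q)},Q)$) that every convex combination of the two rate triples lies in $\mathcal R^c_1$. The mixing weight $\lambda$ is then chosen so that $R^{(\lambda)}_{cr}$ equals exactly $I(U_c;T,V_r)$; this is possible because $R^{(1)}_{cr}=I(U_c;V_r)\le I(U_c;T,V_r)\le I(U;V_r)=R^{(2)}_{cr}$, the second inequality being precisely the chain-rule identity $R^c_{cr}+R^c_{rr}\le\kappa I(U_c,T;V_r)\le\kappa I(U;V_r)$ that you derived. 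With $\lambda$ so fixed, the componentwise dominations $R^{(\lambda)}_{cc}\ge R^c_{cc}$ and $R^{(\lambda)}_{rr}\ge R^c_{rr}$ follow from $R^{(1)}_{cc}\le R^{(2)}_{cc}$ and from the same sum identity. In short, your sum-rate computation is the essential analytic ingredient, but it must be paired with a time-sharing interpolation between $U_c$ and $U$, not with a fixed auxiliary enlargement.
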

\begin{proof}
It suffices to prove the lemma for $\kappa=1$.
Let $(R_{cc}^c,R_{cr}^c,R_{rr}^c)\in{\cal R}^c_2(1)$. Then there must exist $U_c^{(1)}$, $U_r^{(1)}$, $T$, and $U$ with $T-(U_c^{(1)},U_r^{(1)})-(V_c,V_r)$ and $(U_c^{(1)},U_r^{(1)})-U-(V_c,V_r)$ so that \eqref{eqtn:CRRC1}-\eqref{eqtn:CRRC3} are satisfied.
Now define $U_c^{(2)}=U$ and let
\begin{eqnarray*}
R_{cc}^{(1)} & = & I(U_c^{(1)};V_c) \\
R_{cr}^{(1)} & = & I(U_c^{(1)};V_r) \\
R_{rr}^{(1)} & = & I(U;V_r|U_c^{(1)}) 
\end{eqnarray*}
and
\begin{eqnarray*}
R_{cc}^{(2)} & = & I(U_c^{(2)};V_c) \;\;=\;\; I(U;V_c) \\
R_{cr}^{(2)} & = & I(U_c^{(2)};V_r) \;\;=\;\; I(U;V_r) \\
R_{rr}^{(2)} & = & I(U;V_r|U_c^{(2)}) \;\;=\;\; 0 \; .
\end{eqnarray*}
By definition, both $(R_{cc}^{(1)},R_{cr}^{(1)},R_{rr}^{(1)})$ and $(R_{cc}^{(2)},R_{cr}^{(2)},R_{rr}^{(2)})$ belong to ${\cal R}^c_1(1)$.
So does any convex combination of the two triplets. That is because if we define $Q\sim\Ber(\lambda)$, so that 
\[
p(q,u_c,u,v_c,v_r) = p(u,v_c,v_r)p(q)p(u_c|u,q)
\]
we can then write any convex combination as 
\begin{eqnarray*}
R_{cc}^{(\lambda)} & = & I(U_c^{(Q)};V_c|Q) \;\;=\;\; I(U_c^{(Q)},Q;V_c) \\
R_{cr}^{(\lambda)} & = & I(U_c^{(Q)};V_r|Q) \;\;=\;\; I(U_c^{(Q)},Q;V_r) \\
R_{rr}^{(\lambda)} & = & I(U;V_r|U_c^{(Q)},Q) \; .
\end{eqnarray*}
Defining $U_c^{(\lambda)}=(U_c^{(Q)},Q)$, one can see that 
$(R_{cc}^{(\lambda)},R_{cr}^{(\lambda)},R_{rr}^{(\lambda)}) \in {\cal R}^c_1(1)$.

It is clear that 
\begin{equation}
\label{eqtn:CRCRbetterCRRC1}
R_{cr}^{(1)}\leq I(U_c^{(1)};T,V_r) \; .
\end{equation}
It also follows from the Markov chain $(U_c^{(1)},T)-U-V_r$ that
\begin{equation}
\label{eqtn:CRCRbetterCRRC7}
I(U_c^{(1)},T;V_r) \leq I(U;V_r)\; .
\end{equation}
A fact which is not as obvious is
\begin{equation}
\label{eqtn:CRCRbetterCRRC3}
R_{cr}^{(1)} + R_{rr}^{(1)} \geq I(U_c^{(1)};T,V_r) \; .
\end{equation}
Towards proving \eqref{eqtn:CRCRbetterCRRC3}, we observe using \eqref{eqtn:CRCRbetterCRRC7} that 
\begin{eqnarray}
I(U;V_r) & \geq & I(U_c^{(1)},T;V_r) \nonumber \\
& = & I(U_c^{(1)};V_r|T) + I(T;V_r) \nonumber \\
\label{eqtn::CRCRbetterCRRC8}
& = & I(U_c^{(1)};T,V_r) + I(T;V_r) - I(T;U_c^{(1)}) \; .
\end{eqnarray}
But since $R_{cr}^{(1)}+R_{rr}^{(1)}= I(U;V_r)$, this yields \eqref{eqtn:CRCRbetterCRRC3} directly.

Next, we choose $\lambda$ so that
\[
R_{cr}^{(\lambda)} = I(U_c^{(1)};T,V_r) \; .
\]
That this can always be done follows from \eqref{eqtn:CRCRbetterCRRC1} and \eqref{eqtn:CRCRbetterCRRC3} together with the observation that $R_{cr}^{(1)} + R_{rr}^{(1)} = R_{cr}^{(2)}$.
We then simultaneously have 
\begin{eqnarray}
\label{eqtn:CRCRbetterCRRC4}
R_{cc}^{(\lambda)} & \geq & R_{cc}^c \\
\label{eqtn:CRCRbetterCRRC5}
R_{cr}^{(\lambda)} & \geq & R_{cr}^c \\
\label{eqtn:CRCRbetterCRRC6}
R_{rr}^{(\lambda)} & \geq & R_{rr}^c \; .
\end{eqnarray}
Here, \eqref{eqtn:CRCRbetterCRRC4} follows from the fact that $R_{cc}^c\leq I(U_c^{(1)};V_c) =  R_{cc}^{(1)}\leq R_{cc}^{(2)}$.
The fact that $R_{cr}^c\leq I(U_c^{(1)};T,V_r)=R_{cr}^{(\lambda)}$ yields \eqref{eqtn:CRCRbetterCRRC5}.
Finally, \eqref{eqtn:CRCRbetterCRRC6} follows because
\begin{eqnarray}
R_{rr}^{(\lambda)} & = & I(U;V_r) - R_{cr}^{(\lambda)} \nonumber \\
& = & I(U;V_r) - I(U_c^{(1)};T,V_r) \nonumber \\
\label{eqtn:CRCRbetterCRRC9}
& \geq & I(T;V_r) - I(T;U_c^{(1)}) \\
& \geq & R_{rr}^c \nonumber
\end{eqnarray}
where we used \eqref{eqtn::CRCRbetterCRRC8} in showing \eqref{eqtn:CRCRbetterCRRC9}.
\end{proof}

It is also easy to show that under the regime where $U=U_c+U_r$ where $+$ is an appropriately defined addition operation with an inverse, i.e., $U_r=U-U_c$, and $U_c$ and $U_r$ are independent, Scheme 1 becomes a special case of LDS. 
Thus, for both the quadratic Gaussian and the binary Hamming cases, LDS performs at least as well as Scheme 1.
To prove this claim, it suffices to pick $T=U_c$ in LDS, which achieves the performance
\begin{eqnarray*}
R_{cc}^c & = & \kappa[I(T;V_c) - I(T;U_r)]\\
& = & \kappa I(U_c;V_c) \\
R_{cr}^c & = & \kappa[I(T;V_r) - I(T;U_r)]\\
& = & \kappa I(U_c;V_r) \\
R_{rr}^c & = & \kappa I(U_r;T,V_r)\\
 & = & \kappa[I(U_r;U_c) +I(U_r;V_r|U_c)]\\
 & = & \kappa I(U_r+U_c;V_r|U_c)\\
 & = & \kappa I(U;V_r|U_c) 
\end{eqnarray*}
making Scheme 1 is a special case of LDS.

We can also compare the performances of Scheme 3 and LDS for the quadratic Gaussian case with $\kappa=1$.
Using the same random variables as in LDS, \eqref{eqtn:RCRC1}-\eqref{eqtn:RCRC3} translate to the achievability of
\begin{eqnarray}
\label{eqtn:GaussianRCRC1}
R_{cc}^c & = & I(\gamma U_r + U_c;U_c+U_r+W_c) - I(\gamma U_r + U_c;U_r)  \nonumber \\
& = & \frac{1}{2}\log \frac{1+\frac{P}{\mathbf{W}_c}}{1 + \bar{\nu} P \left(\frac{\gamma^2}{\nu P} + \frac{(1-\gamma)^2}{\mathbf{W}_c}\right)} \\
\label{eqtn:GaussianRCRC2}
R_{cr}^c & = & I(\gamma U_r + U_c;U_c+U_r+W_r|U_r) \nonumber \\
& = & I(U_c;U_c+W_r) \nonumber \\
& = & \frac{1}{2}\log \left( 1 + \frac{\nu P}{\mathbf{W}_r}\right) \\
\label{eqtn:GaussianRCRC3}
R_{rr}^c & = & I(U_r;U_c+U_r+W_r) \nonumber \\
& = & \frac{1}{2}\log \left( 1 + \frac{\bar{\nu} P}{\nu P+\mathbf{W}_r}\right) 
\end{eqnarray}
where \eqref{eqtn:GaussianRCRC1} follows from \eqref{eqtn:GaussianRCCR1}.
Since the choice of $\gamma$ affects only $R_{cc}^c$, it can be picked so as to maximize $R_{cc}^c$.
In fact, this choice coincides with Costa's optimal $\gamma$ for the point-to-point channel between $U_c$ and $V_c$, where the CSI $U_r$ is available at the encoder~\cite{Costa}. 
In other words, the optimal choice is given by (cf.~\cite[Equation~(7)]{Costa})
\[
\gamma = \frac{\nu P}{\nu P+\mathbf{W}_c}
\]
yielding
\begin{equation}
\label{eqtn:GaussianRCRC4}
R_{cc}^c = \frac{1}{2} \log \left(1+\frac{\nu P}{\mathbf{W}_c}\right) \; .
\end{equation}
Also note that $R_{cr}^c+R_{rr}^c=\frac{1}{2}\log\left(1+\frac{P}{\mathbf{W}_r}\right)$, thereby keeping \eqref{eqtn:GaussianDc2} and \eqref{eqtn:GaussianDr2} valid.
That is, 
\begin{eqnarray}
\label{eqtn:GaussianRCRC_Dc}
D_c & = &  \frac{\mathbf{N}_c\mathbf{W}_c}{\nu P + \mathbf{W}_c} \\
\label{eqtn:GaussianRCRC_Dr}
D_r & = & \frac{\mathbf{N}_r}{1+\mathbf{N}_r\left[\frac{1}{D_c}-\frac{1}{\mathbf{N}_c}\right]} \cdot \frac{\nu P + \mathbf{W}_r}{P + \mathbf{W}_r} \; .
\end{eqnarray}
Solving for $\nu$ in \eqref{eqtn:GaussianRCRC_Dc} and substituting it in \eqref{eqtn:GaussianRCRC_Dr} yields
\begin{equation}
\label{eqtn:GaussianRCRC_DrFinal}
D_r = \frac{\mathbf{N}_r\mathbf{W}_r}{P+\mathbf{W}_r} \cdot \frac{D_c\mathbf{N}_c + \frac{\mathbf{N}_c\mathbf{W}_c}{\mathbf{W}_r}(\mathbf{N}_c-D_c)}{D_c\mathbf{N}_c + \mathbf{N}_r(\mathbf{N}_c-D_c)} \; .
\end{equation}
for the entire range
\[
\frac{\mathbf{N}_c\mathbf{W}_c}{P+\mathbf{W}_c} \leq D_c \leq \mathbf{N}_c \; .
\]

\begin{lemma}
\label{lmma:RCCRbetterRCRC}
For the quadratic Gaussian problem with $\kappa=1$, the performance of LDS is superior to that of Scheme 3.
\end{lemma}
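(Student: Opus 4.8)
The plan is to reduce the statement to a pointwise comparison of the two closed-form distortion curves and then settle that comparison by elementary algebra. Since LDS and Scheme~3 use identical source codes, only their achievable $(D_c,D_r)$ regions need be compared, and both have already been made explicit: LDS attains the convex hull of $D^*_{\rm LDS}(D_c)$ in \eqref{eqtn:GaussianRCCR_Dr} over $\frac{\vari{N}_c\vari{W}_c}{P+\vari{W}_c}\le D_c\le D_c^{\max}$, while Scheme~3 attains the convex hull of the curve $D^{(3)}_r(D_c)$ given by \eqref{eqtn:GaussianRCRC_DrFinal} over $\frac{\vari{N}_c\vari{W}_c}{P+\vari{W}_c}\le D_c\le\vari{N}_c$. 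First I would prove the pointwise inequality $D^*_{\rm LDS}(D_c)\le D^{(3)}_r(D_c)$ on the common range $\frac{\vari{N}_c\vari{W}_c}{P+\vari{W}_c}\le D_c\le D_c^{\max}$; since the convex hull is a monotone operation, this already yields $D_{\rm LDS}(D_c)\le\mathrm{conv}\,D^{(3)}_r(D_c)$ there. For the remaining interval $D_c^{\max}<D_c\le\vari{N}_c$ I would invoke the trivial converse: every pair achieved by Scheme~3 there obeys $D_r\ge D^{WZ}_r(\kappa C_r)$, whereas a short check of \eqref{eqtn:GaussianRCCR_Dr} and \eqref{eqtn:GaussianRCCR_Dcmax} shows $D^*_{\rm LDS}(D_c^{\max})=D^{WZ}_r(\kappa C_r)$, so LDS attains $\big(D_c^{\max},D^{WZ}_r(\kappa C_r)\big)$ and hence also $(D_c,D^{WZ}_r(\kappa C_r))$ for all $D_c\ge D_c^{\max}$ by degrading receiver~$c$'s reconstruction. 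The two observations together place the whole Scheme~3 region inside the LDS region.

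For the pointwise comparison I would first rewrite \eqref{eqtn:GaussianRCRC_DrFinal} as
\[
D^{(3)}_r(D_c)=\frac{\vari{N}_r\vari{N}_c\big[\vari{W}_r D_c+\vari{W}_c(\vari{N}_c-D_c)\big]}{(P+\vari{W}_r)\big[D_c\vari{N}_c+\vari{N}_r(\vari{N}_c-D_c)\big]}\,,
\]
which shares the strictly positive factor $\vari{N}_r\big/\big[D_c\vari{N}_c+\vari{N}_r(\vari{N}_c-D_c)\big]$ with both branches of \eqref{eqtn:GaussianRCCR_Dr}. Cancelling it and clearing the remaining (positive) denominators, the claim in the regime $\vari{W}_c\le\vari{W}_r$ collapses to
\[
(\vari{W}_r-\vari{W}_c)\big[(P+\vari{W}_c)D_c-\vari{W}_c\vari{N}_c\big]\ge 0,
\]
which is immediate from $\vari{W}_r\ge\vari{W}_c$ and $D_c\ge\vari{N}_c\vari{W}_c/(P+\vari{W}_c)$. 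In the regime $\vari{W}_c>\vari{W}_r$, using instead the first branch of \eqref{eqtn:GaussianRCCR_Dr}, the same reduction produces an inequality quadratic in $D_c$; the aim is to show it factors as
\[
(\vari{W}_r-\vari{W}_c)(\vari{N}_c-D_c)\big[\vari{W}_c(\vari{N}_c-D_c)-P D_c\big]\ge 0,
\]
a product of a negative factor, a nonnegative factor ($D_c\le\vari{N}_c$) and a nonpositive factor --- the last because $D_c\ge\vari{N}_c\vari{W}_c/(P+\vari{W}_c)$ is equivalent to $P D_c\ge\vari{W}_c(\vari{N}_c-D_c)$ --- hence nonnegative.

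I expect the only real work to be in the $\vari{W}_c>\vari{W}_r$ case: checking that, after cross-multiplication, the quadratic in $D_c$ genuinely collapses to the stated three-factor product rather than leaving an irreducible remainder. The clean way to see this is to group the $\vari{W}_c$-monomials into the perfect square $\vari{W}_c(\vari{N}_c-D_c)^2$ and the $P$-monomials into $-P D_c(\vari{N}_c-D_c)$; this is precisely the step that exposes the factorization and where the endpoint constraint $D_c\ge\vari{N}_c\vari{W}_c/(P+\vari{W}_c)$ fixes the sign. Everything else is bookkeeping: confirming that each denominator cleared along the way --- $D_c\vari{N}_c+\vari{N}_r(\vari{N}_c-D_c)$, $P+\vari{W}_c$, and $(\vari{W}_r-\vari{W}_c)\vari{N}_c+(P+\vari{W}_c)D_c$ (which equals at least $\vari{N}_c\vari{W}_r>0$ on the range) --- is strictly positive, and verifying the endpoint identity $D^*_{\rm LDS}(D_c^{\max})=D^{WZ}_r(\kappa C_r)$ used in the reconciliation above.
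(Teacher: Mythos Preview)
Your approach is correct and essentially identical to the paper's: both proceed by pointwise comparison of the closed-form curves \eqref{eqtn:GaussianRCCR_Dr} and \eqref{eqtn:GaussianRCRC_DrFinal}, and in each regime the algebra collapses to exactly the same elementary inequality $(P+\vari{W}_c)D_c\ge\vari{W}_c\vari{N}_c$ (your factored forms are just this inequality multiplied by the sign-carrying prefactors). Your explicit handling of the interval $D_c^{\max}<D_c\le\vari{N}_c$ via the endpoint identity $D^*_{\rm LDS}(D_c^{\max})=D_r^{WZ}(\kappa C_r)$ is a small but welcome addition that the paper leaves implicit.
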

\begin{proof}

Let us first compare \eqref{eqtn:GaussianRCRC_DrFinal}  to \eqref{eqtn:GaussianRCCR_Dr} for the $\mathbf{W}_c\geq \mathbf{W}_r$ case.
We shall show for all $\frac{\mathbf{N}_c\mathbf{W}_c}{P+\mathbf{W}_c}\leq D_c\leq\mathbf{N}_c$ that
\[
\frac{\mathbf{N}_r\mathbf{N}_c^2}{D_c\mathbf{N}_c 
+\mathbf{N}_r (\mathbf{N}_c-D_c)} \cdot \frac{\mathbf{W}_r D_c}{(\mathbf{W}_r-\mathbf{W}_c)\mathbf{N}_c+(P + \mathbf{W}_c)D_c} 
\leq \frac{\mathbf{N}_r\mathbf{W}_r}{P+\mathbf{W}_r} \cdot \frac{D_c\mathbf{N}_c + \frac{\mathbf{N}_c\mathbf{W}_c}{\mathbf{W}_r}(\mathbf{N}_c-D_c)}{D_c\mathbf{N}_c + \mathbf{N}_r(\mathbf{N}_c-D_c)}
\]
or equivalently that
\begin{equation}
\label{eqtn:CompareRCs1}
D_c\mathbf{N}_c(P+\mathbf{W}_r) \leq \left(D_c + \frac{\mathbf{W}_c}{\mathbf{W}_r}(\mathbf{N}_c-D_c)\right)\Big((\mathbf{W}_r-\mathbf{W}_c)\mathbf{N}_c+(P + \mathbf{W}_c)D_c\Big) \; .
\end{equation}
Adding $D_c\mathbf{N}_c(\mathbf{W}_c-\mathbf{W}_r)$ to both sides of \eqref{eqtn:CompareRCs1} yields
\begin{equation}
\label{eqtn:CompareRCs2}
D_c\mathbf{N}_c(P+\mathbf{W}_c) \leq \left(D_c + \frac{\mathbf{W}_c}{\mathbf{W}_r}(\mathbf{N}_c-D_c)\right)(P + \mathbf{W}_c)D_c + \frac{\mathbf{W}_c}{\mathbf{W}_r}(\mathbf{N}_c-D_c)(\mathbf{W}_r-\mathbf{W}_c)\mathbf{N}_c
\; .
\end{equation}
Taking the first term on the right-hand side of \eqref{eqtn:CompareRCs2} to the left-hand side, we obtain
\[
D_c(P+\mathbf{W}_c)(\mathbf{N}_c-D_c) \left(1-\frac{\mathbf{W}_c}{\mathbf{W}_r}\right) \leq \frac{\mathbf{W}_c}{\mathbf{W}_r}(\mathbf{N}_c-D_c)(\mathbf{W}_r-\mathbf{W}_c)\mathbf{N}_c
\]
or equivalently
\[
D_c(P+\mathbf{W}_c) \geq \mathbf{W}_c\mathbf{N}_c 
\]
which is guaranteed. Equality is satisfied in only three trivial cases: (i) When $D_c=D_c^{WZ}(C_c)$, which coincides with CDS, (ii) when $\mathbf{W}_c=\mathbf{W}_r$, and (iii) when $D_c=\mathbf{N}_c$, which should be excluded if $D_c^{\max}<\mathbf{N}_c$.

As for the $\mathbf{W}_c< \mathbf{W}_r$ case, to prove that LDS is superior, we need to show 
\[
\frac{\mathbf{N}_r\mathbf{N}_c^2}{D_c\mathbf{N}_c 
+\mathbf{N}_r (\mathbf{N}_c-D_c)} \cdot \frac{\mathbf{W}_c}{P+\mathbf{W}_c} 
\leq \frac{\mathbf{N}_r\mathbf{W}_r}{P+\mathbf{W}_r} \cdot \frac{D_c\mathbf{N}_c + \frac{\mathbf{N}_c\mathbf{W}_c}{\mathbf{W}_r}(\mathbf{N}_c-D_c)}{D_c\mathbf{N}_c + \mathbf{N}_r(\mathbf{N}_c-D_c)}
\]
or equivalently that
\begin{equation}
\label{eqtn:CompareRCs3}
\frac{\mathbf{N}_c\mathbf{W}_c}{P+\mathbf{W}_c} 
\leq \frac{D_c\mathbf{W}_r+ \mathbf{W}_c(\mathbf{N}_c-D_c)}{P+\mathbf{W}_r} \; .
\end{equation}
Rearranging \eqref{eqtn:CompareRCs3}, we have
\[
\mathbf{N}_c\mathbf{W}_c(P+\mathbf{W}_r)
\leq (P+\mathbf{W}_c)\Big(D_c(\mathbf{W}_r- \mathbf{W}_c)+\mathbf{W}_c\mathbf{N}_c\Big)
\]
which is once again equivalent to 
\[
D_c(P+\mathbf{W}_c) \geq \mathbf{W}_c\mathbf{N}_c \; .
\]
Equality in this case is satisfied if and only if $D_c=D_c^{WZ}(C_c)$.
\end{proof}

To summarize, for the quadratic Gaussian case with $\kappa=1$, LDS is provably the best. 
In the binary Hamming case, however, LDS is better than both Scheme 1 and Scheme 2, but an analytical comparison with Scheme 3 eluded us. 
Nevertheless, with an extensive set of numerical evaluations, we did not encounter a single case that Scheme 3 was better than LDS for the binary Hamming case with $\kappa=1$.

\subsection{Proof of Lemma~\ref{lmma:ClosedFormRCCR}}
\label{subs:App_lmma:ClosedFormRCCR}

It follows from \eqref{eqtn:GaussianDc2} and \eqref{eqtn:GaussianDr2} that by varying $\nu$ and $\gamma$, we obtain the tradeoff 
\begin{eqnarray}
\label{eqtn:GaussianDc_explicit}
D_c & = & \vari{N}_c \frac{Pa(\nu,\gamma)+ \vari{W}_c}{P + \vari{W}_c} \\
\label{eqtn:GaussianDr_explicit}
D_r & = & \frac{\vari{N}_r}{1+\vari{N}_r\left[\frac{1}{D_c}-\frac{1}{\vari{N}_c}\right]} \cdot \frac{1}{1+\frac{Pb(\nu,\gamma)}{\vari{W}_r}}
\end{eqnarray}
where
\begin{eqnarray*}
a(\nu,\gamma) &  =  & \bar{\nu}\left(\frac{\vari{W}_c}{\nu P}\gamma^2+(1-\gamma)^2\right) \\
b(\nu,\gamma) &  =  & \bar{\nu}\left(\frac{\vari{W}_r}{\nu P}\gamma^2+(1-\gamma)^2\right) \; .
\end{eqnarray*}
We next fix $D_c$, which, in turn, fixes $a(\nu,\gamma)$ as 
\begin{equation}
\label{eqtn:aDc}
a(\nu,\gamma) = \frac{D_c[P+\vari{W}_c]-\vari{W}_c\vari{N}_c}{\vari{N}_c P }
\end{equation}
and minimize $D_r$, which reduces to maximizing $b(\nu,\gamma)$.
Since neither $R^c_{cc}$ nor $R^c_{cr}$ can be negative, we need both $a(\nu,\gamma)\leq 1$ and $b(\nu,\gamma) \leq 1$ to be satisfied.
The former requirement is guaranteed because we naturally limit ourselves to $D_c\leq \vari{N}_c$.
The latter, on the other hand, becomes vacuous since rewriting 
\eqref{eqtn:GaussianPhiCondition2} gives 
\begin{equation}
\label{eqtn:RCCRConstraints3}
b(\nu,\gamma) \leq \frac{\vari{N}_c[Pa(\nu,\gamma)+\vari{W}_c]-\vari{N}_r\vari{W}_r[1-a(\nu,\gamma)]}{\vari{N}_c[Pa(\nu,\gamma)+\vari{W}_c]+P\vari{N}_r[1-a(\nu,\gamma)]} 
\end{equation}
whose right-hand side is always less than or equal to 1.

Now if $\vari{W}_c\geq \vari{W}_r$, we always have $a(\nu,\gamma)\geq b(\nu,\gamma)$ since
\[
b(\nu,\gamma) = a(\nu,\gamma) - \frac{\bar{\nu}\gamma^2}{\nu P} [\vari{W}_c-\vari{W}_r] \; .
\]
Thus, among all choices of $\gamma$ and $\nu$ which satisfy \eqref{eqtn:aDc}, the one that potentially minimizes $D_r$ is $\gamma=0$ and
\[
\nu = \left(1-\frac{D_c}{\vari{N}_c}\right)\left(1+\frac{\vari{W}_c}{P}\right) \; .
\]
That is because with this choice we have $b(\nu,\gamma)=a(\nu,\gamma)$. 
It then remains to check \eqref{eqtn:RCCRConstraints3}, which can be written after some algebra as 
\[
\vari{N}_c\vari{N}_r[\vari{W}_c-\vari{W}_r] \geq D_c [P+\vari{W}_c][\vari{N}_r-\vari{N}_c] \; .
\]
This is granted if $\vari{N}_r\leq\vari{N}_c$ and is equivalent to
\begin{equation}
\label{eqtn:RCCRConstraints4}
D_c \leq \frac{\vari{N}_c\vari{N}_r[\vari{W}_c-\vari{W}_r]}{[P+\vari{W}_c][\vari{N}_r-\vari{N}_c]} 
\end{equation}
if $\vari{N}_r>\vari{N}_c$.
The constraint \eqref{eqtn:RCCRConstraints4}, on the other hand, 
is in effect only if 
\[
\vari{N}_c(P+\vari{W}_c) < \vari{N}_r(P+\vari{W}_r) 
\]
for otherwise, it is trivially satisfied because $D_c\leq \vari{N}_c$.
Substituting $b(\nu,\gamma)=a(\nu,\gamma)$ in \eqref{eqtn:GaussianDr} yields
\[
D_r = \frac{\vari{N}_r\vari{W}_r \vari{N}_c^2D_c}{\Big(D_c\vari{N}_c +\vari{N}_r (\vari{N}_c-D_c)\Big)\Big((\vari{W}_r-\vari{W}_c)\vari{N}_c+(P + \vari{W}_c)D_c\Big)} \; .
\]

On the other hand, if $\vari{W}_c< \vari{W}_r$, it is more helpful to write
\[
b(\nu,\gamma) = \frac{\vari{W}_r}{\vari{W}_c} a(\nu,\gamma) - \bar{\nu} (1-\gamma)^2 \left[ \frac{\vari{W}_r}{\vari{W}_c}-1\right] 
\]
as this reveals $b(\nu,\gamma)\leq \frac{\vari{W}_r}{\vari{W}_c} a(\nu,\gamma)$.
Thus, the optimal choice of parameters is potentially $\gamma=1$ and
\[
\nu = \frac{\vari{W}_c\vari{N}_c}{D_c[P+\vari{W}_c]}
\]
provided this choice satisfies \eqref{eqtn:RCCRConstraints3}.
Once again, after some algebra, that translates to
\[
D_c \leq \frac{\vari{N}_cP[\vari{W}_c\vari{N}_c-\vari{W}_r\vari{N}_r]+\vari{W}_c\vari{W}_r\vari{N}_c[\vari{N}_c-\vari{N}_r]}{[P+\vari{W}_c][\vari{N}_c-\vari{N}_r]\vari{W}_r} 
\]
Substituting $b(\nu,\gamma)=\frac{\vari{W}_r}{\vari{W}_c}a(\nu,\gamma)$ in \eqref{eqtn:GaussianDr} yields
\[
D_r = \frac{\vari{N}_r\vari{N}_c^2\vari{W}_c}{\Big(D_c\vari{N}_c +\vari{N}_r (\vari{N}_c-D_c)\Big)\Big(P+\vari{W}_c\Big)} \; .
\]
Combining all the above results yields \eqref{eqtn:GaussianRCCR_Dr} and \eqref{eqtn:GaussianRCCR_Dcmax}.

\subsection{Proof of Lemma~\ref{lmma:GaussianSeparate}}
\label{subs:App_lmma:GaussianSeparate}

The Gaussian broadcast channel capacity is achieved by Gaussian $U_b$ and $U-U_b$ with $U_b\perp U-U_b$ (cf.~\cite{CoverThomas}). Let $0\leq\nu\leq 1$ and $\bar{\nu}=1-\nu$ control the power allocation between $U_b$ and $U-U_b$. The source rate-distortion function is similarly achieved by the test channel $X=Z_k+S_k$ with $Z_k\perp S_k$ for $k=b,g$. For these choices, \eqref{eqtn:SeparateCapacity1}, \eqref{eqtn:SeparateCapacity2}, \eqref{eqtn:SeparateRate1} and \eqref{eqtn:SeparateRate2} can be combined to give the following characterization of achievable distortions for general $\kappa$:

\begin{align}
\frac{\vari{N}_b}{D_b} &\leq \left(1+\frac{\nu P}{\bar{\nu}P+\vari{W}_b}\right)^{\kappa} \; , \label{eqtn:GaussianSeparateWorseChannel} \\
\frac{\vari{N}_b^2\vari{N}_g}{D_g\left[\vari{N}_g\vari{N}_b+D_b\left(\vari{N}_b-\vari{N}_g\right)\right]} &\leq \left(1+\frac{\nu P}{\bar{\nu}P+\vari{W}_b}\right)^{\kappa}\left(1+\frac{\bar{\nu}P}{\vari{W}_g}\right) ^{\kappa} &{\rm if\ } X-Y_g-Y_b \label{eqtn:GaussianSeparateBetterSideBetterChannel} \\
\frac{\vari{N}_g}{\min\left\{D_g,D_b+\frac{D_bD_g}{\vari{N}_b\vari{N}_g}\left(\vari{N}_g-\vari{N}_b\right)\right\}}
&\leq \left(1+\frac{\nu P}{\bar{\nu}P+\vari{W}_b}\right)^{\kappa}\left(1+\frac{\bar{\nu}P}{\vari{W}_g}\right) ^{\kappa} , &{\rm if\ }X-Y_b-Y_g\;. \label{eqtn:GaussianSeparateBetterSideWorseChannel}
\end{align}

The key to the proof is the observation that for optimal performance, \eqref{eqtn:GaussianSeparateWorseChannel} needs to be satisfied with equality for any $\kappa$.
To see this, assume that $(D_b,D_g)$ with $D_b<\mathbf{N}_b$ satisfies \eqref{eqtn:GaussianSeparateWorseChannel} with strict inequality for some $0<\nu\leq1$.
Then one can decrease $\nu$ until equality is obtained in \eqref{eqtn:GaussianSeparateWorseChannel}, and still satisfy \eqref{eqtn:GaussianSeparateBetterSideBetterChannel} or \eqref{eqtn:GaussianSeparateBetterSideWorseChannel}, depending on whether $X-Y_g-Y_b$ or $X-Y_b-Y_g$, respectively.
That, in turn, follows because the right-hand side of either of \eqref{eqtn:GaussianSeparateBetterSideBetterChannel} or \eqref{eqtn:GaussianSeparateBetterSideWorseChannel} are decreasing in $\nu$. Thus, if \eqref{eqtn:GaussianSeparateWorseChannel} is not tight, one can keep $D_b$ the same while decreasing $D_g$.

When $\kappa=1$, equality in \eqref{eqtn:GaussianSeparateWorseChannel} translates to 
\[
\bar{\nu} P = \frac{D_b(P+\mathbf{W}_b)}{\mathbf{N}_b} - \mathbf{W}_b \; .
\]
For the case $X-Y_g-Y_b$, \eqref{eqtn:GaussianSeparateBetterSideBetterChannel} then becomes
\[
D_g\geq \frac{\mathbf{N}_g\mathbf{N}_b^2 \mathbf{W}_g D_b}{\Big(D_b\mathbf{N}_b 
+\mathbf{N}_g (\mathbf{N}_b-D_b)\Big)\Big((\mathbf{W}_g-\mathbf{W}_b)\mathbf{N}_b+(P + \mathbf{W}_b)D_b \Big)} \; .
\]
If $X-Y_b-Y_g$, on the other hand, \eqref{eqtn:GaussianSeparateBetterSideWorseChannel} implies
\[
D_g \geq \frac{\mathbf{N}_g\mathbf{W}_g D_b}{ \Big((\mathbf{W}_g- \mathbf{W}_b)\mathbf{N}_b+(P+\mathbf{W}_b)D_b \Big)} 
\]
and
\[
D_g \geq \frac{\mathbf{N}_b\mathbf{N}_g \Big(\mathbf{N}_g\mathbf{W}_g  - (\mathbf{W}_g- \mathbf{W}_b)\mathbf{N}_b-(P+\mathbf{W}_b)D_b \Big)}{ \Big((\mathbf{W}_g- \mathbf{W}_b)\mathbf{N}_b+(P+\mathbf{W}_b)D_b \Big)\left(\mathbf{N}_g-\mathbf{N}_b\right)} 
\]
simultaneously, which is the desired result.

\subsection{Proof of Lemma~\ref{lmma:BinarySeparate}}
\label{subs:App_lmma:BinarySeparate}

For the binary symmetric channel, ${\cal C}(\kappa)$ is achieved by $U_b\sim \Ber(\tfrac{1}{2})$ and $U=U_b\oplus U_g$ with $U_g\sim\Ber(\theta)$ and $U_g$ independent of $U_b$.
The parameter $\theta$ serves as a tradeoff between $R_b$ and $R_g$.
The conditions \eqref{eqtn:SeparateCapacity1} and \eqref{eqtn:SeparateCapacity2} then become (cf.~\cite{CoverThomas})
\begin{eqnarray}
\label{eqtn:SeparateCapacityBinary1}
R_b & \leq & \kappa [1-H_2(\theta\star p_b)] \\
\label{eqtn:SeparateCapacityBinary2}
R_g & \leq & \kappa [H_2(\theta\star p_g) -H_2(p_g)] \; .
\end{eqnarray}

For the source coding part, we evaluate ${\cal R}^*(D_b,D_g)$ only with the auxiliary random variables chosen as in Section~\ref{subs:BinaryRCCR} where subscripts $c$ and $r$ are to be replaced by $r$ and $c$ or by $c$ and $r$.

These simple choices may {\em potentially} result in degradation of the separate coding performance, as the bounds on the alphabet sizes for ${\cal Z}_b$ and ${\cal Z}_g$ in~\cite{SteinbergMerhav,TianDiggavi,TianDiggavi2} are much larger.
However, our limited choice of $(Z_b,Z_g)$ can be justified in two ways: (i) to the best of our knowledge, there is no other choice known to achieve better rates, and (ii) to be fair, we use the same choice in our joint source-channel coding schemes.

As in the quadratic Gaussian case, we can write
\begin{equation}
\label{eqtn:MutualInformationBinary}
I(X;Z_k|Y_{k'}) = q_k r(\alpha_k,\beta_{k'})
\end{equation}
for $k,k'\in\{b,g\}$.
Combining \eqref{eqtn:SeparateRate1}, \eqref{eqtn:SeparateCapacityBinary1}, and \eqref{eqtn:MutualInformationBinary} yields \eqref{eqtn:BinarySeparateWorseChannel}.
Similarly, combining \eqref{eqtn:SeparateRate2}, \eqref{eqtn:SeparateCapacityBinary2}, and \eqref{eqtn:MutualInformationBinary}, we obtain
\eqref{eqtn:BinarySeparateBetterSideBetterChannel} when $X-Y_g-Y_b$, and \eqref{eqtn:BinarySeparateBetterSideWorseChannel} when $X-Y_b-Y_g$.



\begin{thebibliography}{99}

\bibitem{Bergmans}
P.~P.~Bergmans, ``Random coding theorem for broadcast channels with degraded components,'' {\em IEEE 
Transactions on Information Theory}, 19(2):197--207, March 1973.

\bibitem{Caire}
G.~Caire and S.~Shamai (Shitz), ``On the achievable throughput of a multiantenna Gaussian broadcast channel,'', \emph{IEEE Transactions on Information Theory}, 49(7):1691--1706, July 2003.

\bibitem{Costa}
M.~H.~M.~Costa, ``Writing on dirty paper,'' \emph{IEEE Transactions on Information Theory}, 29(3):439–-441, May 1983.

\bibitem{CoverThomas}
T.~Cover and J.~Thomas, \emph{Elements of Information Theory}, New York: Wiley, 1991.

\bibitem{Gallager}
R.~G.~Gallager, ``Capacity and coding for degraded broadcast channels,'' {\em Probl. Peredach. Inform.}, 10(3):3--14, 1974.

\bibitem{Gelfand}
S.~I.~Gel'fand and M.~S.~Pinsker, ``Coding for channels with random parameters,'' \emph{Problems of Control and Information Theory}, 9(1):19--31, 1980.

\bibitem{DenizITW2007}
D.~G\"{u}nd\"{u}z and E.~Erkip, ``Reliable cooperative source transmission with side information,'' \emph{IEEE Information Theory Workshop}, Bergen, Norway, July 2007.

\bibitem{Deniz}
D.~G\"{u}nd\"{u}z, J.~Nayak, and E.~Tuncel, ``Wyner-Ziv coding over broadcast channels using hybrid digital/analog transmission,''  \emph{IEEE International Symposium on Information Theory}, Toronto, ON, July 2008.

\bibitem{Kramer}
G.~Kramer and S.~Shamai, ``Capacity for classes of broadcast channels with receiver side information,'' \emph{IEEE Information Theory Workshop}, Lake Tahoe, CA, September 2007.

\bibitem{Laneman}
J.~N.~Laneman, E.~Martinian, G.~W.~Wornell, and J.~G.~Apostolopoulos, ``Source-channel diversity for parallel channels,'' \emph{IEEE Transactions on Information Theory}, 51(10):3518–-3539, October 2005.

\bibitem{AlonRoche}
A.~Orlitsky and J.~R.~Roche, ``Coding for computing,'' \emph{IEEE Transactions on Information Theory}, 47(3):903--917, March 2001.

\bibitem{Puri}
R.~Puri, K.~Ramchandran, and S.~Pradhan, ``On seamless digital upgrade of analog transmission systems using coding with side information,'' \emph{Proceedings of 40th Allerton Conference on Communications, Control and Computing}, Allerton, IL, October 2002.

\bibitem{Reznic}
Z. Reznic, M. Feder, and R. Zamir, ``Distortion bounds for broadcasting with bandwidth expansion,'' \emph{IEEE Transactions on Information Theory}, 52(8):3778–-3788, August 2006.

\bibitem{ShamaiVerdu}
S.~Shamai, S.~Verd\'u, and R. Zamir, ``Systematic lossy source/channel coding,'' \emph{IEEE Transactions on Information Theory}, 44(2):564--579, March 1998.

\bibitem{SteinbergMerhav}
Y.~Steinberg and N.~Merhav, ``On successive refinement for the Wyner-Ziv problem'',  \emph{IEEE Transactions on Information Theory}, 50(8):1636--1654, August 2004.

\bibitem{SutivongCover}
A.~Sutivong, M.~Chiang, T.~M.~Cover, and Y.-H.~Kim, ``Channel capacity and state estimation for state-dependent Gaussian channels,'' \emph{IEEE Transactions on Information Theory}, 51(4):1486--1495, April 2005.

\bibitem{TianDiggavi}
C.~Tian and S.~Diggavi, ``Side-information scalable source coding,'' \emph{IEEE Transactions on Information Theory}, (54)12:5591--5608, December 2008.

\bibitem{TianDiggavi2}
C.~Tian and S.~Diggavi, ``On multistage successive refinement for {Wyner--Ziv} source coding with degraded side informations,'' \emph{IEEE Transactions on Information Theory}, 53(8):2946--2960, August 2007.

\bibitem{Ertem}
E.~Tuncel, ``Slepian-Wolf coding over broadcast channels,'' \emph{IEEE Transactions on Information Theory}, 52(4):1469--1482, April 2006.

\bibitem{WilsonSub}
M.~P.~Wilson, K.~Narayanan, and G.~Caire, ``Joint source channel coding with side information using hybrid digital analog codes,'' submitted to \emph{IEEE Transactions on Information Theory}, arXiv:0802.3851v1 [cs.IT].

\bibitem{Wyner}
A.~D.~Wyner, ``The rate-distortion function for source coding with side information at the decoder-II: General sources,'' \emph{Information and Control}, vol. 38, pp. 60--80, 1978.

\bibitem{WZiv}
A.~D.~Wyner and J.~Ziv, ``The rate-distortion function for source coding with side information at the decoder,'' \emph{IEEE Transactions on Information Theory}, 22(1):1--10, January 1976.

\end{thebibliography}
\end{document}